\newtheorem{theorem}{Theorem}
\newtheorem*{corollary*}{Corollary}
\newtheorem{lemma}[theorem]{Lemma}
\newtheorem{proposition}[theorem]{Proposition}
\newtheorem*{proposition*}{Proposition}
\newtheorem*{property*}{Property}
\theoremstyle{definition}
\newtheorem{definition}[theorem]{Definition}
\newtheorem*{definition*}{Definition}
\theoremstyle{remark}
\newtheorem*{remark*}{Remark}
\newtheorem*{example*}{Example}
\definecolor{darkblue}{rgb}{0,0,0.38}
\definecolor{darkred}{rgb}{0.6,0,0}
\definecolor{darkgreen}{rgb}{0.1,0.35,0}
\def\ALG{\mathsf{ALG}}
\newcommand{\opt}[1]{\mathrm{OPT}(#1)}
\newcommand{\cov}[1]{\mathrm{cov}(#1)}
\newcommand{\nca}[1]{\mathrm{nca}(#1)}
\newcommand{\supp}[1]{\mathrm{supp}(#1)}
\title{\bf Improved Approximation for Weighted Tree Augmentation with Bounded Costs}
\author{
    David Adjiashvili \\
    \small Institute for Operations Research, ETH Z\"urich \\
    \small R\"amistrasse 101, 8092 Z\"urich, Switzerland \\
    \small email: addavid@ethz.ch
}
\date{}
\begin{document}

\maketitle

\thispagestyle{empty}

\abstract{
The Weighted Tree Augmentation Problem (WTAP) is a fundamental well-studied problem in the field of network design. Given an undirected tree $G=(V,E)$, an additional set of edges $L \subseteq V\times V$ disjoint from $E$ called \textit{links}, and a cost vector $c\in \mathbb{R}_{\geq 0}^L$, WTAP asks to find a minimum-cost set $F\subseteq L$ with the property that $(V,E\cup F)$ is $2$-edge connected. The special case where $c_\ell = 1$ for all $\ell\in L$ is called the Tree Augmentation Problem (TAP). Both problems are known to be NP-hard.

For the class of bounded cost vectors, we present a first improved approximation algorithm for WTAP since more than three decades. Concretely, for any $M\in \mathbb{Z}_{\geq 1}$ and $\epsilon > 0$, we present an LP based $(\delta+\epsilon)$-approximation for WTAP restricted to cost vectors $c$ in $[1,M]^L$ for $\delta \approx 1.96417$. For the special case of TAP we improve this factor to $\frac{5}{3}+\epsilon$.



Our results rely on a new LP, that significantly differs from existing LPs achieving improved bounds for TAP. We round a fractional solution in two phases. The first phase uses the fractional solution to decompose the tree and its fractional solution into so-called $\beta$-simple pairs losing only an $\epsilon$-factor in the objective function. We then show how to use the additional constraints in our LP combined with the $\beta$-simple structure to round a fractional solution in each part of the decomposition. 
}

\newpage

\setcounter{page}{1}

\section{Introduction}

The Weighted Tree Augmentation Problem (WTAP) is a well-studied 
problem in the field of network design. Given an 
undirected tree $G=(V,E)$, an additional set 
of edges $L \subseteq V\times V$ disjoint from $E$ 
called \textit{links}, and a cost vector $c\in \mathbb{R}_{\geq 0}^L$,
WTAP asks to find a minimum-cost set $F\subseteq L$ with 
the property that $(V,E\cup F)$ is $2$-edge connected. Recall
that a graph is $2$-edge connected if there are at least
$2$ edge-disjoint paths between any two nodes. The special 
case where $c_\ell = 1$ for all $\ell\in L$ is called the
Tree Augmentation Problem (TAP). 

WTAP is widely recognized as one of the fundamental problems in
the field of network design (see e.g. the surveys of
Kuhler~\cite{KuhlerSurvey} and Kortsarz and Nutov~\cite{kortsarz2010approximating}).
The main open problem is whether there is an approximation algorithm 
for WTAP with factor better than $2$ (we review the literature in the next section).
Our main result is an improved approximation algorithm for the 
case of bounded costs.

\begin{theorem}\label{thm:wtap}
Let $\delta = \frac{8(23+3\sqrt{5})}{121} \approx 1.96418$. For any 
fixed $M\in \mathbb{Z}_{\geq 1}$ and $\epsilon \in \mathbb{R}_{>0}$, 
there exists an LP-based polynomial-time $(\delta+\epsilon)$-approximation algorithm for WTAP restricted
to instances with cost vectors satisfying $c\in [1,M]^L$, i.e.\
whenever $1\leq c_\ell\leq M$ holds for all $\ell\in L$.
\end{theorem}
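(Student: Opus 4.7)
The plan is to design a two-phase LP-rounding algorithm following the blueprint outlined in the abstract. First I would formulate a new LP relaxation that strengthens the standard cut-covering LP by additional valid inequalities; since the standard LP has integrality gap $2$, any improvement must come from constraints that restrict how fractional mass can distribute among links covering related tree edges. I would argue separability in polynomial time, likely by exploiting the tree structure to keep the separation over these extra constraints tractable.

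With a fractional optimum $x^*$ in hand, the first rounding phase is a \emph{decomposition} of the rooted tree into a collection of pairs $(T_i, L_i)$ (subtree together with the links relevant to it) that are ``$\beta$-simple'' — a structural property that should force the links in each piece to fit into a shallow or otherwise well-controlled pattern. The decomposition is driven by the fractional solution: I would iteratively locate a tree edge or node where $x^*$ accumulates substantial mass and carve off the corresponding piece, charging any overlap (links that end up duplicated into several pieces) against the LP cost. The bounded-cost assumption $c\in[1,M]^L$ is crucial here because it bounds the multiplicative cost of duplicating a link into many pieces; tuning the threshold parameters in terms of $\epsilon$ and $M$ should yield a total fractional cost across pieces of at most $(1+\epsilon)\, c^\T x^*$.

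In the second phase I would round each $\beta$-simple pair independently. The $\beta$-simple structure combined with the extra LP constraints should allow a convex-combination or iterative rounding argument producing an integral cover of cost at most $\delta$ times the fractional cost restricted to that piece. The precise value $\delta = \frac{8(23+3\sqrt{5})}{121}$ strongly suggests a balance of two subroutines with a parameter optimization (the $\sqrt{5}$ is the fingerprint of a golden-ratio threshold): one routine, typical of bounded-depth WTAP arguments, handles links whose $x^*$-mass on their covering path is large, while another, driven by the new constraints, handles the remaining ``light'' part, and the parameter is chosen so the two bounds match. For TAP the same framework, specialized to unit costs (where classical counting arguments replace the weighted analysis), should refine the ratio to $5/3+\epsilon$.

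The main obstacle will be the LP-rounding on a single $\beta$-simple pair, since this is precisely where the sub-$2$ factor is produced and therefore where both the added LP constraints and the structural simplicity have to be fully exploited simultaneously. A secondary but nontrivial difficulty is proving that the decomposition incurs only a $(1+\epsilon)$ factor: links cut by the decomposition can in principle be charged many times, and controlling this requires showing that expensive overlaps correspond to concentrated $x^*$-mass, which in turn justifies refining the decomposition further at that location without blowing up running time — here the lower bound $c_\ell\geq 1$ is the lever that prevents uncontrollable accumulation of cheap duplicated links.
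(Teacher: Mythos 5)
Your proposal is a high-level plan that correctly identifies the two-phase architecture (strengthened LP $\to$ decompose into $\beta$-simple pairs $\to$ round each piece, balancing two subroutines), the role of the bounded-cost assumption in controlling the duplication cost of the decomposition, and even the fingerprint of a parameter optimization in the value of $\delta$. But as a proof it has a genuine gap: every step where the sub-$2$ factor is actually generated is left as an unsolved ``main obstacle,'' and those are precisely the places where the paper introduces new ideas that your sketch does not anticipate.

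Concretely, you never specify the new LP constraints. The paper's bundle LP adds, for each union $B$ of $\gamma$ tree paths, the constraint $\sum_{\ell\in\cov{B}} c_\ell x_\ell \geq \opt{B}$, where $\opt{B}$ is the true optimum cost of covering $B$. These are not generic ``knapsack-cover'' or ``mass-concentration'' constraints; their whole point is that on any constant-size substructure the LP value already matches the integer optimum, so the only source of integrality gap left is ``global.'' Without naming these constraints, the claim that you can ``argue separability in polynomial time'' is unsupported (and in fact the paper does not separate them: it writes them all down explicitly, which is polynomial only because $\gamma$ is constant, and evaluates each right-hand side $\opt{B}$ by the few-leaves subroutine of Lemma~\ref{lem:few-leaves}).

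Your decomposition description is also subtly inverted relative to what actually works. The paper does not carve off pieces where $x^*$ accumulates mass. Instead it first contracts the edges that are over-covered (coverage $\geq 2/\epsilon$), paying only $\epsilon c^\intercal x$ via Proposition~\ref{prop:simple_rounding}, precisely so that afterwards \emph{no} edge is heavily covered. That thin-coverage property is then what bounds the per-split duplication cost by $2M/\epsilon$, and splitting is performed at edges that have \emph{large cost on both sides} (so that each split is amortized against fractional cost $\geq 4M/\epsilon^2$). Both the $c_\ell\geq 1$ and $c_\ell\leq M$ bounds enter, but in different places: the lower bound is used to turn ``many leaves'' into ``large fractional cost'' (Lemma~\ref{lem:simple-trees}), while the upper bound controls the cost of a single split and of covering the cut edges $L_1$.

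Finally, the core of the rounding on a single $\beta$-simple pair is missing entirely. The paper splits the links of a $\beta$-simple pair into \emph{cross-links} and \emph{in-links} relative to the $\beta$-center $r$, not into ``heavy'' and ``light'' paths. When cross-links dominate, the residual instance after contracting in-link-heavy edges is \emph{star-shaped}, which is equivalent to an edge-cover problem with fractional integrality gap $4/3$ (Lemma~\ref{lem:edge-cover-IG} and Proposition~\ref{prop:star-shaped}); this, not a ``bounded-depth WTAP argument,'' is where the improvement over $2$ first appears. When in-links dominate, each subtree hanging off $r$ is shown to be a $\gamma$-bundle of the original tree (with some extra paths accounting for early compound nodes), so the bundle constraint delivers $\opt{E[\bar R]}\leq c^\intercal \bar y + \sum_u s_u$, an essentially lossless rounding. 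The value $\delta=8(23+3\sqrt 5)/121$ arises from equating the two resulting bounds $2\lambda c^\intercal z^{in} + \frac{4\lambda}{3(\lambda-1)}c^\intercal z^{cr}$ and $c^\intercal z^{in}+2c^\intercal z^{cr}$ at $\lambda=3+\sqrt 5$ and threshold $\alpha^*$. Without the bundle constraints, the star-shaped/edge-cover observation, and the in-link/cross-link dichotomy, there is no route to the claimed factor.
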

  
For TAP we obtain an improved approximation guarantee as stated hereafter.

\begin{theorem}\label{thm:tap}
 For any fixed $\epsilon \in \mathbb{R}_{>0}$, 
there exists an LP-based polynomial $\left(\frac{5}{3}+\epsilon\right)$-approximation algorithm for TAP.
\end{theorem}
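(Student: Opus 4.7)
The plan is to reuse the two-phase template that yields Theorem~\ref{thm:wtap} and to sharpen only the per-piece rounding step by exploiting that every link has cost $1$. I would first solve the new LP relaxation introduced in this paper on the given TAP instance, obtaining a fractional solution $x^*$ with $\mathbf{1}^\top x^*\le \OPT$. The additional inequalities of this LP are combinatorial and cost-independent, so they remain valid and useful in the unit-cost regime.

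The first phase would be taken essentially verbatim from the proof of Theorem~\ref{thm:wtap}: apply the decomposition routine to the pair $(G,x^*)$ to produce a family of $\beta$-simple pairs $\{(T_i,x_i^*)\}_i$ whose fractional sizes satisfy $\sum_i \mathbf{1}^\top x_i^* \le (1+\epsilon)\,\mathbf{1}^\top x^*$, and whose integral covers, glued together, form a feasible TAP solution on $G$. This step loses only the $\epsilon$-factor in the objective and its output retains the structural properties that the rounding phase requires.

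The second phase is where the improvement from $\delta\approx 1.96$ down to $\tfrac{5}{3}$ must come from. I would prove that every $\beta$-simple pair admits an integral cover of size at most $\tfrac{5}{3}\,\mathbf{1}^\top x_i^*$. Under unit costs the rounding reduces to an edge-counting problem, which I expect to resolve by combining the additional LP constraints with the $\beta$-simple structure to rule out the configurations that realize the integrality gap in the pure cut relaxation, and by applying a combinatorial patching — most naturally a matching- or path-cover-style argument on overcovered subtrees — to bound the overcount by the factor $\tfrac{5}{3}$. Summing the rounded covers over $i$ and absorbing the decomposition loss into the additive $\epsilon$ delivers the claimed $(\tfrac{5}{3}+\epsilon)$-approximation.

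The main obstacle will be this $\tfrac{5}{3}$-integrality-gap bound on $\beta$-simple pairs. In the weighted setting the rounding must charge against potentially very non-uniform link contributions, which is what forces the bound to $\delta$; removing this phenomenon in TAP is a prerequisite, but it is not a substitute for the purely combinatorial work of handling the unit-cost gap examples on each piece. I expect the analysis to proceed by a case distinction driven by the $\beta$-simple structure, with the extra LP constraints serving as the main lever that excludes the remaining bad configurations and permits the tighter $\tfrac{5}{3}$ charging.
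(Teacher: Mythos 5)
Your high-level plan matches the paper's: keep Phase~1 (decomposition into $\beta$-simple pairs with $(1+\epsilon)$ loss) verbatim, keep Lemma~\ref{lem:bundle-rounding} unchanged (it is already cost-agnostic), and sharpen the per-piece rounding by exploiting unit costs, with the final bound coming from a case distinction based on the cross-link versus in-link split. That framing is correct. However, you stop precisely at the point where the work is: you assert that a ``matching- or path-cover-style argument on overcovered subtrees'' should yield a $\tfrac{5}{3}$ bound on each $\beta$-simple piece, but you do not identify the routine, and the actual one is neither a matching argument nor an edge-cover argument.

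The missing ingredient is the paper's Lemma~\ref{lem:cross-rounding-tap}: a bespoke greedy for the cross-heavy case that produces a cover of size at most $2z^{in}(L)+\tfrac{3}{2}z^{cr}(L)$. Compare this with the WTAP analogue (Lemma~\ref{lem:cross-rounding}), which pays $2\lambda$ on in-links and $\tfrac{4\lambda}{3(\lambda-1)}$ on cross-links; the TAP routine simultaneously drops the in-link coefficient to $2$ and the cross-link coefficient to $\tfrac32$, which is why pairing it with Lemma~\ref{lem:bundle-rounding}'s $z^{in}(L)+2z^{cr}(L)+\sum_u s_u$ bound and taking the minimum at the threshold $\tfrac{z^{cr}(L)}{z(L)}=\tfrac23$ gives exactly $\tfrac53$. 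The greedy itself is: repeatedly contract a maximal up-link when some leaf edge is only up-link-covered; otherwise contract an arbitrary leaf-to-leaf link; and when neither rule applies, close off by picking one cross-link per remaining leaf (which works because the cross-link sub-instance is star-shaped). The only lossy step is the leaf-to-leaf contraction, and the charging argument that bounds this loss by $\tfrac12 z^{cr}(L)$ --- showing that each such contraction ``kills'' at least $2(1-y'_\ell)$ units of cross-link mass that is never contracted again --- is the specific combinatorial content you would need to supply. Without this lemma and its amortization, the claim that every $\beta$-simple pair has a $\tfrac53$-cover relative to its fractional weight is unsupported; it is not a consequence of the bundle constraints alone, nor of a generic matching bound.
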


Both algorithms achieve approximations with respect to a new linear
programming (LP) relaxation, which we call \emph{the bundle LP}.
To the best of our knowledge, our result for WTAP is the first approximation 
algorithm that achieves a factor better than $2$ for 
all trees, and any non-trivial family of cost functions.
Furthermore, both bounds are the best known among those that are based on 
an LP. 

Our results are based on several new ideas combined with classical 
results for WTAP, which we briefly explain next.
First, we explain the well-known set covering reformulation of WTAP.
We denote by $V[H]$ and $E[H]$ the node set and the edge set of a graph $H$, respectively.
Associate with every link $\ell = uv \in L$ the unique path $P_{uv} \subseteq E[G]$ 
in $G$ connecting $u$ and $v$. Now, it is easy to verify that a set of links $S\subseteq L$
is a feasible solution for the WTAP instance at hand, if and only if the union
of the corresponding paths covers the edge set of $G$, namely if
$\cup_{\ell\in S} P_\ell = E[G]$. 

For a set $X\subseteq E[G]$ denote by $\cov{X} \subseteq L$ the set 
of links that cover at least one edge of $X$. When $X = \{e\}$ is a
singleton we write $\cov{e}$ instead of $\cov{\{e\}}$. 
The \emph{natural LP relaxation} for WTAP, also known
as \emph{the cut LP}, is based on the latter reformulation.
It contains one variable $x_\ell$ 
for each link $\ell \in L$ and asks to solve
\begin{eqnarray}
\text{\textrm{minimize}} \,\,\, \sum_{\ell\in L} c_\ell x_\ell & & \text{subject to}\\
\label{eq:constraint1}
\sum_{\ell \in \cov{e}} x_\ell &\geq& 1 \quad\quad \forall e\in E[G],\\
\label{eq:constraint2}
x_\ell &\geq& 0  \quad\quad  \forall \ell\in L.
\end{eqnarray}

The bundle LP, the optimal solution of which we round to obtain our results,
is defined as follows. 
For an integer $\gamma \in \mathbb{Z}_{\geq 1}$, a
\textit{$\gamma$-bundle} is a union of $\gamma$ (not necessarily 
distinct) paths in $G$. See Figure~\ref{fig:bundle} for an example. 
Denote by $\mathcal{B}_{\gamma}$ the set of all $\gamma$-bundles in $G$. 
In words, for a carefully chosen $\gamma$,
the bundle LP contains, on top of the constraints from the natural LP,
constraints that ensure that each $\gamma$-bundle
is covered in the fractional solution by links with sufficiently
high cost. Formally, the LP contains the constraints
\begin{equation}\label{eq:bundle-constraints}
 \sum_{\ell\in \cov{B}} c_\ell x_\ell \geq \opt{B} \quad \quad \forall B\in \mathcal{B}_\gamma,
\end{equation}
where for any $X\subseteq E[G]$, $\opt{X} \in \mathbb{R}_{\geq 0}$ is the minimum cost 
of a set of links in $L$ that covers all edges in $X$.
The latter constraints are clearly valid for any integral solution, as any such
solution contains a feasible solution covering the edges in $B$.
The $\gamma$-bundle LP is then given by
$$
\text{\textrm{minimize}} \,\,\, \sum_{\ell\in L} c_\ell x_\ell \quad \text{subject to} \quad (\ref{eq:constraint1}), (\ref{eq:constraint2}) \,\,  
\text{and} \,\, (\ref{eq:bundle-constraints}). \quad \quad \quad \quad (\mathrm{LP}_\gamma)
$$

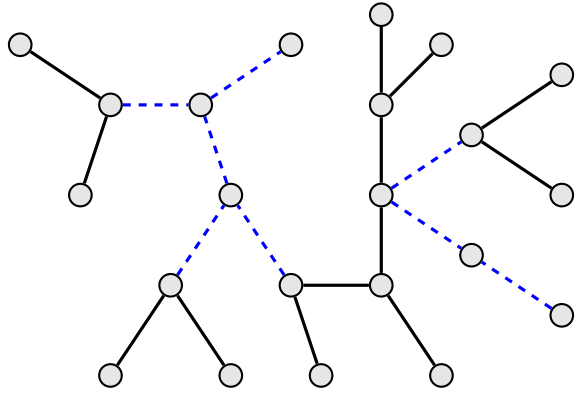
\begin{figure}[h]
\begin{center}
\begin{tikzpicture}[scale=0.4]

\node (a1) at (3,0) [circle,draw=black!100,fill=black!10,thick,inner sep=1pt,minimum size=3mm] {};
\node (a2) at (5,3) [circle,draw=black!100,fill=black!10,thick,inner sep=1pt,minimum size=3mm] {};
\node (a3) at (2,6) [circle,draw=black!100,fill=black!10,thick,inner sep=1pt,minimum size=3mm] {};
\node (a4) at (3,9) [circle,draw=black!100,fill=black!10,thick,inner sep=1pt,minimum size=3mm] {};
\node (a5) at (0,11) [circle,draw=black!100,fill=black!10,thick,inner sep=1pt,minimum size=3mm] {};
\node (a6) at (7,0) [circle,draw=black!100,fill=black!10,thick,inner sep=1pt,minimum size=3mm] {};
\node (a7) at (9,3) [circle,draw=black!100,fill=black!10,thick,inner sep=1pt,minimum size=3mm] {};
\node (a8) at (7,6) [circle,draw=black!100,fill=black!10,thick,inner sep=1pt,minimum size=3mm] {};
\node (a9) at (6,9) [circle,draw=black!100,fill=black!10,thick,inner sep=1pt,minimum size=3mm] {};
\node (a10) at (9,11) [circle,draw=black!100,fill=black!10,thick,inner sep=1pt,minimum size=3mm] {};
\node (a11) at (10,0) [circle,draw=black!100,fill=black!10,thick,inner sep=1pt,minimum size=3mm] {};
\node (a12) at (12,3) [circle,draw=black!100,fill=black!10,thick,inner sep=1pt,minimum size=3mm] {};
\node (a13) at (14,0) [circle,draw=black!100,fill=black!10,thick,inner sep=1pt,minimum size=3mm] {};
\node (a14) at (12,6) [circle,draw=black!100,fill=black!10,thick,inner sep=1pt,minimum size=3mm] {};
\node (a15) at (12,9) [circle,draw=black!100,fill=black!10,thick,inner sep=1pt,minimum size=3mm] {};
\node (a16) at (12,12) [circle,draw=black!100,fill=black!10,thick,inner sep=1pt,minimum size=3mm] {};
\node (a17) at (18,2) [circle,draw=black!100,fill=black!10,thick,inner sep=1pt,minimum size=3mm] {};
\node (a18) at (15,4) [circle,draw=black!100,fill=black!10,thick,inner sep=1pt,minimum size=3mm] {};
\node (a19) at (18,6) [circle,draw=black!100,fill=black!10,thick,inner sep=1pt,minimum size=3mm] {};
\node (a20) at (15,8) [circle,draw=black!100,fill=black!10,thick,inner sep=1pt,minimum size=3mm] {};
\node (a21) at (18,10) [circle,draw=black!100,fill=black!10,thick,inner sep=1pt,minimum size=3mm] {};
\node (a22) at (14,11) [circle,draw=black!100,fill=black!10,thick,inner sep=1pt,minimum size=3mm] {};

\draw [-,black,very thick] (a1) to node [black] {} (a2);
\draw [-,black,very thick] (a2) to node [black] {} (a6);
\draw [-,blue,very thick,dashed] (a2) to node [black] {} (a8);
\draw [-,black,very thick] (a3) to node [black] {} (a4);
\draw [-,black,very thick] (a4) to node [black] {} (a5);
\draw [-,blue,very thick,dashed] (a4) to node [black] {} (a9);
\draw [-,blue,very thick,dashed] (a7) to node [black] {} (a8);
\draw [-,blue,very thick,dashed] (a8) to node [black] {} (a9);
\draw [-,blue,very thick,dashed] (a9) to node [black] {} (a10);
\draw [-,black,very thick] (a7) to node [black] {} (a11);
\draw [-,black,very thick] (a7) to node [black] {} (a12);
\draw [-,black,very thick] (a12) to node [black] {} (a13);
\draw [-,black,very thick] (a12) to node [black] {} (a14);
\draw [-,black,very thick] (a14) to node [black] {} (a15);
\draw [-,black,very thick] (a15) to node [black] {} (a16);
\draw [-,black,very thick] (a15) to node [black] {} (a22);
\draw [-,blue,very thick,dashed] (a14) to node [black] {} (a18);
\draw [-,blue,very thick,dashed] (a14) to node [black] {} (a20);
\draw [-,blue,very thick,dashed] (a17) to node [black] {} (a18);
\draw [-,black,very thick] (a19) to node [black] {} (a20);
\draw [-,black,very thick] (a20) to node [black] {} (a21);

\end{tikzpicture}
\end{center}
\caption{The dashed green edges can be obtined as a union of three paths, hence they
comprise a $3$-bundle.}\label{fig:bundle}
\end{figure}

Solving the bundle LP entails calculating the values $\opt{B}$ for all $B\in \mathcal{B}_\gamma$.
We show how this can be done in polynomial time whenever $\gamma$ is constant, and
in time $n^{\gamma^{O(1)}}$ in general, in Appendix~\ref{apx:solve-lp}.

Intuitively, the bundle LP cuts off all fractional solutions that have large 
integrality gap due to "simple obstructions", including, but not
restricted to, subtrees with few leaves. A large integrality gap can 
hence result only due to more "global substructures" of the tree.
To exploit this feature, our strategy is to decompose the instance at
hand and its fractional solution into parts that are so small, that they can no longer 
contain such global structures, and hence they cannot suffer from large integrality gap.
We achieve this goal using a simple two-step decomposition, which first transforms the
solution pair into a ``thin solution,'' namely one that does not over-cover any edge, and
then greedily breaks the tree into simpler trees. We manage to lose only an 
$\epsilon$-fraction in terms of cost in this decomposition.

The obtained smaller trees, equipped with corresponding fractional solutions, are then
proved to be of one of two types: Either they are already ``sufficiently small'' to
apply the bundle constraints, or they are sufficiently close to instances that 
have another special structure, which we call \emph{star-shaped} instances. Intuitively,
star-shaped instances are WTAP instances that can be modeled as \emph{edge-cover problems},
implying that a fractional solution can be rounded with a loss of a factor significantly
better than $2$.

We remark that, although our algorithm for WTAP is only polynomial 
when $M = \max_{\ell\in L} c_\ell$ is constant, it can also be used to obtain the 
same approximation guarantee for non-constant $M$, with a running time of $n^{M^{O(1)}}$,
where $n = |V[G]|$.

\subsection{Related Work}\label{sec:relwork}

Frederickson and J\'{a}J\'{a}~\cite{frederickson1981approximation}
proved that WTAP is NP-hard, even for trees with constant 
diameter. The restricted special case of TAP, where the links
form a cycle on the leaves of the tree is also NP-hard, as was
shown by Cheriyan et al.~\cite{cheriyan19992}.

The best known approximation for WTAP is an elegant $2$-approximation due to
Frederickson and J\'{a}J\'{a}~\cite{frederickson1981approximation}, which
was later further simplified by Khuller and Thurimella~\cite{khuller1993approximation}.
Numerous algorithms have since been developed achieving the same factor.
These include, among others, the iterative rounding algorithm of Jain~\cite{Jain} and
the primal-dual algorithm of Goemans et al.~\cite{goemans1994improved}.
While these algorithms are designed for much more general network 
design problems, the factor $2$ is tight for them, even in 
the case of TAP. The factor $2$ has since only been improved for special
classes of trees, including a $(1+\ln 2)$-approximation for the case of 
bounded-diameter trees by Cohen and Nutov~\cite{cohen2013}.
Improving the factor $2$ for WTAP is a major open problem in network 
design~\cite{KuhlerSurvey,kortsarz2010approximating}.
To the best of our knowledge, our algorithm for WTAP is the first 
improvement in the approximation guarantee (over the factor $2$) 
for any special case of WTAP and all trees for over three decades.

In contrast, several approximation algorithms with factors better than
$2$ are known for TAP. The first such algorithm, achieving a factor
$1.875 + \epsilon$ was given by Nagamochi~\cite{nagamochi1dot875}.
This was later improved to $1.8$ by Even et al.~\cite{even20091}. The
current best algorithm is a $\frac{3}{2}$-approximation, due to 
Kortsarz and Nutov~\cite{kortsarz2016simplified}. An improved
$\frac{17}{12}$-approximation was developed by Maduel and 
Nutov~\cite{maduel2010covering} for the special
case of TAP, where each link connects two leaves of the tree.

The techniques used to achieve the latter improved algorithms for 
TAP are combinatorial in nature, and seem very hard to modify for 
an improved approximation of WTAP.
In an effort to improve the approximation factor for WTAP, several
algorithm with approximation factors better than $2$ have recently 
been developed for TAP, that are based on continuous relaxations of the problem. 
Along these lines, 
Kortsarz and Nutov~\cite{KortsarzNutov2016} recently showed an LP-based
$\frac{7}{4}$-approximation algorithm. Our 
$\left(\frac{5}{3}+\epsilon\right)$-approximation for TAP improves 
on that factor for LP-based algorithms.
Cheriyan and Gao~\cite{CheriyanG15} presented an approximation with respect
to a semidefinite program (SDP) obtained from Lasserre tightening of an 
LP relaxation with factor $\frac{3}{2}+\epsilon$. A strong
point of both papers compared to our result is that the mathematical program 
is used in the analysis, while the algorithms themselves are combinatorial. 

The bundle LP differs from all existing LPs that were recently proved to 
have integrality gap better than $2$.
Indeed, most existing such LPs include, on top of the constraints
of the natural LP, constraints that exploit structural properties of feasible, or optimal 
solutions restricted to very specific structures, such as \emph{twin links, stems} etc. 
(see~\cite{KortsarzNutov2016} for formal definitions of these and other related notions).
In contrast, as we discussed before, the bundle LP attempts to uniformly cut 
off fractional solutions that have low cost due to \emph{all} sufficiently 
simple obstructions which have large integrality gap. 

Another important open problem is the integrality gap of the
natural LP relaxation.
An upper bound
of $2$ on the integrality gap of the cut LP follows from some of the
aforementioned $2$-approximations, including that of Jain~\cite{Jain} and
Goemans et al.~\cite{goemans1994improved}. There is also a lower bound
of $\frac{3}{2}$ known on the integrality gap due to Cheriyan et al.~\cite{IntegralityGap}. 
The true integrality gap, however, still remains open, even for the special
case of TAP.

WTAP can also be interpreted as a problem of covering a laminar family with
point-to-point links (see e.g.~\cite{KortsarzNutov2016}). This observation implies
that the problem of augmenting a $k$-edge connected graph to a $(k+1)$-edge
connected graph can be reduced to WTAP, whenever $k$ is odd, due to the 
laminar structure of the family of minimum cuts, in this case.

\section{Preliminaries}\label{sec:prelims}

We delay certain proofs of technical results
to Appendix~\ref{apx:proofs}, and the treatment of the improved rounding
algorithm for TAP to Appendix~\ref{apx:tap}.

Our algorithm relies on a few basic results for WTAP and
some related problems that we recap here. For a finite set $N$ and 
a vector $x\in \mathbb{R}^N_{\geq 0}$, let $\supp{x} = \{i\in N \,\mid\, x_i > 0\}$
denote the support of $x$. For a subset $Y\subseteq N$
denote $x(Y) = \sum_{i\in Y} x_i$. For an integer $k\in \mathbb{Z}_{\geq 1}$
denote $[k] = \{1,\cdots, k\}$. To distinguish between edges of the
tree and links, we write $e =\{u,v\}$ and $\ell=uv$ for an edge $e$ connecting $u$
and $v$ and a link $\ell$ connecting $u$ and $v$, respectively.

Throughout the paper we use the notion of \emph{contraction} of edges
and links. By contracting an edge $e = \{u,v\} \in E[G]$ we mean replacing
the nodes $u$ and $v$ by a new \emph{compound node} $w$ and connecting every
edge $e'\in E[G]$ different from $e$ that had a connection to either 
$u$, or $v$ to the new node $w$. We also update the set of links in an
analogous way: A link that was connected to either $u$, or $v$ is connected
after contracting $e$ to $w$. If the link was connected to both $u$ and 
$v$ it becomes a self-loop of $w$. By contracting a set of edges $F\subseteq E$ 
we mean contracting the edges in $F$ one after the other in any order.
By contacting a link (or a set of links) we mean contracting the set of
edges of the tree covered by the link (or set of links). Figure~\ref{fig:contraction}
illustrates the contraction operation.

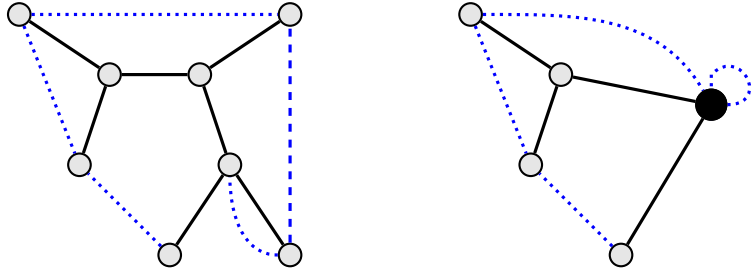
\begin{figure}[h]
\begin{center}
\begin{tikzpicture}[scale=0.4]

\begin{scope}
\node (a2) at (5,3) [circle,draw=black!100,fill=black!10,thick,inner sep=1pt,minimum size=3mm] {};
\node (a3) at (2,6) [circle,draw=black!100,fill=black!10,thick,inner sep=1pt,minimum size=3mm] {};
\node (a4) at (3,9) [circle,draw=black!100,fill=black!10,thick,inner sep=1pt,minimum size=3mm] {};
\node (a5) at (0,11) [circle,draw=black!100,fill=black!10,thick,inner sep=1pt,minimum size=3mm] {};
\node (a7) at (9,3) [circle,draw=black!100,fill=black!10,thick,inner sep=1pt,minimum size=3mm] {};
\node (a8) at (7,6) [circle,draw=black!100,fill=black!10,thick,inner sep=1pt,minimum size=3mm] {};
\node (a9) at (6,9) [circle,draw=black!100,fill=black!10,thick,inner sep=1pt,minimum size=3mm] {};
\node (a10) at (9,11) [circle,draw=black!100,fill=black!10,thick,inner sep=1pt,minimum size=3mm] {};

\draw [-,black,very thick] (a2) to node [black] {} (a8);
\draw [-,black,very thick] (a3) to node [black] {} (a4);
\draw [-,black,very thick] (a4) to node [black] {} (a5);
\draw [-,black,very thick] (a4) to node [black] {} (a9);
\draw [-,black,very thick] (a7) to node [black] {} (a8);
\draw [-,black,very thick] (a8) to node [black] {} (a9);
\draw [-,black,very thick] (a9) to node [black] {} (a10);


\draw [-,blue,very thick,dotted] (a2) to node [black] {} (a3);
\draw [-,blue,very thick,dotted] (a5) to node [black] {} (a10);
\draw [-,blue,very thick,dotted,out=180,in=-90] (a7) to node [black] {} (a8);
\draw [-,blue,very thick,dashed] (a7) to node [black] {} (a10);
\draw [-,blue,very thick,dotted] (a5) to node [black] {} (a3);
\end{scope}

\begin{scope}[xshift=15cm]
\node (a2) at (5,3) [circle,draw=black!100,fill=black!10,thick,inner sep=1pt,minimum size=3mm] {};
\node (a3) at (2,6) [circle,draw=black!100,fill=black!10,thick,inner sep=1pt,minimum size=3mm] {};
\node (a4) at (3,9) [circle,draw=black!100,fill=black!10,thick,inner sep=1pt,minimum size=3mm] {};
\node (a5) at (0,11) [circle,draw=black!100,fill=black!10,thick,inner sep=1pt,minimum size=3mm] {};
\node (b) at (8,8) [circle,draw=black!100,fill=black!100,thick,inner sep=1pt,minimum size=4mm] {};

\draw [-,black,very thick] (a2) to node [black] {} (b);
\draw [-,black,very thick] (a3) to node [black] {} (a4);
\draw [-,black,very thick] (a4) to node [black] {} (a5);
\draw [-,black,very thick] (a4) to node [black] {} (b);


\draw [-,blue,very thick,dotted,out=0,in=90] (b) .. controls +(2,0)  and +(1,1) .. +(0,1) .. controls +(0,-0.5) .. (b);
\draw [-,blue,very thick,dotted,out=0,in=120] (a5) to node [black] {} (b);
\draw [-,blue,very thick,dotted] (a2) to node [black] {} (a3);
\draw [-,blue,very thick,dotted] (a5) to node [black] {} (a3);

\node (bdummy) at (8,8) [circle,draw=black!100,fill=black!100,thick,inner sep=1pt,minimum size=4mm] {};

\end{scope}

\end{tikzpicture}
\end{center}
\caption{An illustration of a contraction operation. The links are shown with dotted and 
dashed lines. The dased link in the left instance is contracted to obtain the instance 
on the right. The full black node is the obtained compound node}\label{fig:contraction}
\end{figure}

We refer to the book of Schrijver~\cite{schrijver2002combinatorial} for further details and
results on some of the notions presented in this section. Furthermore, throughout the 
paper we do not optimize the running time of algorithms to facilitate a cleaner presentation.
Next, we introduce the well-known notions of shadows of links and shadow completeness.

\paragraph{Shadows and shadow completion.}
For a tree $G$ and a link $\ell = uv \in V[G] \times V[G]$ denote 
by $P^G_\ell \subseteq E[G]$ the $u$-$v$ path in $G$. When $G$ is clear 
from the context we drop the superscript and write $P_\ell$.
Let $\ell, \ell'\in V[G] \times V[G]$ be two links. We say that 
$\ell'$ is a \emph{shadow} of $\ell$ if $P_{\ell'}\subseteq P_\ell$.
An instance $(G,L)$ is \emph{shadow complete} if for every $\ell\in L$
all shadows of $\ell$ are also in $L$. We can always assume that the
instance is shadow complete: If it is not, then all shadows of links
in $L$ can be added to $L$. A cost of an added link $\ell$ is the minimum
cost of any original link in $L$, of which $\ell$ is a shadow. The
latter shadow completing operation does not change the optimal value
of the instance. Furthermore, shadows that are added in the completion
can always be replaced in any solution by original links that cover 
a superset of the edges covered by the shadow without increasing the 
cost of the solution. We will hence always assume that the instance
is shadow complete.

\paragraph{Up-links and a simple LP-based $2$-approximation.}
One important building block that is used throughout the algorithm is a 
simple $2$-approximation algorithm for WTAP that rounds a fractional 
solution $x\in \mathbb{R}_{\geq 0}^L$ for the natural LP relaxation. 
As we mentioned before, the latter is achievable with several existing
algorithms. We present a simple such algorithm here for completeness 
and since it outlines ideas that we use later on.

The algorithm starts by rooting the tree $G$ at an arbitrary node $r\in V[G]$.
For every link $\ell = uv \in L$ denote by $\nca{uv}\in V[G]$ the nearest 
common ancestor of $u$ and $v$ in $G$. We call a link $\ell = uv$ an 
\emph{up-link} if $\nca{uv} \in \{u,v\}$. Denote by $L^{up} \subseteq L$
the set of all up-links. The following simple rounding lemma is driving
the approximation algorithm.

\begin{lemma}\label{lem:up_rounding}
 Let $x\in \mathbb{R}_{\geq 0}^L$ be a feasible solution of the natural LP
on instance $(G,L,c)$ with the property $\supp{x}\subseteq L^{up}$. Then 
there is feasible set of links $S\subseteq L^{up}$ with cost 
$c(S) \leq c^\intercal x$, that can be computed in polynomial time.
\end{lemma}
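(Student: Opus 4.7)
Having rooted $G$ at an arbitrary vertex $r$, each up-link $\ell = uv \in L^{up}$ has a well-defined \emph{bottom} endpoint (the deeper of $u,v$) and \emph{top} endpoint (the strict ancestor), and covers exactly the tree edges on the upward path between them. For a non-root $v$, I write $e_v$ for the edge from $v$ to its parent and $T_v$ for the subtree rooted at $v$. The key structural observation is that $\cov{e_v} \cap L^{up}$ equals the set of up-links whose bottom lies in $T_v$ and whose top is a strict ancestor of $v$; in particular, on the root-side of $v$ these up-links are totally ordered by the depth of their top endpoints, a property that fails for general links.

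I would build $S$ together with a dual certificate by a bottom-up primal-dual procedure. Process non-root nodes in reverse order of depth, maintaining $S$ (initially $\emptyset$) and a dual vector $y \in \mathbb{R}_{\geq 0}^{E[G]}$ (initially $0$). When visiting $v$: if $e_v$ is already covered by $S$, skip; otherwise raise $y_{e_v}$ continuously from $0$ until some dual constraint $\sum_{e' \in P_\ell} y_{e'} \leq c_\ell$ becomes tight for an $\ell \in L^{up} \cap \cov{e_v}$, breaking ties in favor of the candidate with highest top, and add this $\ell$ to $S$. LP feasibility $x(\cov{e_v})\ge 1$ together with $\supp{x} \subseteq L^{up}$ guarantees that a tight up-link appears at finite value, so $S$ is feasible upon termination.

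The main obstacle is establishing the chain $c(S) \leq \sum_e y_e \leq c^\intercal x$. The right inequality is weak LP duality, since $y$ is dual-feasible by construction. For the left inequality, the tightness conditions give $c(S) = \sum_{\ell \in S} \sum_{e \in P_\ell} y_e = \sum_e y_e \cdot |\{\ell \in S : e \in P_\ell\}|$, so the task reduces to showing $|\{\ell \in S : e \in P_\ell\}| \leq 1$ for every tree edge $e$ with $y_e > 0$. This is the crux of the argument and the place where the up-link hypothesis is essential. I would argue it by a case analysis on the relative positions of the two candidate picks $\ell, \ell' \in S$: if both paths contain an initiating edge $e_v$, then the initiating vertex $w$ of $\ell'$ must be a strict ancestor of $v$, $\text{top}(\ell')$ must sit strictly above $\text{top}(\ell)$, and $\ell' \in \cov{e_v}$; combining the pick-time slack computation, the total top-ordering, and shadow-completeness (to substitute cheaper sub-links of $\ell'$ with the same top as $\ell$) then forces a contradiction with the highest-top tie-breaking rule at the moment $\ell$ was selected. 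For general links the top-ordering breaks down — a link may cover two disjoint subpaths branching from a common deep edge — and the disjoint-charging argument collapses, which is precisely why the lemma requires $\supp{x} \subseteq L^{up}$.
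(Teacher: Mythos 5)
Your plan takes a genuinely different route from the paper: the paper proves that the constraint matrix restricted to up-links is totally unimodular via the Ghouila-Houri row condition (two-coloring the tree edges by the parity of their depth so that along any up-link path the colors alternate), so the restricted LP already has an integral optimum that can be read off directly. Your bottom-up primal-dual scheme, if it worked, would give a more constructive/combinatorial proof of the same fact. However, the crux claim you reduce to, namely that every edge $e$ with $y_e>0$ is covered by at most one link of $S$, is false for the algorithm as stated, and the ``highest-top'' tie-break does not rescue it: a short dominated up-link can become tight \emph{strictly} before a longer one, so no tie is ever present at the moment the wrong commitment is made.

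Concretely, take the path $r\!-\!a\!-\!b\!-\!c$ rooted at $r$, with edges $e_a=\{r,a\}$, $e_b=\{a,b\}$, $e_c=\{b,c\}$ and up-links $\ell_1=cb$ of cost $1$, $\ell_2=br$ of cost $2$, $\ell_3=cr$ of cost $1.5$. The vector $x$ with $x_{\ell_3}=1$ is feasible, supported on up-links, and has $c^\intercal x=1.5$. Your procedure processes $c$ first and raises $y_{e_c}$; $\ell_1$ tightens at $y_{e_c}=1$ while $\ell_3$ would tighten only at $1.5$, so $\ell_1$ is added with no tie to break. Processing $b$, the constraint for $\ell_3$ already carries $y_{e_c}=1$, so it tightens at $y_{e_b}=0.5$ before $\ell_2$, and $\ell_3$ is added. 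Now $S=\{\ell_1,\ell_3\}$ has cost $2.5$, both links cover $e_c$ (where $y_{e_c}=1>0$), and $c(S)=2.5>\sum_e y_e=1.5=c^\intercal x$, so both the disjoint-charging claim and the final bound fail. A correct primal-dual argument for this interval-covering structure would require a second, reverse (root-to-leaf) selection or reverse-delete pass so that the tight links actually chosen cover the $y$-positive edges disjointly; the single forward pass you describe does not ensure this. Alternatively, simply invoking total unimodularity (interval matrices, or the paper's Ghouila-Houri argument) and solving the LP restricted to up-links yields the lemma without any of these difficulties.
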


The $2$-approximation algorithm can easily be obtained from Lemma~\ref{lem:up_rounding}
as follows. For every $\ell =uv \in L\setminus L^{up}$ with $\ell \in \supp{x}$
set $w = \nca{uv}$ and adapt the solution $x$ to obtain a solution $x'$
as follows. Set $x'_{uw} = x_{uw} + x_\ell, \,\, x'_{vw} = x_{vw} + x_\ell,\,\, x'_\ell = 0$ 
and $x'_{\ell'} = x_{\ell'}$ for all other links $\ell'$. This eliminates $\ell$ from the support 
of $x$ by incurring an additional cost of $c_\ell x_\ell$, without breaking the feasibility of the
fractional solution. By repeating this for all links in the support that are not up-links we obtain 
a feasible solution with cost at most $2c^\intercal x$, as desired (see Figure~\ref{fig:uprounding} 
for an illustration).
This transformation and Lemma~\ref{lem:up_rounding} imply the following proposition.

\begin{figure}[h]
\begin{center}
\begin{tikzpicture}[scale=0.4]

\begin{scope}

\node (a3) at (3,4) [circle,draw=black!100,fill=black!10,thick,inner sep=1pt,minimum size=3mm] {};
\node (a4) at (1,4) [circle,draw=black!100,fill=black!10,thick,inner sep=1pt,minimum size=3mm] {};
\node (a5) at (2,7) [circle,draw=black!100,fill=black!10,thick,inner sep=1pt,minimum size=3mm] {};
\node (a6) at (4,10) [circle,draw=black!100,fill=black!10,thick,inner sep=1pt,minimum size=3mm] {};
\node (a7) at (5,4) [circle,draw=black!100,fill=black!10,thick,inner sep=1pt,minimum size=3mm] {};
\node (a8) at (5,7) [circle,draw=black!100,fill=black!10,thick,inner sep=1pt,minimum size=3mm] {};
\node (a11) at (7,4) [circle,draw=black!100,fill=black!10,thick,inner sep=1pt,minimum size=3mm] {};
\node (a12) at (9,4) [circle,draw=black!100,fill=black!10,thick,inner sep=1pt,minimum size=3mm] {};
\node (a13) at (8,7) [circle,draw=black!100,fill=black!10,thick,inner sep=1pt,minimum size=3mm] {};
\node (a14) at (8,10) [circle,draw=black!100,fill=black!10,thick,inner sep=1pt,minimum size=3mm] {};
\node (a15) at (8,13) [circle,draw=black!100,fill=black!10,thick,inner sep=1pt,minimum size=3mm] {};
\node (a20) at (14,7) [circle,draw=black!100,fill=black!10,thick,inner sep=1pt,minimum size=3mm] {};
\node (a21) at (12,10) [circle,draw=black!100,fill=black!10,thick,inner sep=1pt,minimum size=3mm] {};

\draw [-,black,very thick] (a3) to node [black] {} (a5);
\draw [-,black,very thick] (a4) to node [black] {} (a5);
\draw [-,black,very thick] (a5) to node [black] {} (a6);
\draw [-,black,very thick] (a7) to node [black] {} (a8);
\draw [-,black,very thick] (a8) to node [black] {} (a6);
\draw [-,black,very thick] (a11) to node [black] {} (a13);
\draw [-,black,very thick] (a12) to node [black] {} (a13);
\draw [-,black,very thick] (a13) to node [black] {} (a14);
\draw [-,black,very thick] (a6) to node [black] {} (a15);
\draw [-,black,very thick] (a14) to node [black] {} (a15);
\draw [-,black,very thick] (a21) to node [black] {} (a15);
\draw [-,black,very thick] (a20) to node [black] {} (a21);

\draw [-,blue,very thick,dotted,out=0,in=-120] (a12) to node [black] {} (a20);
\draw [-,blue,very thick,dotted,out=-45,in=-135] (a4) to node [black] {} (a7);

\draw [-,blue,very thick,dashed,out=75,in=-75] (a12) to node [black] {} (a15);
\draw [-,blue,very thick,dashed,out=150,in=-60] (a20) to node [black] {} (a15);
\draw [-,blue,very thick,dashed,out=90,in=-150] (a4) to node [black] {} (a6);
\draw [-,blue,very thick,dashed,out=120,in=-90] (a7) to node [black] {} (a6);
\end{scope}

\end{tikzpicture}
\end{center}
\caption{An illustration of up-links, and the simple $2$-approximation
algorithm. Two regular links (dotted lines) and their corresponding pairs of up-links
(dashed lines).}
\label{fig:uprounding}
\end{figure}
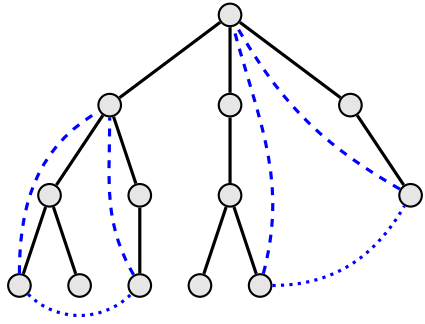

\begin{proposition}\label{prop:simple_rounding}
 There is a polynomial algorithm that given an instance $(G,L,c)$ of WTAP 
and a fractional solution $x\in \mathbb{R}^L_{\geq 0}$ to the natural LP 
returns a solution $S\subseteq L$ for the WTAP instance with cost
$c(S) \leq 2 c^\intercal x$.
\end{proposition}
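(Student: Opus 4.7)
The plan is to directly implement the transformation sketched in the paragraph preceding the proposition: first preprocess the given fractional solution $x$ into a feasible fractional solution $x'$ whose support consists only of up-links, at the price of at most doubling the objective, and then invoke Lemma~\ref{lem:up_rounding} to round $x'$ to an integral feasible set $S$ of no greater cost.

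Concretely, I would root $G$ at an arbitrary node $r$ and iterate over the (polynomially many) links $\ell = uv \in \supp{x} \setminus L^{up}$. For each such $\ell$ let $w = \nca{uv}$; since $w \notin \{u,v\}$ by assumption, the paths $P_{uw}$ and $P_{vw}$ are both strictly contained in $P_\ell$, so the up-links $uw$ and $vw$ are shadows of $\ell$ and therefore belong to $L$ (by the standing assumption of shadow completeness) with costs $c_{uw}, c_{vw} \le c_\ell$. I would then update $x'_{uw} \leftarrow x_{uw} + x_\ell$, $x'_{vw} \leftarrow x_{vw} + x_\ell$, and $x'_\ell \leftarrow 0$, leaving every other coordinate unchanged. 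After processing all such links, $\supp{x'} \subseteq L^{up}$.

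Next I would check feasibility and the cost bound. For feasibility, observe that $P_\ell = P_{uw} \cup P_{vw}$, so for every edge $e \in P_\ell$, the link $\ell$ in the original LP contribution is replaced by at least one of $uw$ or $vw$ (in fact by the unique one whose path contains $e$); no other constraint loses mass. Hence $\sum_{\ell' \in \cov{e}} x'_{\ell'} \ge \sum_{\ell' \in \cov{e}} x_{\ell'} \ge 1$ for every edge $e$, so $x'$ is feasible for the natural LP. For the cost, each replacement contributes at most $(c_{uw} + c_{vw})x_\ell \le 2 c_\ell x_\ell$ to $c^\intercal x'$ in place of $c_\ell x_\ell$ in $c^\intercal x$, so summing over all processed links yields $c^\intercal x' \le 2\, c^\intercal x$.

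Finally, I would apply Lemma~\ref{lem:up_rounding} to $x'$ to obtain in polynomial time a feasible set $S \subseteq L^{up} \subseteq L$ with $c(S) \le c^\intercal x' \le 2\, c^\intercal x$, which is the desired guarantee. The whole argument is purely bookkeeping; the only point that requires any care is verifying that $x'$ remains feasible after each replacement, and this follows immediately from $P_\ell = P_{uw} \cup P_{vw}$. There is no genuine obstacle here, as all the real work has been packaged into Lemma~\ref{lem:up_rounding} and into the shadow-completeness assumption, which guarantees that the replacement up-links are available in $L$ at the required cost.
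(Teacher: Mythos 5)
Your proof is correct and follows exactly the route the paper takes: replace each non-up-link $uv$ in the support by its two up-link shadows $uw$ and $vw$ with $w = \nca{uv}$, verify feasibility and the factor-$2$ cost increase, and then invoke Lemma~\ref{lem:up_rounding}. There is no substantive difference from the paper's argument.
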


\paragraph{Star-shaped WTAP and edge-covers.}

Given a graph $G$, an \emph{edge cover} is 
a subset $S\subseteq E[G]$ of the edges of the graph that is incident to every 
node of $G$, i.e.\ such that for every $u\in V[G]$, there exists an 
edge $e\in S$ such that $u\in e$.  A \emph{fractional edge cover}
is a vector $x\in \mathbb{R}_{\geq 0}^{E[G]}$ with the property that
$x(\{e\in E[G] \,\mid\, u\in e\}) \geq 1$ for every $u\in V[G]$. 

The minimum-cost edge cover problem is the problem of finding 
minimum-cost set of edges that is an edge cover of $G$. It is well-know 
that the problem is solvable in polynomial time. More importantly for our
purposes, the problem can be used to model a special case of WTAP, which 
we define next. 

\begin{definition}\label{def:star-shaped}
An instance $(G,L,c)$ of WTAP is called \emph{star-shaped} if there exists
a node $r\in V[G]$ such that $r$ is incident to $P_\ell$ for every link $\ell\in L$.
A node satisfying the latter condition is called a \emph{hub}.
\end{definition}

The following property of star-shaped instances is important to establish
the connection to edge covers. We call an edge in a tree a \emph{leaf edge}
if it is incident to a leaf in the tree. 

\begin{lemma}\label{lem:star-shaped}
 A solution $S\subseteq L$ for a star-shaped instance $(G,L,c)$ of WTAP
is feasible if and only if $S$ covers all leaf edges.
\end{lemma}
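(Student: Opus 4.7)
The forward direction is immediate, since a feasible solution covers every edge of $G$, hence in particular every leaf edge. The work is in the reverse direction: assuming $S$ covers every leaf edge, we must show $S$ covers every edge of $G$. My plan is to root $G$ at the hub $r$ and exploit the fact that every path $P_\ell$ passes through $r$.

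First I would establish the following geometric observation. Root $G$ at $r$, and for an edge $e=\{p,q\} \in E[G]$ with $p$ the parent of $q$, let $T_e$ denote the subtree of descendants of $q$ (including $q$). For a link $\ell = uv \in L$, I claim $\ell$ covers $e$ if and only if $\{u,v\} \cap V(T_e) \neq \emptyset$. This follows from the hub property: since $r \in V(P_\ell)$, we may write $P_\ell = P_{ur} \cup P_{rv}$; since $r \notin V(T_e)$, the path $P_\ell$ contains $e$ exactly when at least one of the subpaths $P_{ur}$, $P_{rv}$ must cross $e$ to reach an endpoint inside $T_e$.

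Now let $e$ be any edge of $G$ and let $q$ be its lower endpoint in the rooted tree. Choose a leaf $v'$ of $G$ inside $T_e$; such a leaf exists because $T_e$ is a finite subtree (if $q$ itself is a leaf of $G$, take $v' = q$; otherwise descend greedily within $T_e$ to any leaf of $G$, noting that $v' \neq r$ since $r \notin V(T_e)$). Let $e'$ denote the leaf edge of $G$ incident to $v'$. By the observation applied to $e'$, and since $T_{e'} = \{v'\}$, any link $\ell$ covering $e'$ must have $v'$ as one of its endpoints. But then $v' \in V(T_e)$, so by the observation applied to $e$, the same link $\ell$ also covers $e$. Since $S$ covers $e'$ by assumption, $S$ covers $e$ as well.

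The only genuinely delicate point is checking the small cases (when $q$ is itself a leaf of $G$, or when $r$ happens to be a leaf of $G$) so that the descent to a leaf $v' \in T_e$ with $v' \neq r$ always succeeds, but these are straightforward to verify directly. Everything else is a direct consequence of the hub property forcing each link's path to split at $r$.
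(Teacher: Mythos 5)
Your proof is correct and takes essentially the same approach as the paper's: root at the hub $r$, observe that for any edge $e$ there is a leaf $v'$ below it, note that a link covering the leaf edge at $v'$ must have $v'$ as an endpoint (since $v'$ is a leaf), and then use the hub property to conclude that such a link's path contains the entire $v'$--$r$ path and hence $e$. Your intermediate observation characterizing $\cov{e}$ via endpoints in the subtree $T_e$ is a slightly more structured packaging of the same reasoning.
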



Lemma~\ref{lem:star-shaped} directly implies that in star-shaped instances, 
all links that are not incident to any leaf can be removed from $L$, as
they are redundant in any feasible solution. We hence assume that every
link in $L$ touches at least one leaf.

\begin{figure}[h]
\begin{center}
\begin{tikzpicture}[scale=0.4]

\begin{scope}

\node (a1) at (3,0) [circle,draw=black!100,fill=black!10,thick,inner sep=1pt,minimum size=3mm] {\scriptsize $1$};
\node (a2) at (5,3) [circle,draw=black!100,fill=black!10,thick,inner sep=1pt,minimum size=3mm] {};
\node (a3) at (2,6) [circle,draw=black!100,fill=black!10,thick,inner sep=1pt,minimum size=3mm] {\scriptsize $7$};
\node (a4) at (3,9) [circle,draw=black!100,fill=black!10,thick,inner sep=1pt,minimum size=3mm] {};
\node (a5) at (0,11) [circle,draw=black!100,fill=black!10,thick,inner sep=1pt,minimum size=3mm] {\scriptsize $6$};
\node (a6) at (7,0) [circle,draw=black!100,fill=black!10,thick,inner sep=1pt,minimum size=3mm] {\scriptsize $2$};
\node (a7) at (9,3) [circle,draw=black!100,fill=black!10,thick,inner sep=1pt,minimum size=3mm] {};
\node (a8) at (7,6) [circle,draw=black!100,fill=black!100,thick,inner sep=1pt,minimum size=4mm] {};
\node (a9) at (6,9) [circle,draw=black!100,fill=black!100,thick,inner sep=1pt,minimum size=4mm] {};
\node (a10) at (9,11) [circle,draw=black!100,fill=black!10,thick,inner sep=1pt,minimum size=3mm] {};
\node (a11) at (10,0) [circle,draw=black!100,fill=black!10,thick,inner sep=1pt,minimum size=3mm] {\scriptsize $3$};
\node (a12) at (12,3) [circle,draw=black!100,fill=black!10,thick,inner sep=1pt,minimum size=3mm] {};
\node (a14) at (12,6) [circle,draw=black!100,fill=black!10,thick,inner sep=1pt,minimum size=3mm] {\scriptsize $4$};
\node (a15) at (12,9) [circle,draw=black!100,fill=black!10,thick,inner sep=1pt,minimum size=3mm] {\scriptsize $5$};

\draw [-,black,very thick] (a1) to node [black] {} (a2);
\draw [-,black,very thick] (a2) to node [black] {} (a6);
\draw [-,black,very thick] (a2) to node [black] {} (a8);
\draw [-,black,very thick] (a3) to node [black] {} (a4);
\draw [-,black,very thick] (a4) to node [black] {} (a5);
\draw [-,black,very thick] (a4) to node [black] {} (a9);
\draw [-,black,very thick] (a7) to node [black] {} (a8);
\draw [-,black,very thick] (a8) to node [black] {} (a9);
\draw [-,black,very thick] (a9) to node [black] {} (a10);
\draw [-,black,very thick] (a7) to node [black] {} (a11);
\draw [-,black,very thick] (a7) to node [black] {} (a12);
\draw [-,black,very thick] (a12) to node [black] {} (a14);
\draw [-,black,very thick] (a10) to node [black] {} (a15);

\draw [-,blue,very thick,dashed] (a14) to node [black] {} (a15);
\draw [-,blue,very thick,dashed,out=-70,in=80] (a10) to node [black] {} (a11);
\draw [-,blue,very thick,dashed] (a2) to node [black] {} (a3);
\draw [-,blue,very thick,dashed] (a5) to node [black] {} (a11);
\draw [-,blue,very thick,dashed,out=90,in=220] (a6) to node [black] {} (a15);
\draw [-,blue,very thick,dashed,out=135,in=-90] (a1) to node [black] {} (a5);
\draw [-,blue,very thick,dashed,out=150,in=-80] (a6) to node [black] {} (a3);
\end{scope}

\begin{scope}[xshift=20cm,scale=0.8]
\node (c1) at (2,2) [circle,draw=black!100,fill=black!10,thick,inner sep=1pt,minimum size=3mm] {\scriptsize $1$};
\node (c2) at (6,0) [circle,draw=black!100,fill=black!10,thick,inner sep=1pt,minimum size=3mm] {\scriptsize $2$};
\node (c3) at (10,2) [circle,draw=black!100,fill=black!10,thick,inner sep=1pt,minimum size=3mm] {\scriptsize $3$};
\node (c4) at (12,5) [circle,draw=black!100,fill=black!10,thick,inner sep=1pt,minimum size=3mm] {\scriptsize $4$};
\node (c5) at (12,9) [circle,draw=black!100,fill=black!10,thick,inner sep=1pt,minimum size=3mm] {\scriptsize $5$};
\node (c6) at (0,9) [circle,draw=black!100,fill=black!10,thick,inner sep=1pt,minimum size=3mm] {\scriptsize $6$};
\node (c7) at (0,5) [circle,draw=black!100,fill=black!10,thick,inner sep=1pt,minimum size=3mm] {\scriptsize $7$};

\node (d) at (6,12) [circle,draw=black!100,fill=black!100,thick,inner sep=1pt,minimum size=4mm] {};

\draw [-,blue,very thick] (c1) to node [black] {} (c6); 
\draw [-,blue,very thick] (c2) to node [black] {} (c7); 
\draw [-,blue,very thick] (c7) to node [black] {} (d); 
\draw [-,blue,very thick] (c3) to node [black] {} (c6); 
\draw [-,blue,very thick] (c3) to node [black] {} (d); 
\draw [-,blue,very thick] (c4) to node [black] {} (c5); 
\draw [-,blue,very thick] (c5) to node [black] {} (c2); 

\draw [-,blue,very thick] (d) .. controls +(1,1)  and +(1,0) .. +(0,2);
\draw [-,blue,very thick] (d) .. controls +(-1,1)  and +(-1,0) .. +(0,2);
\end{scope}

\end{tikzpicture}
\end{center}
\caption{An illustration of the transformation into an edge cover problem.
Left: A star shaped instance with hubs indicated as full nodes. Right: The 
corresponding edge cover instance.}\label{fig:starshaped}
\end{figure}
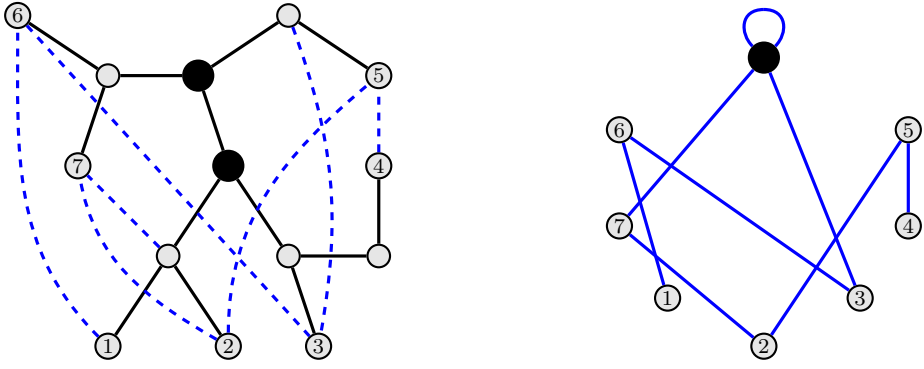

Lemma~\ref{lem:star-shaped} also implies that WTAP with 
star-shaped instances $(G,L,c)$ is equivalent to the edge cover problem: 
Construct a graph $H$ whose node set is $V[H] = U \cup \{r\}$, where $U$ is the set 
of leave in $G$, and $r$ is any hub. The set of edges $E[H]$
contains one edge $e^\ell$ for every link $\ell\in L$ with cost $c_\ell$. 
If $\ell$ connects two leaves $u,v\in U$, set $e^\ell = \{u,v\}$. 
Otherwise, if $\ell$ connects a leaf $u\in U$ to an internal node,
set $e^\ell = \{u,r\}$. 
Finally, we add a single self-loop $e_0$ connected to $r$ with cost zero. This
self-loop removes the need to incur cost for covering the node $r$.

It is immediate that edge covers in $H$ and feasible solutions to WTAP
instance $(G,L,c)$ are in one-to-one correspondence: $C\subseteq E[H]$
is an edge cover if and only if $\{\ell\in L\,\mid\, e^\ell\in C\}$ is
a feasible WTAP solution. Figure~\ref{fig:starshaped} illustrates the latter
transformation.

The most important implication from the latter correspondence 
is that one can round a fractional solution
to the natural LP relaxation of a star-shaped WTAP instance with a loss
of factor $\frac{4}{3}$. The following classical result on the integrality 
gap of the fractional edge cover polytope and the latter transformation 
imply this result, which we summarize in Proposition~\ref{prop:star-shaped}.

\begin{lemma}\label{lem:edge-cover-IG}
There is a polynomial algorithm that given a graph $G$, 
a cost vector $c\in \mathbb{R}^{E[G]}_{\geq 0}$ and
a fractional edge cover $x\in \mathbb{R}^{E[G]}_{\geq 0}$ computes
an edge cover $C\subseteq E[G]$ with cost $c(C) \leq \frac{4}{3}c^\intercal x$.
\end{lemma}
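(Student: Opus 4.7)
My plan is to prove Lemma~\ref{lem:edge-cover-IG} by combining LP-solving with a classical structural theorem on extreme points of the fractional edge cover polytope
$$P_{EC}(G) = \{y\in \R_{\geq 0}^{E[G]} : \textstyle \sum_{e\ni v} y_e \geq 1 \text{ for every } v\in V[G]\}.$$
Since the input $x$ certifies $P_{EC}(G)\neq \emptyset$, $G$ has no isolated vertices. The first step is to solve $\min\{c^\intercal y : y\in P_{EC}(G)\}$ in polynomial time (the separation problem is trivial, so the ellipsoid method applies; alternatively one can reduce to minimum-weight perfect matching via Gallai's identity) and obtain an optimal extreme-point solution $y^*$. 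Since $x\in P_{EC}(G)$ is feasible, $c^\intercal y^* \leq c^\intercal x$, so it suffices to round $y^*$ to an integer edge cover while losing a factor of at most $\tfrac{4}{3}$.

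The structural input I would then establish is: every extreme point $y^*$ of $P_{EC}(G)$ is half-integral with $y^*_e\in\{0,\tfrac{1}{2},1\}$, the $\tfrac{1}{2}$-edges form a vertex-disjoint union of odd cycles $C_1,\ldots,C_t$, and no vertex of any $C_i$ is incident to a $1$-edge. The short proof is a standard exchange argument: any violation---a fractional value outside $\{0,\tfrac{1}{2},1\}$, a vertex of degree $1$ in the $\tfrac{1}{2}$-support whose remaining slack is positive, a $\tfrac{1}{2}$-path with slack at either endpoint, an even $\tfrac{1}{2}$-cycle, or an odd $\tfrac{1}{2}$-cycle one of whose vertices also carries a $1$-edge---admits an alternating $\pm\epsilon$ perturbation $z$ along the offending substructure such that $y^*\pm z$ both stay in $P_{EC}(G)$, contradicting extremality.

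The main rounding step treats each odd cycle $C_i$ of length $2k_i+1$ independently. The LP pays $\tfrac{1}{2}s_i$ on $C_i$, where $s_i := \sum_{e\in E(C_i)} c_e$. Minimum-cardinality edge covers of $C_{2k_i+1}$ are precisely the complements of its maximum matchings, of which there are exactly $2k_i+1$ (one per choice of unmatched vertex), each yielding a cover of size $k_i+1$. A double counting over these $2k_i+1$ covers shows that each edge of $C_i$ appears in exactly $k_i+1$ of them, so the average---and hence minimum---cost of such an integer cover of $C_i$ is at most $\tfrac{k_i+1}{2k_i+1}\,s_i$. The rounding loss on $C_i$ is therefore $\tfrac{2(k_i+1)}{2k_i+1}$, maximized at $k_i=1$ (triangles) and equal to exactly $\tfrac{4}{3}$. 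Taking the union of these integer covers with the edges $\{e : y^*_e = 1\}$ produces an integer edge cover of $G$ of cost at most $\tfrac{4}{3}\,c^\intercal y^* \leq \tfrac{4}{3}\,c^\intercal x$. The main obstacle is establishing the structural theorem rigorously; the $\tfrac{4}{3}$ bound then reduces to the one-line averaging argument above, and polynomial runtime is clear.
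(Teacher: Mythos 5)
The paper does not actually prove Lemma~\ref{lem:edge-cover-IG}: it invokes it as a classical property of the fractional edge cover polytope and points to Schrijver's book for background. Your proposal supplies the standard proof of this classical fact and it is correct: solve the LP to obtain an optimal extreme point $y^*$ (via ellipsoid or the reduction to weighted matching), exploit half-integrality and the odd-cycle structure of the $\tfrac{1}{2}$-support, and round each odd $(2k+1)$-cycle by averaging over its $2k+1$ minimum edge covers, losing at most $\tfrac{2(k+1)}{2k+1}\leq\tfrac{4}{3}$, with the maximum attained on triangles. Two remarks on the structural step you flag as the main obstacle. First, your list of perturbable configurations omits a vertex of $\tfrac{1}{2}$-degree at least three and a half-support component containing more than one cycle; both create slack and admit alternating perturbations along a suitable path or cycle, so these are not new difficulties but should be included for a complete case analysis. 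Second, the third structural property you state (no $1$-edge incident to a vertex of a $\tfrac{1}{2}$-cycle) is true of extreme points, but it is not actually needed for the cost bound: writing $c^\intercal y^* = \sum_i \tfrac{1}{2}s_i + \sum_{e:\,y^*_e=1}c_e$ and $c(C)\leq \sum_i \tfrac{k_i+1}{2k_i+1}s_i + \sum_{e:\,y^*_e=1}c_e$ gives the $\tfrac{4}{3}$ factor term-by-term regardless of whether cycles and $1$-edges share vertices, so you can drop that condition and shorten the structural lemma accordingly.
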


\begin{proposition}\label{prop:star-shaped}
 There is a polynomial algorithm that given a star-shaped WTAP
instance $(G,L,c)$ and a fractional solution $x\in \mathbb{R}^L_{\geq 0}$ of
the natural LP for this instance computes a feasible solution $S\subseteq L$
for the WTAP instance with cost $c(S) \leq \frac{4}{3} c^\intercal x$.
\end{proposition}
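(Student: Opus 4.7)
My plan is to apply the reduction sketched before the statement: transform the star-shaped WTAP instance $(G,L,c)$ into an edge cover instance $H$, lift the fractional solution $x$ to a fractional edge cover $y$ on $H$ of the same cost, invoke Lemma~\ref{lem:edge-cover-IG} to round $y$ to an integral edge cover $C$ losing only a factor $\tfrac{4}{3}$, and finally pull $C$ back to a feasible WTAP solution using Lemma~\ref{lem:star-shaped}.

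Concretely, I would construct $H$ exactly as described in the text: its nodes are the leaves $U$ of $G$ together with the hub $r$, its edges are $e^\ell$ of cost $c_\ell$ for each $\ell\in L$ plus a zero-cost self-loop $e_0$ at $r$. I then define $y\in \mathbb{R}^{E[H]}_{\geq 0}$ by $y_{e^\ell} := x_\ell$ for all $\ell\in L$ (summing contributions if several links map to the same edge of $H$, which only helps) and $y_{e_0} := 1$. To verify that $y$ is a fractional edge cover it suffices to check the cover constraint at every leaf $u\in U$, since the self-loop $e_0$ takes care of $r$. For such a leaf, let $e_u\in E[G]$ be the unique leaf edge incident to $u$; since $e_u$ separates $u$ from the rest of the tree, a link $\ell$ covers $e_u$ if and only if exactly one endpoint of $\ell$ equals $u$, which by the construction of $H$ is equivalent to $u\in e^\ell$. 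Hence
\begin{equation*}
 \sum_{e\in E[H]:\, u\in e} y_e \;=\; \sum_{\ell\in L:\, u\in e^\ell} x_\ell \;=\; \sum_{\ell\in \cov{e_u}} x_\ell \;\geq\; 1,
\end{equation*}
where the last inequality is the natural LP constraint for the edge $e_u$.

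Having established that $y$ is a fractional edge cover, I would apply Lemma~\ref{lem:edge-cover-IG} to obtain an integral edge cover $C\subseteq E[H]$ in polynomial time with $c(C)\leq \tfrac{4}{3}\, c^\intercal y = \tfrac{4}{3}\, c^\intercal x$, where the equality uses that $e_0$ has cost zero. Setting $S := \{\ell\in L \,\mid\, e^\ell \in C\}$, every leaf $u\in U$ is incident to some edge $e^\ell\in C$, so $u$ is an endpoint of some link in $S$; equivalently, every leaf edge of $G$ is covered by $S$. By Lemma~\ref{lem:star-shaped}, $S$ is a feasible WTAP solution with $c(S)\leq c(C)\leq \tfrac{4}{3}\, c^\intercal x$, as required. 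The only conceptual step is the translation between the leaf-covering requirement of WTAP on $G$ and the node-cover requirement of the edge cover problem on $H$, which is captured by the identity $\cov{e_u}=\{\ell\in L:u\in e^\ell\}$ above; once this identification is in hand the rest of the proof is bookkeeping.
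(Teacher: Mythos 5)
Your proof is correct and is exactly the argument the paper intends: reduce to the edge-cover instance $H$ as described in the text, lift $x$ to a fractional edge cover $y$ of the same cost, apply Lemma~\ref{lem:edge-cover-IG}, and translate back via Lemma~\ref{lem:star-shaped}. The one detail worth noting is that you implicitly rely on the paper's standing assumption (stated just after Lemma~\ref{lem:star-shaped}) that every link in $L$ touches a leaf, so that $e^\ell$ is well-defined for all $\ell$ in the support of $x$; under that assumption your verification of the cover constraint at each leaf via $\cov{e_u}=\{\ell : u\in e^\ell\}$ is precisely the missing bookkeeping.
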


\section{A Rounding Algorithm for WTAP (Proof of Theorem~\ref{thm:wtap})}\label{sec:wtap}

We present next an algorithm that rounds an optimal solution
$x\in \mathbb{R}_{\geq 0}^L$ to $\mathrm{LP}_{\gamma}$, for a 
constant $\gamma$ that depends on the accuracy $\epsilon$ and the 
maximum weight $M$. We determine $\gamma$ explicitly later. To this 
end we outline the general strategy.

We round the solution in two phases. In the first phase 
we break the tree $G$ into a union of simpler subtrees, and equip
each subtree with a feasible fractional WTAP solutions, 
derived from $x$ by using a simple splitting operation. 
Each fractional solution will only use links connecting nodes 
in its corresponding subtree, allowing us to treat each subtree
and its corresponding solution separately. To bound the cost
of the decomposition it is first necessary to guarantee 
that no edge in the tree is over-covered, i.e.\ covered in $x$
by a fraction larger than some constant $a = a(\epsilon)$. This is achieved
by rounding an appropriately scaled version of the fractional solution
$x$, using Proposition~\ref{prop:simple_rounding}, and contracting 
the obtained set of links.

In the second phase the fractional solutions in each subtree are rounded
to integral solutions using two different procedures. Here, an important
structural property of each subtree-solution pair in the decomposition is used, 
which is called \emph{simplicity}, and is parametrized by an integer $\beta$.
The links of each $\beta$-simple pair are partitioned into two types.
Then, depending on which type of links dominates the cost in the 
fractional solution, one of the two rounding procedures is used
to obtain an integral solution, as each rounding procedure achieves 
a good approximation with respect to one type of links. The bundle
constraints are exploited in one of the rounding procedures, while
the other rounding procedure uses the fact that instances corresponding to
$\beta$-simple pairs are close to being star-shaped. Finally, the union of all solutions from 
all subtrees in the decomposition is returned as the solution.

\subsection{Phase One: Decomposition}

The decomposition of $G$ and $x$ is obtained in two steps, which we describe
hereafter.

\subsubsection{Contraction of Heavily Covered Edges}

In the first step we select a low-cost set of links to 
cover all edges of $G$ that are covered by a total weight of
at least some constant $a = a(\epsilon)$. Concretely, define
the set of edges that are \emph{heavily covered} as
$$
E^{h} = \left\{ e\in E \,\mid\, x(\cov{e}) \geq \frac{2}{\epsilon} \right\},
$$
namely $E^h$ are the edges that are covered by $x$ with links with
a mass of at least $\frac{2}{\epsilon}$. To obtain the desired set of links 
we contract the edges $E\setminus E^{h}$ to obtain the subtree $G^{h}$ 
of $G$ whose edge set is $E^{h}$. Now, the solution $x$ covers \emph{every} edge
in $G^h$ by a fraction of at least $\frac{2}{\epsilon}$. 
It follows that $y = \frac{\epsilon}{2} \cdot x$ is a feasible solution to 
the natural LP relaxation of the WTAP instance on $G^{h}$, and hence
Proposition~\ref{prop:simple_rounding} can be applied. The result 
is a set of links $L_0 \subseteq L$ that cover all edges of $G^h$ and
has cost
$
c(L_0) \leq 2 c^\intercal y = \epsilon c^\intercal x,
$
i.e.\ its cost is only an $\epsilon$-fraction of the cost of $x$. The
links $L_0$ are included in the solution, so it henceforth 
remains to cover all edges in $E[G] \setminus E^h$. 

Let $\bar G$ be the tree obtained by contracting the edges in $E^h$. Note that $E[\bar G] = E[G] \setminus E^h$, so the edges of $\bar G$ are exactly the ones that we still need to cover.
The key property of the solution $x$, interpreted in the new tree
$\bar G$, is the following \emph{thin coverage property}, which 
states that for every $e\in E[\bar G]$ it holds that
$$
x(\cov{e}) \leq \frac{2}{\epsilon} = O\left(\frac{1}{\epsilon}\right),
$$
a property that we crucially exploit to bound the cost of the 
decomposition step. 

We note that $x$ might not be a solution to $\mathrm{LP}_{\gamma}$ 
on $\bar G$, for the same $\gamma$ that we used to 
obtain $x$ as a solution for $G$. This is due to the fact
that $\gamma$-bundles in $\bar G$ might contain compound nodes,
obtained by contracting some links in $L_0$, and hence they might
not represent $\gamma$-bundles of $G$. However, as we will later show, 
$x$ maintains enough of the structure given by the 
$\gamma$-bundle constraints to bound the integrality gap. To make 
these arguments precise later on we keep track
of which nodes of the new tree $\bar G$ are compound nodes, namely nodes
that represent more than one original node of the tree $G$, and were 
created by contracting some links in $L_0$. These nodes are denoted
by $V^{cp}$ and called \emph{early compound nodes} 
to distinguish them from compound nodes obtained due to later contractions
that we perform in the algorithm. Furthermore, for $u\in V^{cp}$, we denote by 
$s_u \in \mathbb{Z}_{\geq 0}$ the total cost of links of $L_0$ that were
contracted to obtain the early compound node $u$. For non-compound nodes
$u\in V[\bar G]\setminus V^{cp}$ we set $s_u = 0$. Since links
cover paths in $G$, one link of $L_0$ cannot contribute to the formation
of more than one early compound node, so we have 
$$
\sum_{u\in V[\bar G]} s_u = \sum_{u\in V^{cp}} s_u = c(L_0).
$$


\subsubsection{Decomposition}

In the next step the algorithm decomposes
the instance into simpler instances, by breaking the tree at 
certain edges using a simple splitting operation. Each 
part of the obtained decomposition is a pair $(T,z)$, where 
$T$ is a subtree of $\bar G$, and $z$ is a fractional 
solution for the WTAP instance restricted to $T$.
The decomposition is obtained by an iterative greedy procedure
that employs the following operation, that we call \emph{splitting}. 

\begin{definition}
Let $(G,L,c)$ be a WTAP instance and
let $z\in \mathbb{R}^L_{\geq 0}$. Let $e = \{u,v\} \in E$ be any edge. 
Let $G^u$ and $G^v$ be the trees obtained by removing $e$ from $G$, where
$G^u$ is the tree that contains $u$. The \emph{splitting of $z$
at $e$} produces two vectors $z^u\in \mathbb{R}^L_{\geq 0}$ and $z^v\in \mathbb{R}^L_{\geq 0}$ defined
as follows. We define $z^u$; $z^v$ is defined symmetrically. For
$\ell= pq \in L$ set
$$
z^u_\ell = \begin{cases} 
				     z_\ell & \mbox{if } \,\,\,  p,q\in V[G^u]\setminus\{u\} \\ 
				     0 & \mbox{if } \,\,\, \{p,q\}\cap  V[G^v] \neq  \emptyset \\
				     z_\ell + \sum_{\ell' \in \cov{e},\,\, q\in \ell'} z_{\ell'} & \mbox{if } \,\,\, p=u, \, q\in V[G^u].
				\end{cases}
$$
Note that $\supp{z^u} \subseteq V[G^u] \times V[G^u]$ 
and $\supp{z^v} \subseteq V[G^v] \times V[G^v]$.
\end{definition}

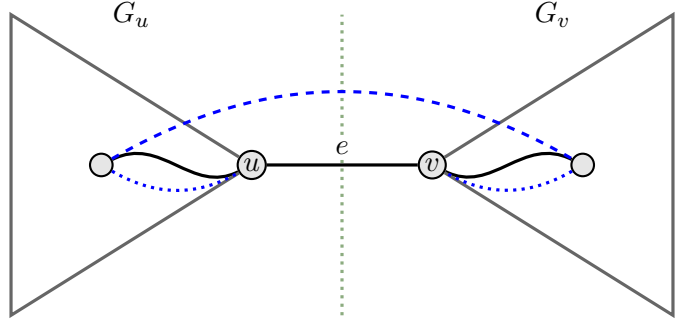
\begin{figure}[h]
\begin{center}
\begin{tikzpicture}[scale=0.4]

\begin{scope}

\node (u) at (8,5) [circle,draw=black!100,fill=black!10,thick,inner sep=1pt,minimum size=3mm] {$u$};
\node (v) at (14,5) [circle,draw=black!100,fill=black!10,thick,inner sep=1pt,minimum size=3mm] {$v$};

\node (a) at (3,5) [circle,draw=black!100,fill=black!10,thick,inner sep=1pt,minimum size=3mm] {};
\node (b) at (19,5) [circle,draw=black!100,fill=black!10,thick,inner sep=1pt,minimum size=3mm] {};

\draw [-,black,very thick] (a) .. controls (5,6)  and (6,4) .. (u);
\draw [-,black,very thick] (b) .. controls (17,6)  and (16,4) .. (v);

\node (gu) at (4,10) [circle,draw=black!0,fill=black!0,thick,inner sep=1pt,minimum size=3mm] {$G_u$};
\node (gv) at (18,10) [circle,draw=black!0,fill=black!0,thick,inner sep=1pt,minimum size=3mm] {$G_v$};

\draw [-,dotted,draw=darkgreen!50,very thick] (11,0) -- (11,5.3);
\draw [-,dotted,draw=darkgreen!50,very thick] (11,6) -- (11,10);

\draw [-,draw=black!60,very thick] (u) -- (0,10) -- (0,0) -- (u);
\draw [-,draw=black!60,very thick] (v) -- (22,10) -- (22,0) -- (v);

\draw [-,blue,very thick,dashed,out=30,in=150] (a) to node [black] {} (b);
\draw [-,blue,very thick,dotted,out=-30,in=-150] (a) to node [black] {} (u);
\draw [-,blue,very thick,dotted,out=-150,in=-30] (b) to node [black] {} (v);

\draw [-,black,very thick] (u) to node [black,above] {$e$} (v);
\end{scope}

\end{tikzpicture}
\end{center}
\caption{The splitting of $z$ at $e$. The fractional value of the link crossing
the cut (dashed) is added to $z^u$-value and $z^v$-value of the left and 
right shadows (dotted) of the link, respectively. 
}
\label{fig:splitting}
\end{figure}

Figure~\ref{fig:splitting} illustrates the splitting operation.
It is easy to see that if $y$ is a feasible fractional solution to the
natural LP for the WTAP instance $(G, L, c)$ and $e = \{u,v\} \in E[G]$,
then the splitting of $y$ at $e$ produces two feasible fractional 
solutions for the natural LP: $y^u$ is a feasible solution for $G^u$ and $y^v$ is 
a feasible solution for $G^v$. Furthermore, the total weight of 
$y^u$ and $y^v$ is easy to express in terms of the weight of $y$ and
total weight in $y$ of links covering $e$:
$$
c^\intercal y^u + c^\intercal y^v = c^\intercal y + \sum_{\ell \in \cov{e}} c_\ell y_\ell.
$$
The latter equality follows from shadow completeness and the fact that for 
each link $\ell \in L$, $y_\ell$ is either counted in $y^u$, or $y^v$, and it is counted
in both if and only if $\ell \in \cov{e}$. Assuming that $y$ also satisfies the
thin coverage property (recall that the optimal fractional solution $x$ satisfies this property), 
then the additive term in the latter expression can also be easily bounded in terms of 
the maximum cost of any link and the parameter of the thin coverage property:
$$
\sum_{\ell \in \cov{e}} c_\ell y_\ell \leq M \cdot y(\cov{e}) \leq \frac{2M}{\epsilon}.
$$
Furthermore, clearly $y^u$ and $y^v$ also satisfy the thin coverage property
in the corresponding trees $G^u$ and $G^v$, so the splitting operation does not violate
this property.

Next, we use the splitting operation to decompose the tree $\bar G$ into simpler
trees. We employ a greedy procedure that maintains 
a set of pairs $\mathcal{T}$, each comprising a subtree of $\bar G$ and a 
fractional solution of the natural LP for this subtree. We initialize by 
setting $\mathcal{T} = \{(\bar G, x)\}$. At each iteration, the algorithm 
chooses an arbitrary pair $(T,z)\in \mathcal{T}$ and checks if it contains
a \emph{thin edge}, a notion that we define next.

\begin{definition}\label{def:thin-edge}
Let $T$ be a subtree of $\bar G$, let $z\in \mathbb{R}_{\geq 0}^L$ and $\alpha \in \mathbb{R}_{\geq 0}$.
An edge $e = \{u,v\}\in E[T]$ is called \emph{$\alpha$-thin with respect to $z$}
if the total cost of links connecting nodes in $V[G^u]$, and the total cost
of links that connect two nodes in $V[G^v]$ is at least $\alpha$, namely if
$$
\sum_{\ell\in L, \ell \in V[G^q]\times V[G^q]} c_\ell z_\ell \geq \alpha \quad\quad \text{for} \quad q=u,v.
$$
\end{definition}

Formally, the algorithm selects any $\alpha(M,\epsilon)$-thin edge with respect to $z$ for
$$
\alpha(M,\epsilon) = \frac{4M}{\epsilon^2},
$$
removes $(T,z)$ from $\mathcal{T}$, adds to $\mathcal{T}$ the pairs
$(T^u, z^u)$ and $(T^v, z^v)$, obtained from $(T,z)$ by splitting of $z$ at $e$
and proceeds to the next iteration. 
If no $\alpha(M,\epsilon)$-thin edge is found, the algorithm
reports $(T,z)$ as part of the final decomposition of $(\bar G, x)$, removes 
it from $\mathcal{T}$ and proceeds to the next iteration. 
After at most $|V[\bar G]| - 1$ iterations  $\mathcal{T}$ is empty, 
at which stage the algorithm terminates and returns the full decomposition 
$(T^1, z^1), \cdots, (T^k, z^k)$ of $(\bar G, x)$. 

The decomposition produced by the latter algorithm has several useful 
properties, which we state and prove next. First, each pair 
$(T^j, z^j)$ in the decomposition can be seen as a WTAP instance 
$(T^j, L^j, c)$, where $L^j = L \cap \left(V[T^j]\times V[T^j]\right)$,
and for which $z^j$ is a feasible fractional solution of the 
natural LP (without the bundle constraints).
Formally, $z^j\in \mathbb{R}_{\geq 0}^L$, but from the way that the splitting
operation works, we have $\supp{z^j} \subseteq L^j$, so we can indeed interpret
is as a solution of the instance $(T^j, L^j, c)$.

Furthermore, the decomposition satisfies the following property, which will be important 
later, when we exploit the bundle constraints in the bundle LP. Informally
speaking, the way weight is shifted around does not 
decrease the total fractional cost used to cover any subset of edges, provided
that this subset is contained in some part of the decomposition.

\begin{lemma}\label{lem:weight-presenving}
 Let $j\in [k]$ and let $F\subseteq E[T^j]$ be any set of edges. Then 
$$
\sum_{\ell \in \cov{F}} c_\ell z^j_\ell \geq \sum_{\ell \in \cov{F}} c_\ell x_\ell.
$$
\end{lemma}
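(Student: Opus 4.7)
The plan is to prove, by induction on the number $d$ of splitting operations performed to produce a given pair $(T, z)$ from $(\bar G, x)$, the slightly stronger statement that for every \emph{intermediate} pair $(T, z)$ arising in the decomposition and every $F \subseteq E[T]$,
\[
\sum_{\ell \in \cov{F}} c_\ell z_\ell \;\geq\; \sum_{\ell \in \cov{F}} c_\ell x_\ell.
\]
The lemma then follows by applying this to the final pair $(T^j, z^j)$. The base case $d = 0$ is immediate since $(T, z) = (\bar G, x)$ and both sides coincide.

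For the inductive step, I would assume that $(T, z)$ arises from an intermediate pair $(T', z')$ by a single splitting at an edge $e = \{u, v\} \in E[T']$, and without loss of generality take $T = (T')^u$ and $z = (z')^u$ (the $v$-side is symmetric). Since $F \subseteq E[T] \subseteq E[T']$, the inductive hypothesis applied to $(T', z')$ yields $\sum_{\ell \in \cov{F}} c_\ell z'_\ell \geq \sum_{\ell \in \cov{F}} c_\ell x_\ell$, so it suffices to establish the one-step estimate $\sum_{\ell \in \cov{F}} c_\ell z_\ell \geq \sum_{\ell \in \cov{F}} c_\ell z'_\ell$ and chain the two inequalities.

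To prove the one-step estimate, I would expand the difference $\Delta := \sum_{\ell \in \cov{F}} c_\ell (z_\ell - z'_\ell)$ and partition $\cov{F}$ according to the positions of link endpoints relative to $e$, using the three cases in the definition of splitting. Links with both endpoints in $V[T] \setminus \{u\}$ satisfy $z_\ell = z'_\ell$ and contribute $0$; links with both endpoints in $V[(T')^v]$ cannot intersect $F \subseteq E[T]$ and so lie outside $\cov{F}$; up-links $\ell = uq$ with $q \in V[T]$ absorb positive mass $\sum_{\ell' \in \cov{e},\, q \in \ell'} z'_{\ell'}$; and crossing links $\ell = rs$ with $r \in V[T]$, $s \in V[(T')^v]$ lose their mass entirely, contributing $-c_\ell z'_\ell$. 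The key step is a pairing: reindexing the positive terms by the crossing link $\ell' = rs$ that donated the mass, each such $\ell' \in \cov{e} \cap \cov{F}$ is paired with the up-link $ur$. Since $ur$ is a shadow of $\ell'$ whose path in $T$ coincides exactly with the portion of $P_{\ell'}$ lying in $T$, we have $\ell' \in \cov{F}$ iff $ur \in \cov{F}$, and the joint contribution of the pair collapses to $(c_{ur} - c_{\ell'})\, z'_{\ell'}$.

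The main obstacle, and the only nonroutine step, is to argue that the sum of these paired contributions is nonnegative. This is where the paper's shadow-completeness convention on the costs assigned to shadow links such as $ur$ must be brought in carefully, to ensure that shifting the mass of a crossing link onto its $u$-side shadow up-link does not decrease the weighted coverage of $F$. All other ingredients — the induction, the partition of $\cov{F}$ by endpoint positions and the bijective reindexing of the positive and negative terms by the crossing links — are routine bookkeeping; once the cost accounting at the pair $(\ell', ur)$ is settled, $\Delta \geq 0$ follows and the induction closes.
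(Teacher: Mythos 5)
Your framework — induction on the number of splits, reduction to a one-step estimate, partitioning $\cov{F}$ by endpoint position, and pairing each crossing link $\ell' = qs$ with the absorbing up-link $uq$ — matches the structure behind the paper's very terse proof, and you correctly isolate the per-pair contribution as $(c_{uq} - c_{\ell'})\,z'_{\ell'}$ together with the fact that $\ell' \in \cov{F}$ iff $uq \in \cov{F}$.

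What you leave open, however, is not a routine consequence of shadow completeness; as stated in the preliminaries, shadow completion points the \emph{wrong} way. The cost assigned to the shadow $uq$ after completion is the \emph{minimum} cost of any original link of which $uq$ is a shadow, so $c_{uq} \le c_{\ell'}$, possibly strictly. Each pair then contributes $(c_{uq}-c_{\ell'})\,z'_{\ell'} \le 0$, making your $\Delta$ nonpositive rather than nonnegative, and the one-step estimate genuinely fails if the split literally routes the mass of $\ell'$ onto the (possibly cheaper) link $uq \in L$: a crossing link of cost $2$ whose absorbing shadow already has cost $1$ in $L$ strictly decreases the left-hand side. So you cannot ``settle the cost accounting'' from the stated definitions alone.

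The paper's one-line proof asserts that the mass lands on a shadow ``with the same cost,'' and that assertion is exactly the ingredient you would need to make explicit: the splitting must be read as routing $z'_{\ell'}$ onto a shadow of $\ell'$ carrying the donor's cost $c_{\ell'}$, not the possibly cheaper cost already recorded in $L$. One way to justify this is to augment $L$ beforehand with, for every $\ell\in L$ and every internal node $w$ of $P_\ell$, two parallel links whose paths are the halves of $P_\ell$ split at $w$, each given cost $c_\ell$; this changes neither the integer optimum, nor any $\opt{B}$, nor the value of $\mathrm{LP}_\gamma$, since every added link is dominated, and the cost bounds used in Lemmas~\ref{lem:weight-increase-decomp}, \ref{lem:simple-trees}, \ref{lem:cross-rounding} and~\ref{lem:bundle-rounding} are unaffected because the added links still cost at most $M$. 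With the split routed to these cost-$c_{\ell'}$ copies, every pair contributes exactly $0$, so $\Delta = 0$ and your induction closes. As written, your proposal flags the crux but neither closes it nor notices that the naive cost bound runs in the opposite direction.
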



Next, we show that the total increase in cost incurred by the decomposition is very small.

\begin{lemma}\label{lem:weight-increase-decomp}
 Let $(T^1, z^1), \cdots, (T^k, z^k)$ be a decomposition of $(\bar G, x)$ produced
by the greedy procedure. Then
$$
\sum_{i\in [k]} c^\intercal z^i \leq (1 + \epsilon) c^\intercal x.
$$
\end{lemma}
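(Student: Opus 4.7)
The plan is to telescope the total cost $\sum_{i=1}^k c^\intercal z^i$ around $c^\intercal x$ by separately controlling the cost increase per single splitting step and the total number of splits. The identity $c^\intercal z^u + c^\intercal z^v = c^\intercal z + \sum_{\ell\in \cov{e}} c_\ell z_\ell$ derived right before Definition~\ref{def:thin-edge} bounds a single split's contribution by $M \cdot z(\cov{e})$, and hence by $2M/\epsilon$ whenever the current pair satisfies the thin coverage inequality $z(\cov{e}) \le 2/\epsilon$. The root pair $(\bar G, x)$ has this property by the contraction step, so I would first verify that splitting preserves it.

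For the verification, let $e_0$ denote the split edge and write $L^u$ for the links of $L$ with both endpoints in $V[G^u]$. Unpacking the definition of $z^u$ shows that, for any $e \in E[T^u]$, the extra mass placed on an in-subtree cover $\ell = uq \in L^u \cap \cov{e}$ is the sum of weights $z_{\ell'}$ over cross-links $\ell' = qr$ with $r \in V[G^v]$; every such $\ell'$ automatically belongs to $\cov{e}$ in the original tree, because its path contains the $q$-to-$u$ segment, which in turn contains $e$. Consequently
$$
z^u(\cov{e}) \;=\; \sum_{\ell\in L^u \cap \cov{e}} z_\ell \;+\; \sum_{\ell'\in \cov{e_0}\cap \cov{e}} z_{\ell'} \;\le\; z(\cov{e}) \;\le\; 2/\epsilon,
$$
where the two sums on the right are disjoint ($L^u\cap\cov{e}$ has both endpoints in $V[G^u]$ while $\cov{e_0}\cap\cov{e}$ has one endpoint in $V[G^v]$) and together contained in $\cov{e}$; the invariant therefore propagates inductively through the procedure.

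To bound the number of splits, I would use that each split occurs at an $\alpha$-thin edge with $\alpha = 4M/\epsilon^2$, and that splitting only adds mass to in-subtree links, so both resulting pieces satisfy $c^\intercal z^u, c^\intercal z^v \ge \alpha$. Hence every non-root node of the binary split tree carries fractional cost at least $\alpha$. Writing $k$ for the number of leaves and $N = k-1$ for the number of splits, the lemma is trivial when $k=1$; otherwise $\sum_{i=1}^k c^\intercal z^i \ge k\alpha$, and combining with the per-split bound $\sum_{i=1}^k c^\intercal z^i \le c^\intercal x + N\cdot 2M/\epsilon$ yields $k(\alpha - 2M/\epsilon) \le c^\intercal x$. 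For $\epsilon \le 1$ one has $\alpha - 2M/\epsilon \ge 2M/\epsilon^2$, so $k \le \epsilon^2 c^\intercal x /(2M)$. Plugging back, the total cost increase is at most $N \cdot 2M/\epsilon \le k\cdot 2M/\epsilon \le \epsilon\, c^\intercal x$, giving $\sum_{i=1}^k c^\intercal z^i \le (1+\epsilon)\,c^\intercal x$. The only technically subtle point is the inductive preservation of thin coverage under splitting; once that invariant is in hand, the rest is a clean amortized counting argument matching the per-split charge against the $\alpha$-lower bound on each leaf's fractional weight.
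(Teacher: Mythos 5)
Your proof is correct and follows the paper's own argument: bound the per-split cost increase by $2M/\epsilon$ via the thin coverage property, observe that each leaf of the decomposition carries fractional cost at least $\alpha = 4M/\epsilon^2$, and combine to show the total increase is at most $\epsilon\, c^\intercal x$. The explicit verification that the splitting operation preserves the thin coverage invariant is a nice addition that the paper defers (in a slightly stronger form, as an exact equality $z^u(\cov{e'}) = z(\cov{e'})$) to the proof of Lemma~\ref{lem:simple-trees}.
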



In addition, the trees in the decomposition have a convenient structure that we define next.

\begin{definition}\label{def:simple-tree}
Let $\beta\in \mathbb{Z}_{\geq 1}$. Call a pair $(T,z)$ \emph{$\beta$-simple} if there
exists a node $u\in V[T]$ the removal of which results in a forest with trees
$K_1, \cdots, K_t$, such that for each $j\in [t]$
\begin{itemize}
 \item $\sum_{\ell \in L, \,\, \ell \in V[K_j]\times V[K_j]} c_\ell z_\ell \leq \beta$, and 
 \item $K_j$ has at most $\beta$ leaves.
\end{itemize}
A node $u$ that leads to such a partition is called a \emph{$\beta$-center} of $(T,z)$.
\end{definition}

\begin{figure}[h]
\begin{center}
\begin{tikzpicture}[scale=0.4]

\begin{scope}

\node (a1) at (2,6) [circle,draw=black!100,fill=black!50,thick,inner sep=1pt,minimum size=3mm] {};
\node (a2) at (2,10) [circle,draw=black!100,fill=black!10,thick,inner sep=1pt,minimum size=3mm] {};
\node (a3) at (5,6) [circle,draw=black!100,fill=black!50,thick,inner sep=1pt,minimum size=3mm] {};
\node (a4) at (5,10) [circle,draw=black!100,fill=black!10,thick,inner sep=1pt,minimum size=3mm] {};
\node (a5) at (5,14) [circle,draw=black!100,fill=black!10,thick,inner sep=1pt,minimum size=3mm] {};
\node (a6) at (7,2) [circle,draw=black!100,fill=black!50,thick,inner sep=1pt,minimum size=3mm] {};
\node (a7) at (7,6) [circle,draw=black!100,fill=black!10,thick,inner sep=1pt,minimum size=3mm] {};
\node (a8) at (8,10) [circle,draw=black!100,fill=black!10,thick,inner sep=1pt,minimum size=3mm] {};
\node (a9) at (9,6) [circle,draw=black!100,fill=black!50,thick,inner sep=1pt,minimum size=3mm] {};
\node (a10) at (11,6) [circle,draw=black!100,fill=black!50,thick,inner sep=1pt,minimum size=3mm] {};
\node (a11) at (12,10) [circle,draw=black!100,fill=black!10,thick,inner sep=1pt,minimum size=3mm] {};
\node (a12) at (13,6) [circle,draw=black!100,fill=black!50,thick,inner sep=1pt,minimum size=3mm] {};
\node (a13) at (16,2) [circle,draw=black!100,fill=black!50,thick,inner sep=1pt,minimum size=3mm] {};
\node (a14) at (16,6) [circle,draw=black!100,fill=black!10,thick,inner sep=1pt,minimum size=3mm] {};
\node (a15) at (16,10) [circle,draw=black!100,fill=black!10,thick,inner sep=1pt,minimum size=3mm] {};
\node (a16) at (14,14) [circle,draw=black!100,fill=black!10,thick,inner sep=1pt,minimum size=3mm] {};
\node (a17) at (15,17) [circle,draw=black!100,fill=black!100,thick,inner sep=1pt,minimum size=3mm] {};
\node (a18) at (22,2) [circle,draw=black!100,fill=black!50,thick,inner sep=1pt,minimum size=3mm] {};
\node (a19) at (23,6) [circle,draw=black!100,fill=black!10,thick,inner sep=1pt,minimum size=3mm] {};
\node (a20) at (24,2) [circle,draw=black!100,fill=black!50,thick,inner sep=1pt,minimum size=3mm] {};
\node (a21) at (23,10) [circle,draw=black!100,fill=black!10,thick,inner sep=1pt,minimum size=3mm] {};
\node (a22) at (26,6) [circle,draw=black!100,fill=black!50,thick,inner sep=1pt,minimum size=3mm] {};
\node (a23) at (27,10) [circle,draw=black!100,fill=black!10,thick,inner sep=1pt,minimum size=3mm] {};
\node (a24) at (28,6) [circle,draw=black!100,fill=black!50,thick,inner sep=1pt,minimum size=3mm] {};
\node (a25) at (25,14) [circle,draw=black!100,fill=black!10,thick,inner sep=1pt,minimum size=3mm] {};

\draw [-,black,very thick] (a1) to node [black,above] {} (a2);
\draw [-,black,very thick] (a3) to node [black,above] {} (a4);
\draw [-,black,very thick] (a2) to node [black,above] {} (a5);
\draw [-,black,very thick] (a4) to node [black,above] {} (a5);
\draw [-,black,very thick] (a6) to node [black,above] {} (a7);
\draw [-,black,very thick] (a7) to node [black,above] {} (a8);
\draw [-,black,very thick] (a9) to node [black,above] {} (a8);
\draw [-,black,very thick] (a8) to node [black,above] {} (a5);
\draw [-,black,very thick] (a5) to node [black,above] {} (a17);
\draw [-,black,very thick] (a10) to node [black,above] {} (a11);
\draw [-,black,very thick] (a12) to node [black,above] {} (a11);
\draw [-,black,very thick] (a13) to node [black,above] {} (a14);
\draw [-,black,very thick] (a14) to node [black,above] {} (a15);
\draw [-,black,very thick] (a15) to node [black,above] {} (a16);
\draw [-,black,very thick] (a11) to node [black,above] {} (a16);
\draw [-,black,very thick] (a16) to node [black,above] {} (a17);
\draw [-,black,very thick] (a18) to node [black,above] {} (a19);
\draw [-,black,very thick] (a20) to node [black,above] {} (a19);
\draw [-,black,very thick] (a19) to node [black,above] {} (a21);
\draw [-,black,very thick] (a22) to node [black,above] {} (a23);
\draw [-,black,very thick] (a24) to node [black,above] {} (a23);
\draw [-,black,very thick] (a21) to node [black,above] {} (a25);
\draw [-,black,very thick] (a23) to node [black,above] {} (a25);
\draw [-,black,very thick] (a25) to node [black,above] {} (a17);

\draw [-,blue,very thick,dotted,out=-90,in=-180] (a1) to node [black] {} (a6);
\draw [-,blue,very thick,dotted,out=70,in=-110] (a1) to node [black] {} (a5);
\draw [-,blue,very thick,dashed,out=110,in=-110] (a3) to node [black] {} (a4);
\draw [-,blue,very thick,dashed,out=-10,in=-170] (a4) to node [black] {} (a8);
\draw [-,blue,very thick,dotted] (a6) to node [black] {} (a9);
\draw [-,blue,very thick,dotted,out=-50,in=170] (a9) to node [black] {} (a13);
\draw [-,blue,very thick,dotted,out=-170,in=-10] (a13) to node [black] {} (a6);
\draw [-,blue,very thick,dashed,out=30,in=-140] (a10) to node [black] {} (a15);
\draw [-,blue,very thick,dashed,out=-10,in=-170] (a12) to node [black] {} (a14);
\draw [-,blue,very thick,dashed,out=-40,in=150] (a14) to node [black] {} (a18);
\draw [-,blue,very thick,dashed,out=50,in=-100] (a20) to node [black] {} (a22);
\draw [-,blue,very thick,dashed,out=80,in=-30] (a24) to node [black] {} (a25);

\node (k1) at (5,0) [circle,draw=black!0,fill=black!0,thick,inner sep=1pt,minimum size=3mm] {$K_1$};
\node (k2) at (13.5,0) [circle,draw=black!0,fill=black!0,thick,inner sep=1pt,minimum size=3mm] {$K_2$};
\node (kt) at (25,0) [circle,draw=black!0,fill=black!0,thick,inner sep=1pt,minimum size=3mm] {$K_t$};

\node (d1) at (19,0) [circle,draw=black!0,fill=black!0,thick,inner sep=1pt,minimum size=3mm] {$\boldsymbol{\cdots}$};
\node (d2) at (19,8) [circle,draw=black!0,fill=black!0,thick,inner sep=1pt,minimum size=3mm] {$\boldsymbol{\cdots}$};
\node (d3) at (19,14) [circle,draw=black!0,fill=black!0,thick,inner sep=1pt,minimum size=3mm] {$\boldsymbol{\cdots}$};

\draw [draw=black!40, very thick, dashed] (10,-1) -- (10,18);
\draw [draw=black!40, very thick, dashed] (17,-1) -- (17,18);
\draw [draw=black!40, very thick, dashed] (21,-1) -- (21,18);

\end{scope}

\end{tikzpicture}
\end{center}
\caption{A $4$-simple pair $(T,z)$ of an (unweighted) TAP instance. The full node
is a $4$-center. The shown links represent the support of $z$, with integral (dashed)
and half-integral (dotted) links. The number of leaves, as well as the total fractional
weight in each subtree is at most $4$.}
\label{fig:simplepair}
\end{figure}
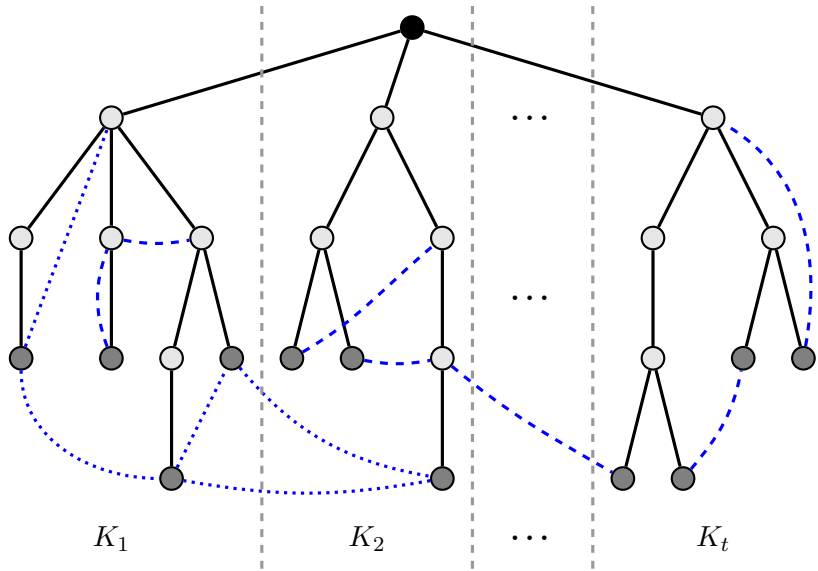

Figure~\ref{fig:simplepair} illustrates the definition of a $\beta$-simple pair.
Informally speaking, $(T,z)$ is $\beta$-simple (for small $\beta$) if by removing 
a singe node from $T$, one can break it into trees, where each tree has a few leaves, 
as well as low cost on links that are fully contained in the tree. We stress that
a tree of a $\beta$-simple pair is \emph{not} a tree that can be decomposed into constant size, 
or constant depth trees (as in~\cite{cohen2013}) by removing a single node. Instead, the 
number of leaves in each part is small, while the total number of nodes can be very large.

The following lemma states 
that every tree-solution pair in the obtained decomposition is $\beta$-simple for
$$
\beta(M, \epsilon) = \frac{48M}{\epsilon^2}.
$$

\begin{lemma}\label{lem:simple-trees}
 Let $(T^1, z^1), \cdots, (T^k, z^k)$ be a decomposition of $(\bar G, x)$ produced
by the greedy procedure. Then every pair in the decomposition is 
$\beta(\epsilon, M)$-simple for $\beta(\epsilon, M) = \frac{48M}{\epsilon^2}$.
\end{lemma}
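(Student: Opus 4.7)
My plan is to exhibit a $\beta$-center of $(T^j, z^j)$ via a tree-centroid argument tailored to the cost measure induced by $z^j$, and then verify both bullets in Definition~\ref{def:simple-tree} separately. For an edge $\{u, v\} \in E[T^j]$, write $K(u,v)$ for the component of $T^j - \{u,v\}$ containing $v$, and set $\mu(u,v) := \sum_{\ell \in L,\, \ell \in V[K(u,v)]\times V[K(u,v)]} c_\ell z^j_\ell$ for the cost incurred by $z^j$ on links with both endpoints in the $v$-side. Recall $\alpha = 4M/\epsilon^2$ and $\beta = 48M/\epsilon^2$, so it suffices to find $u^* \in V[T^j]$ such that, for every neighbor $v$, both $\mu(u^*, v) \leq \beta$ and the component $K(u^*, v)$ has at most $\beta$ leaves.

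To locate $u^*$, I orient each edge $e = \{u,v\}$ of $T^j$ from $u$ to $v$ whenever $\mu(u,v) > \mu(v,u)$, leaving ties unoriented. The key observation is that every vertex has out-degree at most one under this orientation: if both $u \to v$ and $u \to w$ with $v \ne w$, then $V[K(v,u)] \supseteq V[K(u,w)]$ and $V[K(w,u)] \supseteq V[K(u,v)]$, giving the contradictory chain $\mu(u,v) > \mu(v,u) \geq \mu(u,w) > \mu(w,u) \geq \mu(u,v)$. Because $T^j$ is a tree and hence contains no cycle, the subgraph of oriented arcs cannot have out-degree one everywhere, so some vertex $u^* \in V[T^j]$ must be a sink, meaning $\mu(u^*, v) \leq \mu(v, u^*)$ for every neighbor $v$ of $u^*$. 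Combined with the termination of the splitting procedure on $T^j$ (so no edge is $\alpha$-thin with respect to $z^j$), this forces $\mu(u^*, v) < \alpha < \beta$ for every neighbor $v$, which establishes the cost bullet of Definition~\ref{def:simple-tree} for each component $K_j := K(u^*, v_j)$ of $T^j - u^*$.

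For the leaf bound on $K_j$, I would combine LP feasibility of $z^j$ on $(T^j, L^j, c)$ with the thin coverage property, which is preserved by splitting. A vertex $w \in V[K_j] \setminus \{v_j\}$ is a leaf of $K_j$ exactly when it is a leaf of $T^j$, and the LP constraint on its (unique) leaf edge $e_w$ gives $\sum_{\ell \in \cov{e_w}} c_\ell z^j_\ell \geq \sum_{\ell \in \cov{e_w}} z^j_\ell \geq 1$, using $c_\ell \geq 1$. Each such link has $w$ as an endpoint, so it either lies entirely inside $K_j$ or exits through $\{u^*, v_j\}$; summing the inequality over all such leaves $w$, each in-$K_j$ link is charged at most twice (once per leaf endpoint) and each exiting link at most once, so the number of these leaves is bounded by $2\mu(u^*, v_j) + \sum_{\ell \in \cov{\{u^*, v_j\}}} c_\ell z^j_\ell \leq 2\alpha + M \cdot (2/\epsilon)$, where the second bound uses thin coverage $z^j(\cov{\{u^*,v_j\}}) \leq 2/\epsilon$ together with $c_\ell \leq M$. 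Accounting for $v_j$ itself as one further possible leaf then yields $|\mathrm{leaves}(K_j)| \leq 8M/\epsilon^2 + 2M/\epsilon + 1 \leq 48M/\epsilon^2 = \beta$, assuming $\epsilon \leq 1$ without loss of generality. The main subtlety is precisely this leaf-counting step: one must avoid double-counting link contributions across the leaves of $K_j$ and treat separately the links that exit $K_j$ through $\{u^*, v_j\}$, since these are controlled by thin coverage on that single edge rather than by the (small) in-$K_j$ cost just derived.
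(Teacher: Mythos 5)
Your proof is correct, and it diverges from the paper's argument in both halves in interesting ways. For the existence of a center, the paper distinguishes two cases: either some edge has both sides of $\mu$-cost below $\alpha$, in which case $c^\intercal z$ is small and any node works; or every edge has exactly one light side, in which case it orients each edge from the light-satisfying endpoint to the other and takes an out-degree-zero vertex. Your orientation by direct comparison $\mu(u,v) > \mu(v,u)$ (ties unoriented) subsumes both cases at once, and the pigeonhole argument that $n-1$ edges cannot give all $n$ vertices out-degree one is a slightly cleaner way to exhibit a sink; you then correctly invoke the termination criterion of the splitting procedure (no $\alpha$-thin edge remains) to pass from ``$\mu(u^*,v)$ is the smaller of the two sides'' to ``$\mu(u^*,v) < \alpha$.''

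The leaf bound is where you genuinely depart from the paper. The paper argues by contradiction: if $K_j$ had more than $\beta$ leaves, then because each link has cost at least $1$ and covers at most two leaf edges, $\opt{K_j} > \beta/2$; Proposition~\ref{prop:simple_rounding} (the $2$-approximation) then forces every feasible fractional solution on $K_j$ to have cost $> \beta/4 = 12M/\epsilon^2$, contradicting that $z$ already covers $K_j$ using links of total cost at most $10M/\epsilon^2 + 2M/\epsilon$. You instead sum the LP covering constraints over all leaf edges of $K_j$ directly, observing that a link covering a leaf edge must have that leaf as an endpoint (so in-$K_j$ links are charged at most twice, links exiting through $\{u^*,v_j\}$ at most once), and bound the two pieces by $2\mu(u^*,v_j) \le 2\alpha$ and $M\cdot z^j(\cov{\{u^*,v_j\}}) \le 2M/\epsilon$ via thin coverage. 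This is more elementary — no detour through $\opt{K_j}$ or the $2$-approximation — and in fact gives a tighter constant (roughly $11M/\epsilon^2$ rather than $48M/\epsilon^2$), though only the stated $48M/\epsilon^2$ is needed. Both proofs are valid; yours is arguably shorter and more self-contained.
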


This concludes the description of the decomposition and its properties.
However, before we proceed with the second phase of the algorithm, in 
which each part of the decomposition is rounded to an integral solution,
we need to take care of the following small technicality. 

While the union of the trees in the decomposition contains all nodes in $\bar G$,
it does not contain all edges in $\bar G$. More precisely, the $k-1$ edges 
used in the $k-1$ splitting operations that resulted in the final decomposition
are not part of any tree in the decomposition. Hence, we need to explain
how these edges are covered in the solution returned by the algorithm. 
Here, a trivial solution $L_1 \subseteq L$ containing an arbitrary covering link per edge
has a cost of at most 
$(k-1)M$, which is at most an $\epsilon$-fraction of total fractional cost: 
Each one of the $k$ solutions $z^1, \cdots, z^k$ has, by the property of the decomposition,
a fractional cost of at least $\alpha(M,\epsilon) = \frac{4M}{\epsilon^2}$,
implying that the total fractional cost is at least $\frac{4Mk}{\epsilon^2}$, so
$c(L_1) \leq \epsilon \cdot \sum_{i\in [k]} c^\intercal z^i.$

\subsection{Phase 2: Rounding $\beta$-Simple Pairs}

The second phase of the rounding algorithm accepts a decomposition
$(T^1,z^1), \cdots, (T^k,z^k)$ of the pair $(\bar G, x)$ into 
$\beta(M,\epsilon)$-simple pairs, and outputs $k$ WTAP solutions
$S_1, \cdots, S_k$, one for each pair. The final output of the algorithm
is hence 
$
S^{\ALG} = L_0 \cup L_1 \cup S_1 \cup \dots \cup S_k,
$
which is feasible for the original WTAP instance. 
As we have shown, the contribution of $L_0 \cup L_1$
to the cost of the solution is $O(\epsilon) c^\intercal x$, so
its contribution can henceforth be neglected, as it entails an 
arbitrarily small loss in the approximation guarantee. Furthermore, 
Lemma~\ref{lem:weight-increase-decomp} guarantees that the total
cost of the solutions $z^1, \cdots, z^k$ is at most an 
$\epsilon$-fraction larger than that of $x$. Finally,
since $\supp{z^j} \subseteq V[T^j]\times V[T^j]$ holds for 
every $j\in [k]$, we can treat each pair $(T^j, z^j)$ as a separate
instance-solution pair, thus neglecting the dependencies between 
the pairs and presenting a rounding procedure for one such pair.

Let $(T,z)$ henceforth denote any pair in the decomposition.
We present two rounding procedures, each achieving a good 
approximation with respect to some part of the fractional solution $z$.
It is then easy to show that the approximaiton guarantee claimed
in Theorem~\ref{thm:wtap} is attained by reporting the solution
with the lower cost among the two solutions.

Interestingly, only one of the rounding procedures exploits the
bundle constraints in the LP. The other procedure only uses properties
of star-shaped solutions and Proposition~\ref{prop:star-shaped},
and provides a good approximation when the instance is \emph{close to being
star-shaped}. We present this procedure first. For what remains 
we fix any $\beta(M,\epsilon)$-center $r\in V[T]$, call it \emph{root},
and denote by $R^1, \cdots, R^m$ the set of trees that are obtained by 
removing $r$ from $T$.

\subsubsection{First Rounding Procedure: Nearly Star-Shaped Pairs}

We call a link $\ell \in V[T] \times V[T]$ a \emph{cross-link} if 
it connects two nodes in different trees among $R^1, \cdots, R^m$.
Observe that any cross-link $\ell$ has the root $r$ incident to
its path $P^T_\ell$. A link $\ell \in V[T] \times V[T]$ that is not a cross-link 
is called an \emph{in-link}. Note that all links $\ell$ with $r\in \ell$
are in-links.

Define $z^{cr}\in \mathbb{R}_{\geq 0}^L$ and $z^{in}\in \mathbb{R}_{\geq 0}^L$ 
to be the parts of $z$ that correspond to cross-links and in-links, 
respectively. Formally, $z^{cr}_\ell = z_\ell$ if $\ell$ is a cross-link 
and $z^{cr}_\ell = 0$, otherwise, and $z^{in} = z - z^{cr}$.
The following lemma proves the existence of a simple rounding algorithm
that for any $\lambda > 1$ produces a solution with cost at most $\frac{4\lambda}{3(\lambda - 1)}$
times the total cost of cross-links in $z$, albeit at a high cost in terms
of in-links.

\begin{lemma}\label{lem:cross-rounding}
 Let $\lambda > 1$ be any constant. Given $T$ and $z$, there is a 
polynomial time algorithm that produces a set of links $S\subseteq L$ 
that covers $E[T]$ with cost at most 
$$
c(S) \leq 2\lambda c^\intercal z^{in} + \frac{4\lambda}{3(\lambda - 1)} c^\intercal  z^{cr}.
$$
\end{lemma}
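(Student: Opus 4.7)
The plan is to split the edges of $T$ according to whether they are ``sufficiently covered'' by in-links or by cross-links, and then to round each part separately: the in-link-dominated edges by the generic $2$-approximation and the cross-link-dominated edges via the star-shaped bound of Proposition~\ref{prop:star-shaped}. Concretely, I partition
\[
E_1 = \{e\in E[T]\,\mid\, z^{in}(\cov{e})\geq 1/\lambda\}, \qquad E_2 = E[T]\setminus E_1.
\]
Since $z$ is feasible for the natural LP on $T$, for every $e\in E[T]$ we have $z^{in}(\cov{e}) + z^{cr}(\cov{e})\geq 1$, so $E_2 \subseteq \{e\,\mid\, z^{cr}(\cov{e}) > (\lambda-1)/\lambda\}$. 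I will output $S = S_1 \cup S_2$ where $S_1$ covers $E_1$ and $S_2$ covers $E_2$.

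\textbf{Step 1 (rounding on $E_1$ via in-links).} Contract the edges of $E_2$ in $T$ to obtain a tree $T_1$ with edge set $E_1$; under contraction, every link $\ell\in L$ becomes either a self-loop or a link of $T_1$ with the same cost. By definition of $E_1$, the vector $\lambda\cdot z^{in}$ is a feasible fractional solution to the natural LP of the WTAP instance on $T_1$. Applying Proposition~\ref{prop:simple_rounding} to this instance and fractional solution yields an integral set $S_1\subseteq L$ that covers every edge of $T_1$ (equivalently every edge in $E_1$ of the original tree $T$) with
\[
c(S_1) \;\leq\; 2\,c^\intercal(\lambda z^{in}) \;=\; 2\lambda\, c^\intercal z^{in}.
\]

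\textbf{Step 2 (rounding on $E_2$ via cross-links).} Contract the edges of $E_1$ in $T$ to obtain a tree $T_2$ with edge set $E_2$. The crucial observation is that every cross-link $\ell$ has $r$ on $P^T_\ell$ (its two endpoints lie in different components $R^i,R^j$ of $T-r$), so $r$ (or the compound node containing $r$ after the contraction) is incident to $P^{T_2}_\ell$ for every cross-link. Hence the sub-instance of $T_2$ that retains only the cross-links is \emph{star-shaped} with hub $r$ in the sense of Definition~\ref{def:star-shaped}. Since $z^{cr}(\cov{e}) > (\lambda-1)/\lambda$ for every $e\in E_2$, the vector $\tfrac{\lambda}{\lambda-1}\, z^{cr}$ is a feasible fractional solution to the natural LP of this star-shaped instance on $T_2$. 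Proposition~\ref{prop:star-shaped} then yields an integral set $S_2\subseteq L$ of cross-links covering every edge of $T_2$ (equivalently every edge in $E_2$) with
\[
c(S_2)\;\leq\; \frac{4}{3}\cdot c^\intercal\!\left(\tfrac{\lambda}{\lambda-1}\,z^{cr}\right) \;=\; \frac{4\lambda}{3(\lambda-1)}\, c^\intercal z^{cr}.
\]

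\textbf{Conclusion.} The set $S=S_1\cup S_2$ covers $E_1\cup E_2 = E[T]$ and has cost at most $2\lambda\, c^\intercal z^{in} + \tfrac{4\lambda}{3(\lambda-1)}\, c^\intercal z^{cr}$, as required. Both Propositions~\ref{prop:simple_rounding} and~\ref{prop:star-shaped} run in polynomial time, so the overall procedure is polynomial. The only real subtlety is to verify that a link returned by the rounding oracle on the contracted tree, lifted back to $L$, still covers the corresponding original edges in $T$; this is immediate because contracting edges only shortens link-paths, so any link whose contracted path covers $e\in E_i$ also covers $e$ in $T$. This bookkeeping, together with the star-shapedness check in Step 2, is the main place care is needed.
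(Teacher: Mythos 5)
Your proof is correct and follows essentially the same route as the paper: partition edges by whether in-links cover them by at least $1/\lambda$ (your $E_1$ is exactly the paper's $E_\lambda$), round the in-link-dominated part with Proposition~\ref{prop:simple_rounding} after scaling by $\lambda$, and round the remaining part by observing that the cross-link sub-instance on the contracted tree is star-shaped with hub $r$ and applying Proposition~\ref{prop:star-shaped} after scaling by $\lambda/(\lambda-1)$. Your explicit contraction of $E_2$ in Step~1 is a small expositional clarification over the paper's phrasing (which applies the simple rounding directly), but the underlying argument is identical.
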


\begin{proof}
 The rounding algorithm works as follows. Denote by $E_\lambda$ the set of
edges in $E[T]$ that are covered by a fraction of at least $\frac{1}{\lambda}$
by in-links in $z$. In other words
$$
E_\lambda = \left\{ e\in E[T] \,\mid\, z^{in}(\cov{e}) \geq \frac{1}{\lambda} \right\}.
$$
Let $y = \lambda z^{in}$ be a fractional WTAP solution that covers every edge
in $E_\lambda$ completely (i.e.\ $y(\cov{e}) \geq 1$ for all $e\in E_\lambda$). 
Now, invoke Proposition~\ref{prop:simple_rounding} to produce
a solution $S^1 \subseteq L$ with cost
$$
c(S^1) \leq 2 c^\intercal y = 2\lambda c^\intercal z^{in}
$$ 
that covers all edges in $E_\lambda$. Contract all edges in $E_\lambda$ to obtain
a new tree $T'$ with edge set $E[T'] = E[T] \setminus E_\lambda$.

By definition of $E_\lambda$, and from feasibility of $z$, every edge in $E[T']$ is covered by at 
least a fraction $\frac{\lambda-1}{\lambda}$ with cross-links, namely
$$
z^{cr}(\cov{e}) \geq \frac{\lambda- 1}{\lambda}
$$
holds for all $e\in E[T']$, so $y' = \frac{\lambda}{\lambda - 1} z^{cr}$ is 
a fractional WTAP solution for the tree $T'$. 
Our aim is to cover all edges in $T'$
with cross-links contained in $L^{cr} = \supp{z^{cr}}$. Observe, that the WTAP
instance restricted to these links is feasible, by definition of $T'$, and 
furthermore, it is \emph{star-shaped}, as for any cross-link $\ell \in L^{cr}$,
the path $P^{T'}_\ell$ is incident to the root $r$. We can now
apply Proposition~\ref{prop:star-shaped}
to construct a solution $S^2 \subseteq L$ covering all edges of $T'$ with cost
$$
c(S^2) \leq \frac{4}{3} c^\intercal y' = \frac{4\lambda}{3(\lambda - 1)} c^\intercal z^{cr}.
$$
Since $S = S^1\cup S^2$ comprises a feasible solution to the WTAP instance on $T$, the lemma
is proved. Figure~\ref{fig:star-shaped-rounding} illustrates the proof.
\end{proof}

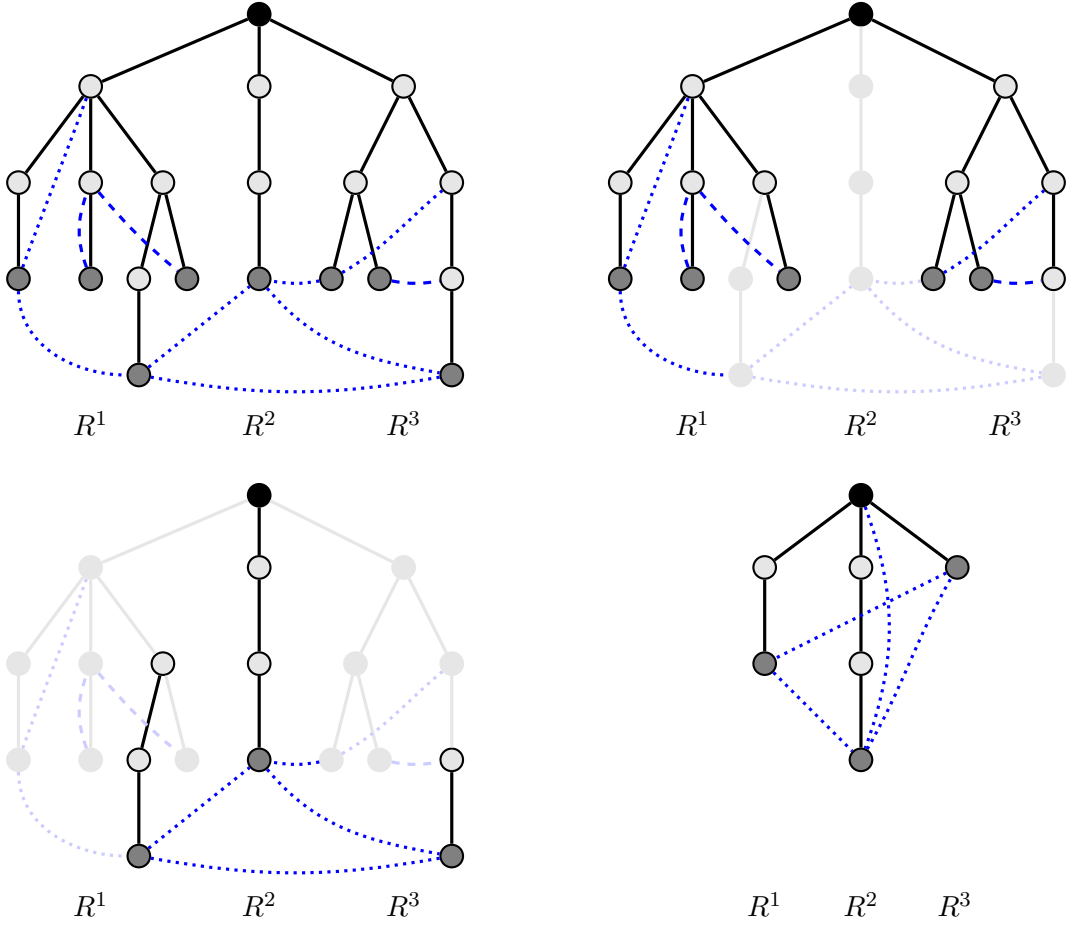
\begin{figure}[h]
\begin{center}
\begin{tikzpicture}[scale=0.32]

\begin{scope}

\begin{scope}[xshift=-2cm]
\node (a1) at (2,6) [circle,draw=black!100,fill=black!50,thick,inner sep=1pt,minimum size=3mm] {};
\node (a2) at (2,10) [circle,draw=black!100,fill=black!10,thick,inner sep=1pt,minimum size=3mm] {};
\node (a3) at (5,6) [circle,draw=black!100,fill=black!50,thick,inner sep=1pt,minimum size=3mm] {};
\node (a4) at (5,10) [circle,draw=black!100,fill=black!10,thick,inner sep=1pt,minimum size=3mm] {};
\node (a5) at (5,14) [circle,draw=black!100,fill=black!10,thick,inner sep=1pt,minimum size=3mm] {};
\node (a6) at (7,2) [circle,draw=black!100,fill=black!50,thick,inner sep=1pt,minimum size=3mm] {};
\node (a7) at (7,6) [circle,draw=black!100,fill=black!10,thick,inner sep=1pt,minimum size=3mm] {};
\node (a8) at (8,10) [circle,draw=black!100,fill=black!10,thick,inner sep=1pt,minimum size=3mm] {};
\node (a9) at (9,6) [circle,draw=black!100,fill=black!50,thick,inner sep=1pt,minimum size=3mm] {};
\end{scope}
\begin{scope}[xshift=2cm]
\node (a10) at (11,6) [circle,draw=black!100,fill=black!50,thick,inner sep=1pt,minimum size=3mm] {};
\node (a11) at (12,10) [circle,draw=black!100,fill=black!10,thick,inner sep=1pt,minimum size=3mm] {};
\node (a12) at (13,6) [circle,draw=black!100,fill=black!50,thick,inner sep=1pt,minimum size=3mm] {};
\node (a13) at (16,2) [circle,draw=black!100,fill=black!50,thick,inner sep=1pt,minimum size=3mm] {};
\node (a14) at (16,6) [circle,draw=black!100,fill=black!10,thick,inner sep=1pt,minimum size=3mm] {};
\node (a15) at (16,10) [circle,draw=black!100,fill=black!10,thick,inner sep=1pt,minimum size=3mm] {};
\node (a16) at (14,14) [circle,draw=black!100,fill=black!10,thick,inner sep=1pt,minimum size=3mm] {};
\end{scope}

\node (a17) at (10,17) [circle,draw=black!100,fill=black!100,thick,inner sep=1pt,minimum size=3mm] {};

\node (a18) at (10,6) [circle,draw=black!100,fill=black!50,thick,inner sep=1pt,minimum size=3mm] {};
\node (a19) at (10,10) [circle,draw=black!100,fill=black!10,thick,inner sep=1pt,minimum size=3mm] {};
\node (a20) at (10,14) [circle,draw=black!100,fill=black!10,thick,inner sep=1pt,minimum size=3mm] {};

\draw [-,black,very thick] (a1) to node [black,above] {} (a2);
\draw [-,black,very thick] (a3) to node [black,above] {} (a4);
\draw [-,black,very thick] (a2) to node [black,above] {} (a5);
\draw [-,black,very thick] (a4) to node [black,above] {} (a5);
\draw [-,black,very thick] (a6) to node [black,above] {} (a7);
\draw [-,black,very thick] (a7) to node [black,above] {} (a8);
\draw [-,black,very thick] (a9) to node [black,above] {} (a8);
\draw [-,black,very thick] (a8) to node [black,above] {} (a5);
\draw [-,black,very thick] (a5) to node [black,above] {} (a17);
\draw [-,black,very thick] (a10) to node [black,above] {} (a11);
\draw [-,black,very thick] (a12) to node [black,above] {} (a11);
\draw [-,black,very thick] (a13) to node [black,above] {} (a14);
\draw [-,black,very thick] (a14) to node [black,above] {} (a15);
\draw [-,black,very thick] (a15) to node [black,above] {} (a16);
\draw [-,black,very thick] (a11) to node [black,above] {} (a16);
\draw [-,black,very thick] (a16) to node [black,above] {} (a17);

\draw [-,black,very thick] (a17) to node [black,above] {} (a20);
\draw [-,black,very thick] (a20) to node [black,above] {} (a19);
\draw [-,black,very thick] (a19) to node [black,above] {} (a18);

\draw [-,blue,very thick,dotted,out=-90,in=-180] (a1) to node [black] {} (a6);
\draw [-,blue,very thick,dotted,out=70,in=-110] (a1) to node [black] {} (a5);
\draw [-,blue,very thick,dashed,out=110,in=-110] (a3) to node [black] {} (a4);
\draw [-,blue,very thick,dashed,out=-50,in=140] (a4) to node [black] {} (a9);
\draw [-,blue,very thick,dotted] (a6) to node [black] {} (a18);
\draw [-,blue,very thick,dotted,out=-50,in=170] (a18) to node [black] {} (a13);
\draw [-,blue,very thick,dotted,out=-170,in=-10] (a13) to node [black] {} (a6);
\draw [-,blue,very thick,dotted,out=30,in=-140] (a10) to node [black] {} (a15);
\draw [-,blue,very thick,dotted,out=-170,in=-10] (a10) to node [black] {} (a18);
\draw [-,blue,very thick,dashed,out=-10,in=-170] (a12) to node [black] {} (a14);

\node (k1) at (3,0) [circle,draw=black!0,fill=black!0,thick,inner sep=1pt,minimum size=3mm] {$R^1$};
\node (k2) at (10,0) [circle,draw=black!0,fill=black!0,thick,inner sep=1pt,minimum size=3mm] {$R^2$};
\node (k2) at (16,0) [circle,draw=black!0,fill=black!0,thick,inner sep=1pt,minimum size=3mm] {$R^3$};
\end{scope}


\begin{scope}[xshift=25cm]
\begin{scope}[xshift=-2cm]
\node (a1) at (2,6) [circle,draw=black!100,fill=black!50,thick,inner sep=1pt,minimum size=3mm] {};
\node (a2) at (2,10) [circle,draw=black!100,fill=black!10,thick,inner sep=1pt,minimum size=3mm] {};
\node (a3) at (5,6) [circle,draw=black!100,fill=black!50,thick,inner sep=1pt,minimum size=3mm] {};
\node (a4) at (5,10) [circle,draw=black!100,fill=black!10,thick,inner sep=1pt,minimum size=3mm] {};
\node (a5) at (5,14) [circle,draw=black!100,fill=black!10,thick,inner sep=1pt,minimum size=3mm] {};
\node (a6) at (7,2) [circle,draw=black!10,fill=black!10,thick,inner sep=1pt,minimum size=3mm] {};
\node (a7) at (7,6) [circle,draw=black!10,fill=black!10,thick,inner sep=1pt,minimum size=3mm] {};
\node (a8) at (8,10) [circle,draw=black!100,fill=black!10,thick,inner sep=1pt,minimum size=3mm] {};
\node (a9) at (9,6) [circle,draw=black!100,fill=black!50,thick,inner sep=1pt,minimum size=3mm] {};
\end{scope}
\begin{scope}[xshift=2cm]
\node (a10) at (11,6) [circle,draw=black!100,fill=black!50,thick,inner sep=1pt,minimum size=3mm] {};
\node (a11) at (12,10) [circle,draw=black!100,fill=black!10,thick,inner sep=1pt,minimum size=3mm] {};
\node (a12) at (13,6) [circle,draw=black!100,fill=black!50,thick,inner sep=1pt,minimum size=3mm] {};
\node (a13) at (16,2) [circle,draw=black!10,fill=black!10,thick,inner sep=1pt,minimum size=3mm] {};
\node (a14) at (16,6) [circle,draw=black!100,fill=black!10,thick,inner sep=1pt,minimum size=3mm] {};
\node (a15) at (16,10) [circle,draw=black!100,fill=black!10,thick,inner sep=1pt,minimum size=3mm] {};
\node (a16) at (14,14) [circle,draw=black!100,fill=black!10,thick,inner sep=1pt,minimum size=3mm] {};
\end{scope}

\node (a17) at (10,17) [circle,draw=black!100,fill=black!100,thick,inner sep=1pt,minimum size=3mm] {};

\node (a18) at (10,6) [circle,draw=black!10,fill=black!10,thick,inner sep=1pt,minimum size=3mm] {};
\node (a19) at (10,10) [circle,draw=black!10,fill=black!10,thick,inner sep=1pt,minimum size=3mm] {};
\node (a20) at (10,14) [circle,draw=black!10,fill=black!10,thick,inner sep=1pt,minimum size=3mm] {};

\draw [-,black,very thick] (a1) to node [black,above] {} (a2);
\draw [-,black,very thick] (a3) to node [black,above] {} (a4);
\draw [-,black,very thick] (a2) to node [black,above] {} (a5);
\draw [-,black,very thick] (a4) to node [black,above] {} (a5);
\draw [-,draw=black!10,very thick] (a6) to node [black,above] {} (a7);
\draw [-,draw=black!10,very thick] (a7) to node [black,above] {} (a8);
\draw [-,black,very thick] (a9) to node [black,above] {} (a8);
\draw [-,black,very thick] (a8) to node [black,above] {} (a5);
\draw [-,black,very thick] (a5) to node [black,above] {} (a17);
\draw [-,black,very thick] (a10) to node [black,above] {} (a11);
\draw [-,black,very thick] (a12) to node [black,above] {} (a11);
\draw [-,draw=black!10,very thick] (a13) to node [black,above] {} (a14);
\draw [-,black,very thick] (a14) to node [black,above] {} (a15);
\draw [-,black,very thick] (a15) to node [black,above] {} (a16);
\draw [-,black,very thick] (a11) to node [black,above] {} (a16);
\draw [-,black,very thick] (a16) to node [black,above] {} (a17);

\draw [-,draw=black!10,very thick] (a17) to node [black,above] {} (a20);
\draw [-,draw=black!10,very thick] (a20) to node [black,above] {} (a19);
\draw [-,draw=black!10,very thick] (a19) to node [black,above] {} (a18);

\draw [-,blue,very thick,dotted,out=-90,in=-180] (a1) to node [black] {} (a6);
\draw [-,blue,very thick,dotted,out=70,in=-110] (a1) to node [black] {} (a5);
\draw [-,blue,very thick,dashed,out=110,in=-110] (a3) to node [black] {} (a4);
\draw [-,blue,very thick,dashed,out=-50,in=140] (a4) to node [black] {} (a9);
\draw [-,draw=blue!20,very thick,dotted] (a6) to node [black] {} (a18);
\draw [-,draw=blue!20,very thick,dotted,out=-50,in=170] (a18) to node [black] {} (a13);
\draw [-,draw=blue!20,very thick,dotted,out=-170,in=-10] (a13) to node [black] {} (a6);
\draw [-,blue,very thick,dotted,out=30,in=-140] (a10) to node [black] {} (a15);
\draw [-,draw=blue!20,very thick,dotted,out=-170,in=-10] (a10) to node [black] {} (a18);
\draw [-,blue,very thick,dashed,out=-10,in=-170] (a12) to node [black] {} (a14);

\node (k1) at (3,0) [circle,draw=black!0,fill=black!0,thick,inner sep=1pt,minimum size=3mm] {$R^1$};
\node (k2) at (10,0) [circle,draw=black!0,fill=black!0,thick,inner sep=1pt,minimum size=3mm] {$R^2$};
\node (k2) at (16,0) [circle,draw=black!0,fill=black!0,thick,inner sep=1pt,minimum size=3mm] {$R^3$};
\end{scope}


\begin{scope}[yshift=-20cm]

\begin{scope}[xshift=-2cm]
\node (a1) at (2,6) [circle,draw=black!10,fill=black!10,thick,inner sep=1pt,minimum size=3mm] {};
\node (a2) at (2,10) [circle,draw=black!10,fill=black!10,thick,inner sep=1pt,minimum size=3mm] {};
\node (a3) at (5,6) [circle,draw=black!10,fill=black!10,thick,inner sep=1pt,minimum size=3mm] {};
\node (a4) at (5,10) [circle,draw=black!10,fill=black!10,thick,inner sep=1pt,minimum size=3mm] {};
\node (a5) at (5,14) [circle,draw=black!10,fill=black!10,thick,inner sep=1pt,minimum size=3mm] {};
\node (a6) at (7,2) [circle,draw=black!100,fill=black!50,thick,inner sep=1pt,minimum size=3mm] {};
\node (a7) at (7,6) [circle,draw=black!100,fill=black!10,thick,inner sep=1pt,minimum size=3mm] {};
\node (a8) at (8,10) [circle,draw=black!100,fill=black!10,thick,inner sep=1pt,minimum size=3mm] {};
\node (a9) at (9,6) [circle,draw=black!10,fill=black!10,thick,inner sep=1pt,minimum size=3mm] {};
\end{scope}
\begin{scope}[xshift=2cm]
\node (a10) at (11,6) [circle,draw=black!10,fill=black!10,thick,inner sep=1pt,minimum size=3mm] {};
\node (a11) at (12,10) [circle,draw=black!10,fill=black!10,thick,inner sep=1pt,minimum size=3mm] {};
\node (a12) at (13,6) [circle,draw=black!10,fill=black!10,thick,inner sep=1pt,minimum size=3mm] {};
\node (a13) at (16,2) [circle,draw=black!100,fill=black!50,thick,inner sep=1pt,minimum size=3mm] {};
\node (a14) at (16,6) [circle,draw=black!100,fill=black!10,thick,inner sep=1pt,minimum size=3mm] {};
\node (a15) at (16,10) [circle,draw=black!10,fill=black!10,thick,inner sep=1pt,minimum size=3mm] {};
\node (a16) at (14,14) [circle,draw=black!10,fill=black!10,thick,inner sep=1pt,minimum size=3mm] {};
\end{scope}

\node (a17) at (10,17) [circle,draw=black!100,fill=black!100,thick,inner sep=1pt,minimum size=3mm] {};

\node (a18) at (10,6) [circle,draw=black!100,fill=black!50,thick,inner sep=1pt,minimum size=3mm] {};
\node (a19) at (10,10) [circle,draw=black!100,fill=black!10,thick,inner sep=1pt,minimum size=3mm] {};
\node (a20) at (10,14) [circle,draw=black!100,fill=black!10,thick,inner sep=1pt,minimum size=3mm] {};

\draw [-,draw=black!10,very thick] (a1) to node [black,above] {} (a2);
\draw [-,draw=black!10,very thick] (a3) to node [black,above] {} (a4);
\draw [-,draw=black!10,very thick] (a2) to node [black,above] {} (a5);
\draw [-,draw=black!10,very thick] (a4) to node [black,above] {} (a5);

\draw [-,black,very thick] (a6) to node [black,above] {} (a7);
\draw [-,black,very thick] (a7) to node [black,above] {} (a8);

\draw [-,draw=black!10,very thick] (a9) to node [black,above] {} (a8);
\draw [-,draw=black!10,very thick] (a8) to node [black,above] {} (a5);
\draw [-,draw=black!10,very thick] (a5) to node [black,above] {} (a17);
\draw [-,draw=black!10,very thick] (a10) to node [black,above] {} (a11);
\draw [-,draw=black!10,very thick] (a12) to node [black,above] {} (a11);
\draw [-,black,very thick] (a13) to node [black,above] {} (a14);
\draw [-,draw=black!10,very thick] (a14) to node [black,above] {} (a15);
\draw [-,draw=black!10,very thick] (a15) to node [black,above] {} (a16);
\draw [-,draw=black!10,very thick] (a11) to node [black,above] {} (a16);
\draw [-,draw=black!10,very thick] (a16) to node [black,above] {} (a17);

\draw [-,black,very thick] (a17) to node [black,above] {} (a20);
\draw [-,black,very thick] (a20) to node [black,above] {} (a19);
\draw [-,black,very thick] (a19) to node [black,above] {} (a18);


\draw [-,draw=blue!20,very thick,dotted,out=-90,in=-180] (a1) to node [black] {} (a6);
\draw [-,draw=blue!20,very thick,dotted,out=70,in=-110] (a1) to node [black] {} (a5);
\draw [-,draw=blue!20,very thick,dashed,out=110,in=-110] (a3) to node [black] {} (a4);
\draw [-,draw=blue!20,very thick,dashed,out=-50,in=140] (a4) to node [black] {} (a9);
\draw [-,blue,very thick,dotted] (a6) to node [black] {} (a18);
\draw [-,blue,very thick,dotted,out=-50,in=170] (a18) to node [black] {} (a13);
\draw [-,blue,very thick,dotted,out=-170,in=-10] (a13) to node [black] {} (a6);
\draw [-,draw=blue!20,very thick,dotted,out=30,in=-140] (a10) to node [black] {} (a15);
\draw [-,blue,very thick,dotted,out=-170,in=-10] (a10) to node [black] {} (a18);
\draw [-,draw=blue!20,very thick,dashed,out=-10,in=-170] (a12) to node [black] {} (a14);

\node (k1) at (3,0) [circle,draw=black!0,fill=black!0,thick,inner sep=1pt,minimum size=3mm] {$R^1$};
\node (k2) at (10,0) [circle,draw=black!0,fill=black!0,thick,inner sep=1pt,minimum size=3mm] {$R^2$};
\node (k2) at (16,0) [circle,draw=black!0,fill=black!0,thick,inner sep=1pt,minimum size=3mm] {$R^3$};
\end{scope}


\begin{scope}[yshift=-20cm, xshift=25cm]

\begin{scope}[xshift=-2cm]
\node (a6) at (8,10) [circle,draw=black!100,fill=black!50,thick,inner sep=1pt,minimum size=3mm] {};
\node (a7) at (8,14) [circle,draw=black!100,fill=black!10,thick,inner sep=1pt,minimum size=3mm] {};
\end{scope}
\begin{scope}[xshift=2cm]
\node (a13) at (12,14) [circle,draw=black!100,fill=black!50,thick,inner sep=1pt,minimum size=3mm] {};
\end{scope}

\node (a17) at (10,17) [circle,draw=black!100,fill=black!100,thick,inner sep=1pt,minimum size=3mm] {};

\node (a18) at (10,6) [circle,draw=black!100,fill=black!50,thick,inner sep=1pt,minimum size=3mm] {};
\node (a19) at (10,10) [circle,draw=black!100,fill=black!10,thick,inner sep=1pt,minimum size=3mm] {};
\node (a20) at (10,14) [circle,draw=black!100,fill=black!10,thick,inner sep=1pt,minimum size=3mm] {};


\draw [-,black,very thick] (a6) to node [black,above] {} (a7);
\draw [-,black,very thick] (a7) to node [black,above] {} (a17);

\draw [-,black,very thick] (a13) to node [black,above] {} (a17);

\draw [-,black,very thick] (a17) to node [black,above] {} (a20);
\draw [-,black,very thick] (a20) to node [black,above] {} (a19);
\draw [-,black,very thick] (a19) to node [black,above] {} (a18);


\draw [-,blue,very thick,dotted] (a6) to node [black] {} (a18);
\draw [-,blue,very thick,dotted,out=60,in=-120] (a18) to node [black] {} (a13);
\draw [-,blue,very thick,dotted] (a13) to node [black] {} (a6);
\draw [-,blue,very thick,dotted,out=-70,in=70] (a17) to node [black] {} (a18);

\node (k1) at (6,0) [circle,draw=black!0,fill=black!0,thick,inner sep=1pt,minimum size=3mm] {$R^1$};
\node (k2) at (10,0) [circle,draw=black!0,fill=black!0,thick,inner sep=1pt,minimum size=3mm] {$R^2$};
\node (k2) at (13.9,0) [circle,draw=black!0,fill=black!0,thick,inner sep=1pt,minimum size=3mm] {$R^3$};
\end{scope}

\end{tikzpicture}
\end{center}
\caption{Illustration of the proof of Lemma~\ref{lem:cross-rounding}. 
Top Left: A tree $T$ and the support of $z$ are shown. Dashes and dotted links represent 
integral and half-integral links, respectively. Top Right: The set $E_\lambda$ is shown, for $\lambda=2$,
as well as all in-links partially covering $E_{\lambda}$. Bottom Left: The edges of the tree $T'$ and
the cross-links partially covering $E[T']$. Bottom Right: The tree $T'$ and the support of the 
solution $y'$, obtained by contracting $E_\lambda$. The instance is star-shaped.}
\label{fig:star-shaped-rounding}
\end{figure}

We apply Lemma~\ref{lem:cross-rounding} to round $z$ with $\lambda = 3 + \sqrt{5}$, 
whenever $\frac{c^\intercal z^{cr}}{c^\intercal z} \geq \alpha^*$, for some 
$\alpha^* \in (0,1)$ that we determine later. The values of $\lambda$ and $\alpha^*$
are simply chosen to minimize the overall approximation guarantee. 

We proceed to the second rounding procedure, that is used to round $z$ 
when the ratio $\frac{c^\intercal z^{cr}}{c^\intercal z}$ is small, namely
when $\frac{c^\intercal z^{cr}}{c^\intercal z} < \alpha^*$.

\subsubsection{Second Rounding Procedure: Nearly Decomposable Pairs}

Our second rounding procedure proceeds in two steps. First, 
each cross-link $\ell = uv \in L$ is replaced in the fractional
solution $z$ with the two shadows $\ell_u = ur$ and $\ell_v = vr$,
by adding $z_\ell$ to the $\ell_u$-th and $\ell_v$-th components of $z$, 
and setting $z_\ell$ to zero.
This way, we obtain a solution that has no cross-links in its support.
Formally, create a new solution $y\in \mathbb{R}_{\geq 0}^L$ derived
from $z$ by setting for each $\ell = pq \in L$
$$
y_\ell = \begin{cases} 
				     z_\ell & \mbox{if } \,\,\,  \ell\in \supp{z^{in}}, \,\, r\not\in \ell \\ 
				     z_\ell + \sum_{\ell' \in \supp{z^{cr}},\,\, q\in \ell'} z_{\ell'} & \mbox{if } \,\,\, p=r, \, q\in V[T]\setminus \{r\} \\
				     0 & \mbox{otherwise.} \\
\end{cases}
$$
Clearly, $c^\intercal y = c^\intercal z^{in} + 2c^\intercal z^{cr}$ holds, as the costs 
of any link is at least the cost of any of its shadows. Next, split the tree $T$ into 
subtrees $\bar R^1, \cdots, \bar R^m$, by separating $T$ at the root $r$ (note that
$R^j \neq \bar R^j$, as each $\bar R_j$ also contains the root $r$). Now, since $y$ contains
no cross-links in its support, it is a union of $m$ disjoint solutions, one for each 
subtree $\bar R^j$. The latter partition of $(T,z)$ into $m$ parts concludes the first step.

We can henceforth focus on a single part corresponding to a subtree 
$\bar R\in \{\bar R^1,\cdots, \bar R^m\}$, and present a rounding 
procedure for its corresponding part of $y$, namely for $\bar y$, defined as
$$
\bar y_\ell = \begin{cases} 
				     y_\ell & \mbox{if } \,\,\,  \ell\in V[\bar R] \times V[\bar R] \\ 
				     0 & \mbox{otherwise,} \\
\end{cases}
$$
for $\ell \in L$. The union of the obtained solutions comprises a WTAP solution for $T$. 
In the following lemma we show how to exploit the bundle constraints to prove that
$\bar y$ can be rounded up with practically no loss in terms of cost. We charge some of the
cost to the early compound nodes and use the fact that $(T,z)$ is $\beta$-simple for 
$\beta = \frac{48M}{\epsilon^2}$.

\begin{lemma}\label{lem:bundle-rounding-property}
 If $x$ is an optimal solution to $\mathrm{LP}_\gamma$ for $\gamma \geq \frac{200M}{\epsilon^2}$, then
$$
\opt{E[\bar R]} \leq c^\intercal \bar y + \sum_{u\in V[\bar R]} s_u.
$$
\end{lemma}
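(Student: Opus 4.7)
The plan is to find a $\gamma$-bundle $B\subseteq E[G]$ with $E[\bar R]\subseteq B$, apply the corresponding bundle constraint of $\mathrm{LP}_\gamma$, and bound the resulting sum from above by $c^\intercal\bar y+\sum_{u\in V[\bar R]}s_u$.  The natural choice of $B$ is the edge set of the ``uncontracted'' image of $\bar R$ in $G$, namely
\[
 E[\tilde R]\;=\;E[\bar R]\;\cup\;\bigcup_{u\in V[\bar R]\cap V^{cp}} E^h_u,
\]
where $E^h_u$ is the connected subtree of heavy edges whose contraction produced $u$; in particular $E[\bar R]\subseteq E[\tilde R]$, so $\opt{E[\bar R]}\leq \opt{E[\tilde R]}$.

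The first step is to prove that $E[\tilde R]\in \mathcal{B}_\gamma$.  Since $(T,z)$ is $\beta$-simple with $\beta=48M/\epsilon^2$, $\bar R$ has at most $\beta+1$ leaves in $\bar G$.  For each compound node $u\in V[\bar R]\cap V^{cp}$ let $L_0^u\subseteq L_0$ be the set of links whose contraction produced $u$; since every link has cost at least $1$, $|L_0^u|\leq s_u$, and since each leaf of the underlying subtree $S_u$ must be an endpoint of some link in $L_0^u$, $S_u$ has at most $2s_u$ leaves.  Therefore $\tilde R$ has at most $(\beta+1)+2\sum_u s_u$ leaves, and combined with the $\beta$-simplicity bound on fractional cost of the $R^j$'s and the local rounding estimate $s_u\leq \epsilon\sum_{\ell\in\cov{E^h_u}}c_\ell x_\ell$ implicit in the construction of $L_0$ (Proposition~\ref{prop:simple_rounding} applied to $(\epsilon/2)x$ componentwise), this is enough to write $E[\tilde R]$ as a union of at most $\gamma$ paths in $G$ whenever $\gamma\geq 200M/\epsilon^2$.

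With $E[\tilde R]\in \mathcal{B}_\gamma$ the bundle constraint yields $\opt{E[\tilde R]}\leq \sum_{\ell\in\cov{E[\tilde R]}}c_\ell x_\ell$, and it remains to prove the key inequality $\sum_{\ell\in\cov{E[\tilde R]}}c_\ell x_\ell\leq c^\intercal\bar y+\sum_u s_u$.  I would split the sum according to whether the link's $G$-path meets $E[\bar R]$ (type (i)) or is entirely inside some $E^h_u$ with $u\in V[\bar R]\cap V^{cp}$ (type (ii)).  For (i), Lemma~\ref{lem:weight-presenving} applied to $F=E[\bar R]\subseteq E[T]$ passes the sum from $x$ to $z$; unfolding the definition of $\bar y$ (which replaces each cross-link $\ell=uv$ with $u\in R^i$ by the shadows $ur$ and $rv$ into the root) identifies the $z$-sum with $c^\intercal\bar y$ plus a non-negative residual $\sum_{\ell=uv\in L^{cr},\,u\in R^i}(c_\ell-c_{ur})z_\ell$ reflecting the shadow-cost drops.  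For (ii), the local rounding estimate above charges the $x$-mass of each compound-internal link back to its corresponding $s_u$.

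The main obstacle I anticipate is this last joint charging step: simultaneously absorbing both the cross-link residual from (i) and the $E^h$-internal contribution from (ii) into the single $\sum_{u\in V[\bar R]}s_u$ budget.  This is where the optimality of $x$ in $\mathrm{LP}_\gamma$, the slack between $\gamma\geq 200M/\epsilon^2$ and $\beta+1\approx 48M/\epsilon^2$, the thin-coverage property $z(\cov{e})\leq 2/\epsilon$, and the $\beta$-simplicity of $(T,z)$ must all combine to make the final bookkeeping go through.
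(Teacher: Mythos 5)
Your proposal goes down a genuinely different, and unfortunately flawed, route. The key difference from the paper is that you replace $E[\bar R]$ by its ``uncontracted image'' $E[\tilde R]=E[\bar R]\cup\bigcup_u E^h_u$ and try to show that \emph{this} set is a $\gamma$-bundle. The paper never does this: it observes that $E[\bar R]$ is already a subset of $E[G]$ (since $E[\bar G]=E[G]\setminus E^h$), and so one can ask directly whether $E[\bar R]$ -- a possibly \emph{disconnected} edge set in $G$ -- is a $\gamma$-bundle of $G$, sidestepping the heavy edges entirely. The paper's proof first case-splits on the number of early compound nodes in $\bar R$: if there are at least $\frac{100M}{\epsilon^2}$ of them, then $\sum_{u\in V[\bar R]}s_u\geq\frac{100M}{\epsilon^2}\geq\opt{E[\bar R]}$ (using $\opt{E[\bar R]}\leq 2c^\intercal\bar y\leq\frac{100M}{\epsilon^2}$) and the budget term alone finishes the job; otherwise it splits $\bar R$ at degree-$\geq 3$ nodes into at most $\frac{100M}{\epsilon^2}$ maximal paths, observes that each early compound node can lie in the interior of at most one such path, and concludes that $E[\bar R]$ splits into at most $\frac{200M}{\epsilon^2}$ paths of $G$.

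Your proposal has several concrete gaps. First, the claim that $S_u$ has at most $2s_u$ leaves relies on each leaf of $S_u$ being an endpoint of a link in $L_0^u$. This is not justified: a link $\ell\in L_0$ covering a leaf edge of $S_u$ may pass \emph{through} the leaf of $S_u$ without terminating there, and it need not be counted in $s_u$ (only links that become self-loops of $u$ are). One can have $s_u$ small while $S_u$ has many leaves, in which case the bundle $\tilde R$ need not have $\gamma$ paths. Second, the ``local rounding estimate'' $s_u\leq\epsilon\sum_{\ell\in\cov{E^h_u}}c_\ell x_\ell$ is not a consequence of Proposition~\ref{prop:simple_rounding}; that proposition gives only the global bound $c(L_0)\leq\epsilon c^\intercal x$, not a per-compound-node refinement of it. Third, even granting a bundle bound, your plan has no analogue of the paper's case split: when $\sum_u s_u$ is large there is simply no way to control the size of $\tilde R$, and in that regime you must instead argue that the $\sum_u s_u$ budget already exceeds $\opt{E[\bar R]}$ -- which is exactly the paper's Case~1 and what you are missing. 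Finally, you explicitly flag the joint charging step at the end as unresolved; in the paper there is no such joint charging, precisely because the bundle is $E[\bar R]$ and not $E[\tilde R]$, so the sum $\sum_{\ell\in\cov{E[\bar R]}}c_\ell x_\ell$ involves no links internal to heavy components. The moral is that including the heavy edges in the bundle creates exactly the bookkeeping entanglement you were worried about; leaving them out and working with the (disconnected) edge set $E[\bar R]\subseteq E[G]$ is the move that makes the proof close.
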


\begin{proof}
$(T,z)$ is $\beta$-simple and the thin coverage property is satisfied thus
$
c^\intercal \bar y \leq \frac{48M}{\epsilon^2} + \frac{2M}{\epsilon} \leq \frac{50M}{\epsilon^2}.
$
It follows, from feasibility of $\bar y$ and Proposition~\ref{prop:simple_rounding} that 
$\opt{E[\bar R]} \leq \frac{100M}{\epsilon^2}$. 
Now, if $\bar R$ contains at least $\frac{100M}{\epsilon^2}$ early compound nodes then 
$\sum_{u\in V[\bar R]} s_u \geq \frac{100M}{\epsilon^2} \geq \opt{E[\bar R]}$, and we are done.

In the other case the number of early compound nodes in $\bar R$ is at most $\frac{100M}{\epsilon^2}$.
Our goal is to prove that in this case $E[\bar R]$ is a $\frac{200M}{\epsilon^2}$-bundle in $G$, 
and hence if $\gamma \geq \frac{200M}{\epsilon^2}$, $\mathrm{LP}_{\gamma}$ contains the  constraint
$$
\sum_{\ell \in \cov{E[\bar R]}} c_\ell x_\ell \geq \opt{E[\bar R]}.
$$
Since $(T,z)$ is $\beta$-simple, the number of leaves in $\bar R$ is at most 
$\frac{48M}{\epsilon^2} + 1 \leq \frac{50M}{\epsilon^2}$. Let $W \subseteq V[\bar R]$
be the set of nodes with degree at least three in $\bar R$. Since the
number of leaves is at most $\frac{50M}{\epsilon^2}$, also 
$|W|\leq \frac{50M}{\epsilon^2}$ holds. Let 
$Q_1, \cdots, Q_t$ be the $t\leq 2\cdot \frac{50M}{\epsilon^2} = \frac{100M}{\epsilon^2}$
paths obtained by splitting $\bar R$ at the nodes in $W$ (see Figure~\ref{fig:bundle-rounding}).


Now, each path $Q_i$ is a union of paths of $G$, separated by components contracted 
in the first phase of the algorithm, when we contracted the edges in $E^h$, that 
were covered by $L_0$. More precisely, if $Q_i$ is a union of $b\in \mathbb{Z}_{\geq 1}$ paths in $G$, then these
paths are subpaths of $Q_i$, separated by $b-1$ early compound nodes. Furthermore,
since we designed $Q_1, \cdots, Q_t$ to not have nodes of degree larger than two 
in the interior, each early compound node can lie in the interior of at most one path 
of the decomposition. 
Since the total number of early compound nodes is at most $\frac{100M}{\epsilon^2}$,
$\bar R$ is indeed a union of at most $\frac{200M}{\epsilon^2}$ paths in $G$, 
implying that $E[\bar R]$ is a $\frac{200M}{\epsilon^2}$-bundle of $G$, as required.
Figure~\ref{fig:bundle-rounding} illustrates this argument.

Consequently, it follows from Lemma~\ref{lem:weight-presenving} and the fact that $E[\bar R]\subseteq E[T]$ that
$$
c^\intercal \bar y = 
\sum_{\ell \in \cov{E[\bar R]}} c_\ell \bar y_\ell \geq \sum_{\ell \in \cov{E[\bar R]}} c_\ell z_\ell \geq
\sum_{\ell \in \cov{E[\bar R]}} c_\ell x_\ell \geq \opt{E[\bar R]},
$$
which concludes the proof of the lemma.

\end{proof}

\begin{figure}[h]
\begin{center}
\begin{tikzpicture}[scale=0.35]

\node (a1) at (3,0) [circle,draw=black!10,fill=black!10,thick,inner sep=1pt,minimum size=3mm] {};
\node (a2) at (5,3) [circle,draw=black!10,fill=black!10,thick,inner sep=1pt,minimum size=3mm] {};
\node (a3) at (2,6) [circle,draw=black!100,fill=black!10,thick,inner sep=1pt,minimum size=3mm] {};
\node (a4) at (3,9) [circle,draw=black!100,fill=black!10,thick,inner sep=1pt,minimum size=3mm] {};
\node (a5) at (0,11) [circle,draw=black!100,fill=black!10,thick,inner sep=1pt,minimum size=3mm] {};
\node (a6) at (7,0) [circle,draw=black!10,fill=black!10,thick,inner sep=1pt,minimum size=3mm] {};
\node (a7) at (9,3) [circle,draw=black!100,fill=black!10,thick,inner sep=1pt,minimum size=3mm] {};
\node (a8) at (7,6) [circle,draw=black!10,fill=black!10,thick,inner sep=1pt,minimum size=3mm] {};
\node (a9) at (6,9) [circle,draw=black!10,fill=black!10,thick,inner sep=1pt,minimum size=3mm] {};
\node (a10) at (9,11) [circle,draw=black!10,fill=black!10,thick,inner sep=1pt,minimum size=3mm] {};
\node (a11) at (10,0) [circle,draw=black!100,fill=black!10,thick,inner sep=1pt,minimum size=3mm] {};
\node (a12) at (12,3) [circle,draw=black!10,fill=black!10,thick,inner sep=1pt,minimum size=3mm] {};
\node (a13) at (14,0) [circle,draw=black!10,fill=black!10,thick,inner sep=1pt,minimum size=3mm] {};
\node (a14) at (12,6) [circle,draw=black!10,fill=black!10,thick,inner sep=1pt,minimum size=3mm] {};
\node (a15) at (12,9) [circle,draw=black!100,fill=black!10,thick,inner sep=1pt,minimum size=3mm] {};
\node (a16) at (12,12) [circle,draw=black!100,fill=black!10,thick,inner sep=1pt,minimum size=3mm] {};
\node (a17) at (18,2) [circle,draw=black!10,fill=black!10,thick,inner sep=1pt,minimum size=3mm] {};
\node (a18) at (15,4) [circle,draw=black!10,fill=black!10,thick,inner sep=1pt,minimum size=3mm] {};
\node (a19) at (18,6) [circle,draw=black!10,fill=black!10,thick,inner sep=1pt,minimum size=3mm] {};
\node (a20) at (15,8) [circle,draw=black!10,fill=black!10,thick,inner sep=1pt,minimum size=3mm] {};
\node (a21) at (18,10) [circle,draw=black!10,fill=black!10,thick,inner sep=1pt,minimum size=3mm] {};
\node (a22) at (14,11) [circle,draw=black!100,fill=black!10,thick,inner sep=1pt,minimum size=3mm] {};

\draw [-,black,very thick,dashed] (a1) to node [black] {} (a2);
\draw [-,black,very thick,dashed] (a2) to node [black] {} (a6);
\draw [-,black,very thick,dashed] (a2) to node [black] {} (a8);
\draw [-,black,very thick] (a3) to node [black] {} (a4);
\draw [-,black,very thick] (a4) to node [black] {} (a5);
\draw [-,black,very thick] (a4) to node [black] {} (a9);
\draw [-,black,very thick] (a7) to node [black] {} (a8);
\draw [-,black,very thick,dashed] (a8) to node [black] {} (a9);
\draw [-,black,very thick,dashed] (a9) to node [black] {} (a10);
\draw [-,black,very thick] (a7) to node [black] {} (a11);
\draw [-,black,very thick] (a7) to node [black] {} (a12);
\draw [-,black,very thick,dashed] (a12) to node [black] {} (a13);
\draw [-,black,very thick,dashed] (a12) to node [black] {} (a14);
\draw [-,black,very thick] (a14) to node [black] {} (a15);
\draw [-,black,very thick] (a15) to node [black] {} (a16);
\draw [-,black,very thick] (a15) to node [black] {} (a22);
\draw [-,black,very thick,dashed] (a12) to node [black] {} (a18);
\draw [-,black,very thick,dashed] (a14) to node [black] {} (a20);
\draw [-,black,very thick,dashed] (a17) to node [black] {} (a18);
\draw [-,black,very thick,dashed] (a19) to node [black] {} (a20);
\draw [-,black,very thick,dashed] (a20) to node [black] {} (a21);

\draw [-,blue,very thick, out=80,dotted, in=-80] (a17) to node [black] {} (a19);
\draw [-,blue,very thick, out=60,dotted, in=-60] (a17) to node [black] {} (a21);

\draw [-,blue,very thick, out=80,dotted, in=-120] (a1) to node [black] {} (a10);
\draw [-,blue,very thick, out=70,dotted, in=-90] (a6) to node [black] {} (a10);

\draw [-,blue,very thick, out=10,dotted, in=-130] (a13) to node [black] {} (a17);

\begin{scope}[xshift=25cm]

\node (a3) at (2,6) [circle,draw=black!100,fill=black!10,thick,inner sep=1pt,minimum size=3mm] {};
\node (a4) at (3,9) [circle,draw=black!100,fill=black!10,thick,inner sep=1pt,minimum size=3mm] {};
\node (a5) at (0,11) [circle,draw=black!100,fill=black!10,thick,inner sep=1pt,minimum size=3mm] {};
\node (a7) at (9,3) [circle,draw=black!100,fill=black!10,thick,inner sep=1pt,minimum size=3mm] {};
\node (a11) at (10,0) [circle,draw=black!100,fill=black!10,thick,inner sep=1pt,minimum size=3mm] {};
\node (c1) at (7,7) [circle,draw=black!100,fill=black!100,thick,inner sep=1pt,minimum size=4mm] {};
\node (a15) at (12,9) [circle,draw=black!100,fill=black!10,thick,inner sep=1pt,minimum size=3mm] {};
\node (a16) at (12,12) [circle,draw=black!100,fill=black!10,thick,inner sep=1pt,minimum size=3mm] {};
\node (c2) at (11,6) [circle,draw=black!100,fill=black!100,thick,inner sep=1pt,minimum size=4mm] {};
\node (a22) at (14,11) [circle,draw=black!100,fill=black!10,thick,inner sep=1pt,minimum size=3mm] {};

\draw [-,black,very thick] (a3) to node [black] {} (a4);
\draw [-,black,very thick] (a4) to node [black] {} (a5);
\draw [-,black,very thick] (a7) to node [black] {} (a11);
\draw [-,black,very thick] (a15) to node [black] {} (a16);
\draw [-,black,very thick] (a15) to node [black] {} (a22);

\draw [-,black,very thick] (a4) to node [black] {} (c1);
\draw [-,black,very thick] (a7) to node [black] {} (c1);

\draw [-,black,very thick] (a7) to node [black] {} (c2);
\draw [-,black,very thick] (a15) to node [black] {} (c2);

\end{scope}

\begin{scope}[xshift=0cm, yshift=-16cm]

\node (a3) at (2,6) [circle,draw=black!100,fill=black!10,thick,inner sep=1pt,minimum size=3mm] {};

\node (a4a3) at (2.66,8) [circle,draw=black!100,fill=black!10,thick,inner sep=1pt,minimum size=3mm] {};
\node (a4a5) at (2.66,9.25) [circle,draw=black!100,fill=black!10,thick,inner sep=1pt,minimum size=3mm] {};
\node (a4c1) at (3.6,8.6) [circle,draw=black!100,fill=black!10,thick,inner sep=1pt,minimum size=3mm] {};

\node (a5) at (0,11) [circle,draw=black!100,fill=black!10,thick,inner sep=1pt,minimum size=3mm] {};

\node (a7a11) at (9.1,2.66) [circle,draw=black!100,fill=black!10,thick,inner sep=1pt,minimum size=3mm] {};
\node (a7c1) at (8.66,4) [circle,draw=black!100,fill=black!10,thick,inner sep=1pt,minimum size=3mm] {};
\node (a7c2) at (9.75,3.66) [circle,draw=black!100,fill=black!10,thick,inner sep=1pt,minimum size=3mm] {};

\node (a11) at (10,0) [circle,draw=black!100,fill=black!10,thick,inner sep=1pt,minimum size=3mm] {};
\node (c1) at (7,7) [circle,draw=black!100,fill=black!100,thick,inner sep=1pt,minimum size=4mm] {};

\node (a15a16) at (12,9.66) [circle,draw=black!100,fill=black!10,thick,inner sep=1pt,minimum size=3mm] {};
\node (a15a22) at (12.85,9) [circle,draw=black!100,fill=black!10,thick,inner sep=1pt,minimum size=3mm] {};
\node (a15c2) at (11.75,8.5) [circle,draw=black!100,fill=black!10,thick,inner sep=1pt,minimum size=3mm] {};

\node (a16) at (12,12) [circle,draw=black!100,fill=black!10,thick,inner sep=1pt,minimum size=3mm] {};
\node (c2) at (11,6) [circle,draw=black!100,fill=black!100,thick,inner sep=1pt,minimum size=4mm] {};
\node (a22) at (14,11) [circle,draw=black!100,fill=black!10,thick,inner sep=1pt,minimum size=3mm] {};

\draw [-,black,very thick] (a3) to node [black] {} (a4a3);
\draw [-,black,very thick] (a4a5) to node [black] {} (a5);
\draw [-,black,very thick] (a7a11) to node [black] {} (a11);
\draw [-,black,very thick] (a15a16) to node [black] {} (a16);
\draw [-,black,very thick] (a15a22) to node [black] {} (a22);

\draw [-,black,very thick] (a4c1) to node [black] {} (c1);
\draw [-,black,very thick] (a7c1) to node [black] {} (c1);

\draw [-,black,very thick] (a7c2) to node [black] {} (c2);
\draw [-,black,very thick] (a15c2) to node [black] {} (c2);

\end{scope}

\begin{scope}[xshift=25cm, yshift=-16cm]

\node (a3) at (2,6) [circle,draw=black!100,fill=black!10,thick,inner sep=1pt,minimum size=3mm] {};

\node (a4a3) at (2.66,8) [circle,draw=black!100,fill=black!10,thick,inner sep=1pt,minimum size=3mm] {};
\node (a4a5) at (2.66,9.25) [circle,draw=black!100,fill=black!10,thick,inner sep=1pt,minimum size=3mm] {};
\node (a4c1) at (3.6,8.6) [circle,draw=black!100,fill=black!10,thick,inner sep=1pt,minimum size=3mm] {};

\node (a5) at (0,11) [circle,draw=black!100,fill=black!10,thick,inner sep=1pt,minimum size=3mm] {};

\node (a7a11) at (9.1,2.66) [circle,draw=black!100,fill=black!10,thick,inner sep=1pt,minimum size=3mm] {};
\node (a7c1) at (8.66,4) [circle,draw=black!100,fill=black!10,thick,inner sep=1pt,minimum size=3mm] {};
\node (a7c2) at (9.75,3.66) [circle,draw=black!100,fill=black!10,thick,inner sep=1pt,minimum size=3mm] {};

\node (a11) at (10,0) [circle,draw=black!100,fill=black!10,thick,inner sep=1pt,minimum size=3mm] {};

\node (c1UP) at (6.5,7.33) [circle,draw=black!100,fill=black!100,thick,inner sep=1pt,minimum size=3mm] {};
\node (c1DOWN) at (7.33,6.33) [circle,draw=black!100,fill=black!100,thick,inner sep=1pt,minimum size=3mm] {};

\node (a15a16) at (12,9.66) [circle,draw=black!100,fill=black!10,thick,inner sep=1pt,minimum size=3mm] {};
\node (a15a22) at (12.85,9) [circle,draw=black!100,fill=black!10,thick,inner sep=1pt,minimum size=3mm] {};
\node (a15c2) at (11.75,8.5) [circle,draw=black!100,fill=black!10,thick,inner sep=1pt,minimum size=3mm] {};

\node (a16) at (12,12) [circle,draw=black!100,fill=black!10,thick,inner sep=1pt,minimum size=3mm] {};

\node (c2UP) at (11.2,6.5) [circle,draw=black!100,fill=black!100,thick,inner sep=1pt,minimum size=3mm] {};
\node (c2DOWN) at (10.8,5.33) [circle,draw=black!100,fill=black!100,thick,inner sep=1pt,minimum size=3mm] {};

\node (a22) at (14,11) [circle,draw=black!100,fill=black!10,thick,inner sep=1pt,minimum size=3mm] {};

\draw [-,black,very thick] (a3) to node [black] {} (a4a3);
\draw [-,black,very thick] (a4a5) to node [black] {} (a5);
\draw [-,black,very thick] (a7a11) to node [black] {} (a11);
\draw [-,black,very thick] (a15a16) to node [black] {} (a16);
\draw [-,black,very thick] (a15a22) to node [black] {} (a22);

\draw [-,black,very thick] (a4c1) to node [black] {} (c1UP);
\draw [-,black,very thick] (a7c1) to node [black] {} (c1DOWN);

\draw [-,black,very thick] (a7c2) to node [black] {} (c2DOWN);
\draw [-,black,very thick] (a15c2) to node [black] {} (c2UP);
\end{scope}

\end{tikzpicture}
\end{center}
\caption{The proof of Lemma~\ref{lem:bundle-rounding-property}. Top Left: The part of the 
original tree $G$ corresponding to $\bar R$. The dashed edges are contracted, as they are 
part of $E^h$, resulting in two early compound nodes. The dotted links are included 
in $L_0$. Top Right: The subtree $\bar R$. The large full nodes are the early compound
nodes. Bottom Left: Splitting $\bar R$ into $7$ paths by separating at the nodes in $W$.
Bottom Right: The resulting paths are further subdivided at early compound nodes to 
obtain $9$ paths, forming a $9$-bundle in the original tree $G$.}\label{fig:bundle-rounding}
\end{figure}
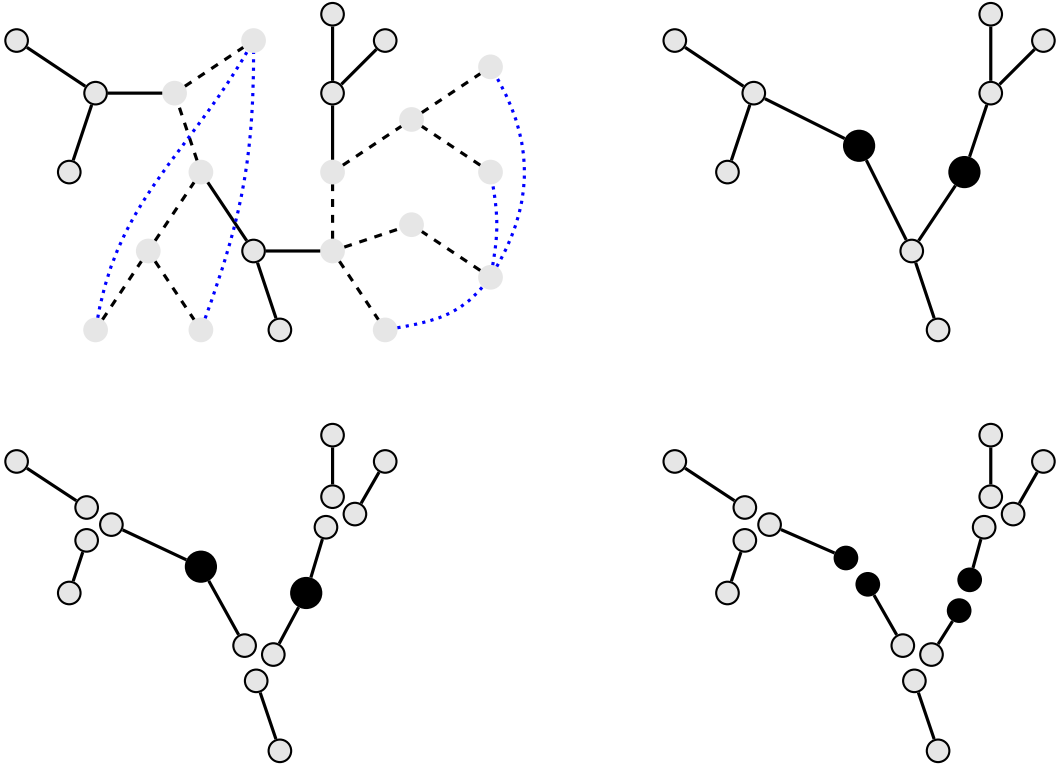

We have proved that the optimal cost of covering $\bar R$ is at most $c^\intercal \bar y$, up to
a term that depends on the cost incurred by contracting links of $L_0$ into early compound
nodes that are contained in $V[\bar R]$. Now, since $\opt{\bar R} \leq \frac{100M}{\epsilon^2} = O(M)$,
we can find an optimal solution by simple enumeration in time $n^{O(M)}$ and report it
as a set of links covering $\bar R$ (in fact, these values were already computed in the construction
of $\mathrm{LP}_\gamma$).
Repeating this for all $\bar R\in \{\bar R^1,\cdots, \bar R^m\}$
concludes the rounding procedure for the pair $(T,z)$. We summarize in the following lemma, 
which is a direct consequence of Lemma~\ref{lem:bundle-rounding-property} and the previous argument.

\begin{lemma}\label{lem:bundle-rounding}
There is an algorithm that given $T$ and $z$, computes in time $n^{O(M)}$ a
set of links $S\subseteq L$ that covers $E[T]$ with cost at most
$$
c(S) \leq c^\intercal z^{in} + 2c^\intercal z^{cr} + \sum_{u\in V[T]} s_u.
$$
\end{lemma}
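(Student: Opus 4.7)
The plan is to carry out the two-step rounding procedure described in the two paragraphs preceding Lemma~\ref{lem:bundle-rounding-property} and then combine the per-subtree estimates. In the first step I build $y$ from $z$ via the shadow substitution appearing in the displayed formula defining $y_\ell$: every cross-link $\ell=uv$ in $\supp{z^{cr}}$ is replaced by its two up-link shadows $ur$ and $vr$. Shadow completeness gives $c_{ur},c_{vr}\le c_\ell$, so this substitution charges each cross-link mass at most twice, yielding $c^\intercal y \le c^\intercal z^{in}+2\,c^\intercal z^{cr}$. Since $\supp{y}$ contains no cross-link, $y$ decomposes as $y=\sum_{j=1}^m \bar y^{\,j}$, where $\bar y^{\,j}$ is the restriction of $y$ to links with both endpoints in $V[\bar R^j]$; each $\bar y^{\,j}$ is feasible for WTAP on $\bar R^j$, because each shadow pair $(ur,vr)$ covers in its two subtrees exactly the portions of the original path $P_\ell$ that lived there.

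For each $j\in[m]$ I apply Lemma~\ref{lem:bundle-rounding-property} to the pair $(\bar R^j,\bar y^{\,j})$ to obtain
$$
\opt{E[\bar R^j]} \;\le\; c^\intercal \bar y^{\,j} \;+\; \sum_{u\in V[\bar R^j]} s_u,
$$
and then actually produce a minimum-cost link cover $S^j$ of $E[\bar R^j]$ by brute-force enumeration. The argument in the proof of Lemma~\ref{lem:bundle-rounding-property} shows $\opt{E[\bar R^j]}\le 100M/\epsilon^2$, and since $c_\ell\ge 1$ for every link, such a cover uses at most $100M/\epsilon^2$ links; enumerating all link sets of this size takes $n^{O(M)}$ time, and in fact these optimal covers are already available from the construction of $\mathrm{LP}_\gamma$ in Appendix~\ref{apx:solve-lp}. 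Finally I output $S=\bigcup_j S^j$, which covers $E[T]=\bigcup_j E[\bar R^j]$.

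Summing the per-subtree bounds and using $\sum_j c^\intercal \bar y^{\,j} = c^\intercal y$ gives
$$
c(S) \;\le\; \sum_j \opt{E[\bar R^j]} \;\le\; c^\intercal y \;+\; \sum_j \sum_{u\in V[\bar R^j]} s_u,
$$
which together with the first-step estimate on $c^\intercal y$ yields exactly the inequality claimed in the lemma, once the double sum is identified with $\sum_{u\in V[T]}s_u$. The only delicate point, and thus the main (purely administrative) obstacle in the write-up, is this last identification: the root $r$ lies in every $\bar R^j$, so without care $s_r$ would be counted with multiplicity $m$. This is handled by attributing $s_r$ to a single subtree when invoking Lemma~\ref{lem:bundle-rounding-property}; the lemma's conclusion remains valid for every other subtree since its proof either uses the bundle-LP inequality (which does not reference $s_r$) or operates in the regime where the count of early compound nodes in $\bar R^j\setminus\{r\}$ already exceeds $100M/\epsilon^2$, so that $\sum_{u\in V[\bar R^j]\setminus\{r\}} s_u$ alone already dominates $\opt{E[\bar R^j]}$.
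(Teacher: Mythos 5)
Your proof is correct and follows the same two-step route as the paper: form $y$ from $z$ by replacing each cross-link with its two up-link shadows through $r$ (giving $c^\intercal y\le c^\intercal z^{in}+2c^\intercal z^{cr}$, where the paper in fact writes an equality but only the inequality is needed), decompose $y$ across the subtrees $\bar R^1,\dots,\bar R^m$, and solve each $\bar R^j$ to optimality by enumeration over at most $100M/\epsilon^2$ links, invoking Lemma~\ref{lem:bundle-rounding-property} to bound each $\opt{E[\bar R^j]}$.

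The one point where you go beyond the paper is worth flagging: the paper simply declares Lemma~\ref{lem:bundle-rounding} ``a direct consequence'' of Lemma~\ref{lem:bundle-rounding-property}, but the naive sum $\sum_j\sum_{u\in V[\bar R^j]} s_u$ equals $\sum_{u\in V[T]} s_u + (m-1)s_r$, which overshoots the claimed bound whenever the $\beta$-center $r$ happens to be an early compound node (nothing in the construction of the $\beta$-center in Lemma~\ref{lem:simple-trees} excludes this). Your fix is the right one: invoke the unmodified Lemma~\ref{lem:bundle-rounding-property} for one designated subtree, and observe that for every other subtree the proof goes through with the sum $\sum_{u\in V[\bar R^j]\setminus\{r\}} s_u$, because either that restricted sum already reaches $100M/\epsilon^2\ge\opt{E[\bar R^j]}$, or the number of early compound nodes in $\bar R^j$ is small enough that the bundle-constraint branch of the proof applies and needs no $s_u$ terms at all. (The case split on the restricted count may require a $+1$ of slack in the choice of $\gamma$, but the paper's constants already carry that slack.) So your argument is essentially the paper's, sharpened to make the asserted ``direct consequence'' actually hold.
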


\subsection{The Overall Algorithm}

The algorithm starts by computing an optimal solution $x$ of 
$\mathrm{LP}_\gamma$ with $\gamma = \frac{200M}{\epsilon^2}$. For
constant $M$, this LP has polynomial size, and each right hand side
of a $\gamma$-bundle constraint can be computed using Lemma~\ref{lem:few-leaves}. 
Then, the first phase of the rounding
algorithm computes the set of links $L_0 \cup L_1$ and
a decomposition of $(\bar G, x)$ into $\beta$-simple pairs 
$(T^1, z^1), \cdots, (T^k, z^k)$ for $\beta = \frac{48M}{\epsilon^2}$.
This concludes the first phase.

In the second phase, each pair $(T^j, z^j)$ is rounded using either
Lemma~\ref{lem:cross-rounding}, or Lemma~\ref{lem:bundle-rounding}. Concretely,
for each $j\in [k]$ the algorithm computes the ratio
$
\alpha^j = \frac{c^\intercal {z^j}^{cr}}{c^\intercal z^j} 
$
and applies Lemma~\ref{lem:cross-rounding} with $\lambda = 3 + \sqrt{5}$ if 
$\alpha^j \geq \alpha^*$, and Lemma~\ref{lem:bundle-rounding}, if $\alpha^j < \alpha^*$,
for
$$
\alpha^* = 1 - \frac{2(5-2\sqrt{5})}{25+2\sqrt{5}} \approx 0.96418,
$$
which is the value for which Lemmas~\ref{lem:cross-rounding} and~\ref{lem:bundle-rounding}
give precisely the same approximation, when $\lambda = 3 + \sqrt{5}$.
We are ready to prove Theorem~\ref{thm:wtap}.

\begin{proof}[Proof of Theorem~\ref{thm:wtap}]
Let $\mathrm{OPT}$ be the value of the optimal solution.
The solution computed by the algorithm is
clearly feasible, and the algorithm runs in polynomial time. It remains to prove the approximation
guarantee.

The cost of the links in $L_0 \cup L_1$ is at most
$2\epsilon c^\intercal x = O(\epsilon) \cdot \mathrm{OPT}$. 
Next, for each $(T,z)$ in the decomposition $(T^1,z^1),\cdots, (T^k,z^k)$, the cost 
incurred for the links covering $T$ is at most
$$
\min \left\{ 2\lambda c^\intercal z^{in} + \frac{4\lambda}{3(\lambda-1)} c^\intercal z^{cr}, \,\,
c^\intercal z^{in} + 2c^\intercal z^{cr} + \sum_{u\in V[T^j]} s_u
\right\} \leq \delta c^\intercal z + \sum_{u\in V[T^j]} s_u,
$$
due to Lemmas~\ref{lem:cross-rounding} and~\ref{lem:bundle-rounding}. The inequality follows
by simple verification, using the definition of the constants $\lambda, \alpha^*$ and $\delta$ and
the fact that $z = z^{in} + z^{cr}$.
Hence, the total cost of links included in the solution in rounding all trees 
in the decomposition is at most
$$
\delta \cdot \sum_{i\in [k]} c^\intercal z^i + \sum_{u\in V[\bar G]} s_u \leq 
(1+O(\epsilon))\delta c^\intercal x \leq (1+O(\epsilon))\delta \cdot \mathrm{OPT}, 
$$
due to Lemma~\ref{lem:weight-increase-decomp} and since 
$\sum_{u\in V[\bar G]} s_u = c(L_0) \leq \epsilon \cdot \mathrm{OPT}$. This concludes the proof.
\end{proof}

\paragraph{Acknowledgments} The author is grateful to Steve Chestnut and 
Akaki Mamageishvili for many valuable comments.

\small
 

\appendix

\section{Solving the Bundle LP}\label{apx:solve-lp}

Since the number of $\gamma$-bundles is $n^{O(\gamma)}$, it follows
that $\mathrm{LP}_\gamma$ can be solved in polynomial time, if for a
single bundle $B\in\mathcal{B}_\gamma$, the optimal value $\opt{B}$ can
be computed efficiently. 

Consider any $\gamma$-bundle $B$. Contract all edges in $E[G]\setminus B$ 
to obtain an equivalent instance that corresponds to covering $B$. Since $B$
is a union of at most $\gamma$ paths, this instance has at most $2\gamma$ leaves.
Hence, to compute $\opt{B}$ it suffices to provide an algorithm that solves an 
arbitrary WTAP instance with $k$ leaves in time $n^{k^{O(1)}}$, which is the content
of the following lemma.

\begin{lemma}\label{lem:few-leaves}
 Let $(G,L,c)$ be a WTAP instance and let $U\subseteq V[G]$ be the set of leaves
in $G$. Then the WTAP instance can be solved to optimality in time 
$n^{k^{O(1)}}$, where $k = |U|$.
\end{lemma}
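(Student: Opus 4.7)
Plan: I would prove the lemma via dynamic programming on $G$, exploiting the fact that trees with few leaves have simple topological skeletons.

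First, I would contract every internal node of $G$ of degree $2$ to obtain a topological skeleton $T'$. A standard counting argument shows that $T'$ has at most $2k - 2$ nodes and at most $2k - 3$ edges, so $G$ is partitioned into at most $2k - 3$ maximal degree-$2$ ``skeletal paths'', one per edge of $T'$. Root $G$ at an arbitrary branching node or leaf, inducing the rooting of $T'$.

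The DP processes $G$ bottom-up. At a node $v$, after deciding all links whose nearest common ancestor lies in the subtree $T_v$, the only information one needs to pass to the ancestor of $v$ is, for each skeletal path lying strictly above $v$, how far up that skeletal path is reached by the highest ``open'' link with lower endpoint in $T_v$. Since there are at most $O(k)$ skeletal paths on the root-to-$v$ path in $T'$, and each coordinate takes one of at most $n+1$ values (a position along the corresponding skeletal path, or ``none''), the DP state at each node has size $n^{O(k)}$.

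The transition at $v$ has two parts. For an internal degree-$2$ node $v$, the state is inherited directly from $v$'s unique child and then updated by enumerating the links whose nearest common ancestor is $v$ itself. For a branching node $v$, the states of the children are first combined (most coordinates agree across children, as they refer to skeletal paths outside any given child's subtree), and then one enumerates subsets of links whose nearest common ancestor is $v$ to add to the partial solution. Each such link is specified by a pair of endpoints lying in two distinct child subtrees of $v$, so there are at most $n^2$ such links to consider, contributing only a polynomial factor. Multiplying over the $O(n)$ DP nodes yields a total running time of $n^{O(k)} = n^{k^{O(1)}}$, as required.

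The main obstacle is arguing that the compact coordinate ``furthest point reached on each ancestor skeletal path'' suffices to recover the optimum. The underlying structural fact is that, along a skeletal path whose internal nodes all have degree two, any ``open'' link from below covers a contiguous prefix starting from the incident branching node, and for the purpose of future coverage decisions only the furthest-reaching such open link per skeletal path matters. Care is also needed to handle links whose endpoints both lie strictly within the same skeletal path: these can be optimally preselected by a preliminary $1$-dimensional interval-cover subroutine run on each skeletal path before invoking the main DP, so that inside the DP one need only consider links whose two endpoints lie in two different skeletal paths.
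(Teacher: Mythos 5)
Your approach (a bottom-up DP over the skeleton) is genuinely different from the paper's, but as written it has a fatal gap in the preprocessing step and further gaps in the DP itself. The claim that links lying within a single skeletal path ``can be optimally preselected by a preliminary $1$-dimensional interval-cover subroutine before invoking the main DP'' is false: within-path and cross-path coverage interact, and one cheap cross-link can render an expensive within-path cover useless. Concretely, take a spider with three legs $r\text{-}a\text{-}a_1$, $r\text{-}b\text{-}b_1$, $r\text{-}c\text{-}c_1$, within-leg links $ra_1$, $rb_1$, $rc_1$ each of cost $10$, and cross-links $a_1b_1$ and $a_1c_1$ each of cost $1$. The optimum uses just the two cross-links at cost $2$, but preselecting within-leg covers already commits you to cost $30$. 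The DP state is also problematic. Tracking, for each skeletal path on the root-to-$v$ path, ``how far up that path is reached by the highest open link with lower endpoint in $T_v$'' is redundant (those reaches are nested segments all containing $v$, so they are determined by a single coordinate, the highest nca) and, more importantly, insufficient: an open link $pq$ with $p\in T_v$ and nca $w$ above $v$ also covers the descending path from $w$ to $q$ in a \emph{sibling} subtree, and your state records nothing about this; when that sibling subtree is processed its DP has no way to know this coverage was offered. Finally, ``enumerates subsets of links whose nca is $v$'' with ``at most $n^2$ such links'' is not a polynomial factor --- there are up to $2^{n^2}$ subsets --- and to make the branching-node transition tractable you would need, but never state, a bound on how many cross-links per pair of paths an optimal solution can contain.

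The paper's proof is shorter and avoids a DP entirely. It splits $G$ at its fewer-than-$k$ nodes of degree at least three into $p = O(k)$ paths, and observes that a minimum-cost solution (after contracting any zero-cost links) contains at most two links joining any given pair of these paths, since among three such links the tree path of one is always contained in the union of the tree paths of the other two. Hence the optimum contains only $O(k^2)$ cross-path links; these are guessed in $n^{O(k^2)}$ time, after which the instance decomposes into independent one-dimensional interval-cover problems on the $p$ paths, each solvable in polynomial time. If you do want to carry out a DP, the natural state is a ``need'' state --- for each leaf direction of $T_v$, the farthest-from-$v$ edge still uncovered --- rather than the ``offer'' state you describe, and you would still want the at-most-two-links-per-pair observation to bound the work at each branching node.
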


\begin{proof}
Let $W \subseteq V[G]$ be the nodes in $G$ with degree at least three.
 Let $Q_1,\cdots, Q_p$ be the set of paths obtained by splitting the tree
at nodes in $W$. Since there are $k$ leaves, the number
of nodes in $W$ is less than $k$, and hence $p = O(k)$. 

Let $S^* \subseteq L$ be any optimal solution. 
Now, for each pair of paths $Q_i, Q_j$, we claim that $S^*$ contains 
at most two links that connect nodes on $Q_i$ to nodes on $Q_j$. Indeed,
if this is not the case, and there are at least three such links, 
then necessarily the path of one link is contained in the union 
of the paths of the other two links, making $S^*$ redundant and contradicting
optimality (we assume that $L$ has no links of cost zero, as these links
can be included up front in any solution, and the covered edges can be contracted).

Consequently, $S^*$ has at most $O(k^2)$ links connecting nodes on different 
paths, and all other links in $S^*$ connect nodes that belong to the same path.

The algorithm starts by guessing the $O(k^2)$ links in $S^*$ that connect 
nodes on different paths. For the correct guess, the problem decomposes 
into a union of $p$ interval covering problems, which can be solved in polynomial
time using dynamic programming. The union of the optimal guess and the
optimal solutions from the $p$ paths comprises the reported optimal solution.
It is easy to verify the running time. This proves the lemma.
\end{proof}

\section{Omitted Proofs}\label{apx:proofs}

\subsection{Proof of Lemma~\ref{lem:up_rounding}}
The lemma follows from the fact that the constraint matrix of the natural 
LP is totally unimodular if all links are up-links, as this implies that 
there exists an integral solution with the same objective function value,
that can be computed by solving the natural LP restricted to the variables
corresponding only to up-links.

To prove this claim
we use the Ghouila-Houri condition on the rows of the constraint matrix. 
Consider any subset of constraints, corresponding to a subset $F\subseteq E[G]$ of edges.
We show that there exists a subset $F^+ \subseteq F$ such that the constraint
obtained by adding all constraints for edges in $F^+$ and subtracting from the
result the sum of the constraints in $F\setminus F^+$ results in a constraint
with coefficients in $\{0,1,-1\}$. This will imply, by the Ghouila-Houri
condition, that the constraint matrix is totally unimodular.

First observe that we can assume that $F=E[G]$, as the constraint matrix
that corresponds to any subset $F\subseteq E[G]$ is simply the constraint 
matrix of the natural LP for the tree obtained from $G$ by contracting
all edges in $E[G]\setminus F$. To this end define $F^+$ to be the set 
of edges that are at an odd distance from $r$, i.e.\ the set of edges $e$,
such that the number of edges different from $e$ on the path connecting 
$r$ to the closest node in $e$ is odd. 

It remains to show that for every link $\ell=uv\in L$, it holds that 
$$
|\{e\in F^+ \,\mid\, \ell\in \cov{e}\}| - |\{e\in F\setminus F^+ \,\mid\, \ell\in \cov{e}\}| \in \{0,1,-1\}.
$$
This, however, holds because $\ell$ is an up-link, so the edges in
$P_{\ell}$ alternate between belonging to $F^+$ and to $F\setminus F^+$.

\subsection{Proof of Lemma~\ref{lem:star-shaped}}

The ``only if'' direction is trivial. To prove the ``if'' direction
consider a solution $S\subseteq L$ that covers all leaf edges. We prove
that it is feasible for the WTAP instance. Consider any edge $e\in E[G]$. We
show that it is covered by $S$. Fix any hub $r\in V[G]$.
There exists some leaf $u\in V[G]$ of the tree such that $e$ lies on 
the $u$-$r$ path in $G$. Consider the link $\ell\in L$ that covers the 
leaf edge of $u$. Since the instance is star-shaped, the path $P_\ell$
is incident to $r$, and hence $e\in P_\ell$, as $P_\ell$ contains all
edges on the $u$-$r$ path in $G$. It follows that $e$ is covered.

\subsection{Proof of Lemma~\ref{lem:weight-presenving}}

The proof follows easily from the fact that in any splitting operation performed
in the algorithm, whenever the fractional assignment of some link $\ell$ covering 
some edges in $F$ is decreased to zero, the fractional assignment of
a shadow $\ell'$ of $\ell$ with the same cost is increases by the same fraction.

\subsection{Proof of Lemma~\ref{lem:weight-increase-decomp}}

Observe that the total number of splittings in the greedy procedure is 
equal to the number of parts in the final decomposition minus one, namely
$k-1$. Due to the thin coverage property, we know that in each splitting the
cost is increased by a total of at most $\frac{2M}{\epsilon}$, so the
total increase satisfies 
$$
\sum_{i\in [k]} c^\intercal z^i - c^\intercal x \leq \frac{2kM}{\epsilon}.
$$ 
Now, since we only perform splitting at $\alpha(M,\epsilon)$-thin edges, 
each pair $(T^j,z^j)$ in the decomposition satisfies 
$c^\intercal z^j \geq \frac{4M}{\epsilon^2}$, hence
$$
\sum_{i\in [k]} c^\intercal z^i \geq \frac{4kM}{\epsilon^2}, 
$$
which implies
$\frac{\sum_{i\in [k]} c^\intercal z^i - c^\intercal x}{c^\intercal x} \leq \epsilon,$ as desired.

\subsection{Proof of Lemma~\ref{lem:simple-trees}}
Recall that $x(\cov{e}) \leq \frac{2}{\epsilon}$ for all edges
$e\in E[\bar G]$. Also, the splitting operation does not increase the 
coverage level of any edge. More precisely, for any pair $(T,z)$ and any
thin edge $e = \{u,v\}\in E[T]$ that are used for splitting in the decomposition
algorithm, it holds that $z(\cov{e'}) = z^u(\cov{e'})$ and 
$z(\cov{e''}) = z^v(\cov{e''})$ for any $e'\in E[T^u]$ and $e''\in E[T^v]$.

Next, recall that pairs $(T,z)$ that comprise the final decomposition have 
no $\frac{4M}{\epsilon^2}$-thin edges. This means that for every edge $e \in E[T]$, there exists 
$u\in e$ such that
\begin{equation}\label{eq:1}
 \sum_{\ell\in L, \,\, \ell \in V[T^u]\times V[T^u]} c_\ell z_\ell < \frac{4M}{\epsilon^2}.
\end{equation}
If there exists an edge $e = \{u,v\}$ for which condition $(\ref{eq:1})$ holds for both 
endpoints $u$ and $v$, then, by the thin coverage property of the solution we have 
$$
c^\intercal z < 2 \cdot \frac{4M}{\epsilon^2} + \sum_{\ell\in \cov{e}} c_\ell z_\ell <
\frac{8M}{\epsilon^2} + \frac{2M}{\epsilon} \leq \frac{10M}{\epsilon^2},
$$
showing that any node $u\in V[T]$ satisfies the first part of the $\beta'$-simple property 
for $\beta' = \frac{10M}{\epsilon^2}$ in this case. In the other
case, every edge $e = \{u,v\}$ has \emph{exactly} one endpoint satisfying $(\ref{eq:1})$.
Direct each edge $\{u,v\}$ from the node corresponding to the subtree satisfying
$(\ref{eq:1})$ to the node corresponding to the subtree not satisfying $(\ref{eq:1})$.
Now, let $u\in V[T]$ be any node with out-degree zero. By definition
of the directed tree, the removal of this node results in 
subtrees $K_1, \cdots, K_p$ satisfying
$$
\sum_{\ell \in L, \,\, \ell \in V[K_j]\times V[K_j]} c_\ell z_\ell \leq \frac{4M}{\epsilon^2}
$$
for all $j\in [p]$, proving existence of a node certifying the first part 
of the $\beta'$-simple property for this case as well. 

Until now we proved the existence of a node $u\in V[T]$, the removal of which
results in a set of trees $K_1, \cdots, K_p$ satisfying the first part 
of the $\beta'$-simple property, for $\beta' = \frac{10M}{\epsilon^2}$. 
We claim that any such a node is a $\frac{48M}{\epsilon^2}$-center.
Since $\frac{48M}{\epsilon^2} > \frac{10M}{\epsilon^2}$, the first part
of the property clearly holds, so it remains to prove that each subtree
$K_j$ has at most $\frac{48M}{\epsilon^2}$ leaves. Assume towards contradiction 
that for some $j\in [p]$, the tree $K_j$ has more than $\frac{48M}{\epsilon^2}$ 
leaves. Since each link has cost at least $1$, and one link can cover at most
two leaf edges, the optimal solution to the WTAP instance restricted to the
subtree $K_j$ has cost greater than 
$\frac{1}{2} \cdot \frac{48M}{\epsilon^2} = \frac{24M}{\epsilon^2}$.
Consequently, due to Proposition~\ref{prop:simple_rounding}, the cost of any 
fractional solution to the natural LP relaxation on this instance has cost
greater than $\frac{1}{2} \cdot \frac{24M}{\epsilon^2} = \frac{12M}{\epsilon^2}$.
This leads to a contradicting as follows. Let $e'$ be the edge connecting $u$
to the subtree $K_j$. Then, $z$ fractionally covers $K_j$ using only the 
links in $L' = V[K_j]\times V[K_j] \cup \cov{e'}$, while 
$$
\sum_{\ell \in L'} c_\ell z_\ell \leq \frac{10M}{\epsilon^2} + \frac{2M}{\epsilon} \leq \frac{12M}{\epsilon^2},
$$
where we used $\sum_{\ell \in L, \,\, \ell \in V[K_j]\times V[K_j]} c_\ell z_\ell \leq \frac{10M}{\epsilon^2}$
and $\sum_{\ell\in \cov{e'}} c_\ell z_\ell \leq \frac{2M}{\epsilon}$.
This concludes the proof of the lemma.

\section{Better Rounding for TAP (Proof of Theorem~\ref{thm:tap})}\label{apx:tap}

To prove Theorem~\ref{thm:tap} we only need to prove the following 
version of Lemma~\ref{lem:cross-rounding}. Recall that $(T,z)$
is a pair in the instance decomposition, $T$ is rooted at $r\in V[T]$, which is
a $\beta$-center of $T$, and that $z^{in}$ and $z^{cr}$ are the parts of $z$ 
corresponding to in-links and cross-links, respectively.

\begin{lemma}\label{lem:cross-rounding-tap}
Given $T$ and $z$, there is a polynomial time algorithm that produces a set 
of links $S\subseteq L$ that covers $E[T]$ with cost at most 
$$
|S| \leq 2 z^{in}(L) + \frac{3}{2} z^{cr}(L).
$$
\end{lemma}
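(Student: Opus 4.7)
The plan is to mirror the proof of Lemma~\ref{lem:cross-rounding} while specializing each of its two rounding steps to the unit-cost setting of TAP. This specialization decouples the in-link and cross-link factors from a common trade-off parameter $\lambda$, and replaces the continuum $(2\lambda,\tfrac{4\lambda}{3(\lambda-1)})$ of WTAP coefficients by the single tight choice $(2,\tfrac{3}{2})$.

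First, I would apply Proposition~\ref{prop:simple_rounding} directly to $z^{in}$, effectively taking $\lambda=1$ in the WTAP argument. Since $z^{in}$ is already a feasible fractional TAP solution for the subinstance consisting of those edges $e$ with $z^{in}(\cov{e}) \geq 1$, this produces a link set $S^1 \subseteq L$ covering exactly these edges with $|S^1| \leq 2\,z^{in}(L)$, matching the target in-link coefficient without any scaling blow-up.

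Second, I would contract the edges covered by $S^1$ to obtain a residual tree $T'$ in which every remaining edge satisfies $z^{cr}(\cov{e}) > 0$. The cross-link subinstance on $T'$ is star-shaped with hub $r$, so via the reduction from Section~\ref{sec:prelims} it becomes a minimum edge-cover problem on an auxiliary unit-cost graph $H$, which is polynomial-time solvable. I would aim to establish $|C^\ast(H)| \leq \tfrac{3}{2}\,z^{cr}(L)$ using the Gallai identity $|C^\ast(H)| = |V(H)| - \mu(H)$ together with the half-integrality of the fractional matching polytope ($\nu^\ast(H) \leq \tfrac{3}{2}\mu(H)$), which yields a matching-based $\tfrac{3}{2}$-type bound on the minimum unit-cost edge cover relative to any fractional cover on $H$.

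The main obstacle is the second step, specifically deriving $|C^\ast(H)| \leq \tfrac{3}{2}\,z^{cr}(L)$ without assuming that $z^{cr}$ is a feasible fractional edge cover of $H$. In general, a leaf $v$ of $T'$ can have $z^{cr}(\cov{e_v}) < 1$, and then $z^{cr}$ does not directly upper bound any LP value on $H$. My plan is to argue that every deficiency of $z^{cr}$ at a leaf $v$ is matched by in-link mass $z^{in}(\cov{e_v}) \geq 1 - z^{cr}(\cov{e_v})$, which is already integer-rounded into the set $S^1$. This allows me to charge the cover of each deficient leaf $v$ in $H$ to links in $S^1$ rather than to $z^{cr}(L)$, transferring the slackness to the in-link side, whose budget already allows factor $2$. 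Combined with the $\beta$-simple structure of $(T,z)$, which bounds the number of leaves per subtree $R^j$ and hence the combinatorial size of $H$, this should close the argument and yield the desired inequality.
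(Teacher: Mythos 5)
Your proposal takes a genuinely different route from the paper's, but it has a concrete gap in the cross-link half that your own write-up flags as the ``main obstacle'' and then does not actually close. The paper's proof is combinatorial: it first replaces each in-link by its two up-link shadows (so the support consists only of up-links and cross-links, with total fractional value at most $2z^{in}(L)+z^{cr}(L)$), then runs a three-step greedy that (I) contracts up-links covering leaf edges that see no cross-link, (II) contracts leaf-to-leaf cross-links, and (III) finishes with a star-shaped pick. Steps I and III lose nothing against the fractional value; the whole loss is concentrated in step II and is bounded by $\tfrac{1}{2}z^{cr}(L)$ via a local accounting argument: each leaf-to-leaf contraction forces at least $2(1-y'_\ell)$ units of cross-link mass to become up-link mass, never to be charged again. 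Your plan instead tries to treat the two link classes independently and use the edge-cover integrality gap.

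The specific flaw is in the charging step. After you contract the set $E_1 = \{e : z^{in}(\cov{e}) \ge 1\}$ covered by $S^1$, the leaves $v$ of $T'$ are exactly those with $z^{in}(\cov{e_v}) < 1$, hence with $z^{cr}(\cov{e_v}) > 0$ but possibly arbitrarily close to $0$. You assert that each such deficiency ``is matched by in-link mass $z^{in}(\cov{e_v}) \ge 1 - z^{cr}(\cov{e_v})$, which is already integer-rounded into the set $S^1$.'' But this in-link mass was precisely \emph{not} rounded into $S^1$: Proposition~\ref{prop:simple_rounding} was invoked on the subinstance of edges with $z^{in}(\cov{e}) \ge 1$, and $e_v$ is in $T'$ exactly because it failed that threshold. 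Its in-link mass contributes to the budget $2z^{in}(L)$, but there is no cover of $e_v$ to charge to. So you are left needing $|C^\ast(H)| \le \tfrac{3}{2} z^{cr}(L)$ while $z^{cr}$ is not a feasible fractional edge cover of $H$, and no scaling of $z^{cr}$ is bounded (a leaf with $z^{cr}(\cov{e_v})$ tiny forces an unbounded blow-up). The Gallai/half-integrality route would give a $\tfrac{4}{3}$-type bound only against a feasible fractional cover, which you do not have. There may be a way to salvage the idea by explicitly tracking the slack $2z^{in}(L) - |S^1|$ and showing it absorbs the deficiency, but as written the proposal does not establish the inequality, and the obstruction is exactly the one the paper's interleaved greedy (step II with its $\tfrac{1}{2}z^{cr}(L)$ loss accounting) is designed to avoid. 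Also, note the $\beta$-simple structure you invoke plays no role in the paper's proof of this lemma; it is needed in Lemma~\ref{lem:bundle-rounding}, not here.
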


\begin{proof}
 As in the proof of Lemma~\ref{lem:cross-rounding}, we start by replacing
each in-link in the support of $z$ with its two shadows. This results in 
a fractional solution $y \in \mathbb{R}_{\geq 0}^L$ with only cross-links
and in-links that are also up-links in the support. We also have 
$$
y(L) \leq 2z^{in}(L) + z^{cr}(L),
$$
as before. 
The rounding algorithm proceeds as follows. The current tree and the current 
fractional solutions are denoted by $T'$ and $y'$, respectively.

\begin{enumerate}[I]
 \item If some leaf edge of the current tree is covered only by up-links: Include
the up-link $\ell$ covering this leaf edge that covers the most edges, i.e.\
for which $P^{T'}_\ell$ is the largest. 
Contract the link $\ell$.
Go to $\mathrm{I}$.
 \item Else, if there exists a link $\ell$ connecting two leaves $u,v \in V[T']$,
choose an arbitrary such link, include it in the solution, and contract all covered 
edges. Go to $\mathrm{I}$.
 \item Else, choose \emph{for every} leaf of $T'$, one cross-link covering it, and
include it in the solution. Return the obtained solution.
\end{enumerate}

We assume that after each contraction operation in the algorithm, all links
that become self-loops are removed from the support of $y'$.
We claim that the latter procedure always terminates with a feasible solution $S$
with at most 
$$
y^{in}(L) + \frac{3}{2} y^{cr}(L) = 2z^{in}(L) + \frac{3}{2} z^{cr}(L)
$$
links. We make a few observations. First, notice that links included in 
step $\mathrm{I}$ of the solution do not incur loss in terms of the fractional
solution. Indeed, in these iterations a single link is added to the solution,
while the total fractional value of the current solution also drops by one unit,
since the fractional solution is feasible, and hence the covered leaf edge has
at least one unit of fractional links incident to it. 

Next, we claim that also in step $\mathrm{III}$ no loss is incurred with respect
to the fractional solution. Indeed, step $\mathrm{III}$ can only be reached 
if each leaf edge is covered only by fractional up-links and cross-link 
that connect the corresponding leaf with an internal node in $T'$. Furthermore,
since step $\mathrm{I}$ did not materialize, each leaf must have at least one 
incident such cross-link that is in the support of $y'$. 
It follows that, on the one hand, no link in the support
of $y'$ covers more than one leaf edge, and on the other hand, it is possible
to choose one cross-link incident to each leaf edge. Let $q$ denote the number of leaf edges. 
The first property implies that $y'(L) \geq q$, as all leaf edges are 
fractionally covered by $y'$.
The second property implies that there is a set of $q$ cross-links,
that cover all leaf edges in $T'$. Since the instance restricted to cross links 
is star-shaped (with the root $r$ as its hub), it suffices to choose these $q$ links to cover all of $T'$.
This also proves that the algorithm returns a feasible solution.

It remain to analyze the loss incurred by step $\mathrm{II}$. 
Since steps $\mathrm{I}$ and $\mathrm{III}$ incur no loss in terms of the
fractional cost, the size of the set $S$ returned by the algorithm is
at most 
$$
y(L) + \sum_{\ell\in \bar L} (1-y_\ell) = y(L) + \sum_{\ell\in \bar L} (1-z_\ell), 
$$ 
where $\bar L$ is the set of cross-links contracted in 
step $\mathrm{II}$ of the algorithm. Denote
$\Delta = \sum_{\ell\in \bar L} (1-z_\ell)$. We show that 
$\Delta \leq \frac{1}{2}z^{cr}(L)$. This clearly suffices to prove
the lemma.

Let $\ell = uv \in \bar L$ be a link contracted in step $\mathrm{II}$, 
where $u$ and $v$ are leaves of $T'$, and let $e_u$ and $e_v$ be 
their respective leaf edges. Let $y'$ be the fractional solution in the beginning
of that iteration.
Let $L_u$ and $L_v$ be the links different from $\ell$ in the support of $y'$ that cover $e_u$ and
$e_v$, respectively. From feasibility of $y'$ it follows that
$$
\sum_{\ell' \in L_p} y'_{\ell'} \geq 1-y'_\ell
$$
holds for $p = u,v$. Furthermore, since $L_u \cap L_v = \emptyset$, it holds that  
$\sum_{\ell' \in L_u\cup L_v} y'_{\ell'} \geq 2(1-y'_\ell).$ Define 
$R = L_u\cup L_v$ and $\theta = 1-y'_\ell$. We can assume without loss
of generality that $R$ contains no in-links. Indeed, every in-link
in $R$ is a shadow of $\ell$, and hence before contracting $\ell$, 
we can shift weight from any such shadow to $\ell$, without increasing
the cost of the solution, while maintaining feasibility.

Now, notice that by contracting $\ell$ the links $R$ become up-links, 
and are hence never used as contracted links in step $\mathrm{II}$ 
in later iterations. 
Let $\bar y \in \mathbb{R}_{\geq 0}^L$ be the fractional solution in 
the end of the current iteration. From the latter considerations we have 
that 
$$
\bar y^{cr}(L) \leq {y'}^{cr}(L) - \sum_{\ell'\in R} y'_{\ell'} \leq {y'}^{cr}(L) - 2\theta.
$$
Now, since throughout the algorithm we never increase the 
fraction on any cross-links that we do not immediately contract, it follows
immediately from the previous inequality that $2\Delta \leq y^{cr}(L) = z^{cr}(L)$,
implying $\Delta \leq \frac{1}{2}z^{cr}(L)$ and proving the lemma.

\end{proof}

We can now prove Theorem~\ref{thm:tap} by combining Lemmas~\ref{lem:bundle-rounding}
and~\ref{lem:cross-rounding-tap}.

\begin{proof}[Proof of Theorem~\ref{thm:tap}]
 The proof is identical to that of Theorem~\ref{thm:wtap}, except that we use
Lemma~\ref{lem:cross-rounding-tap} instead of Lemma~\ref{lem:cross-rounding}, and
we use a different threshold based on the ratio $\frac{z^{cr}(L)}{z(L)}$ (instead of
$\alpha^*$). Concretely, we use Lemma~\ref{lem:cross-rounding-tap} to round the
pair $(T,z)$ whenever $\frac{z^{cr}(L)}{z(L)}\geq \frac{2}{3}$, and otherwise we use
Lemma~\ref{lem:bundle-rounding}. The obtained approximation guarantee is not 
worse than $\frac{5}{3} +\epsilon$ in both cases.
\end{proof}

\end{document}